\renewcommand{\cite}[1]{\citep{#1}}
\theoremstyle{plain}
\theoremstyle{definition}
\newtheorem{theorem}{Theorem}[section]
\newtheorem{lemma}[theorem]{Lemma}
\newtheorem{assumption}[theorem]{Assumption}
\newtheorem{proposition}[theorem]{Proposition}
\newtheorem{definition}[theorem]{Definition}
\newtheorem{exmp}{Example}[section]
\theoremstyle{remark}
\title{Optimal Strategies in Ranked Choice Voting}
\author{Sanyukta Deshpande\thanks{University of Illinois at Urbana-Champaign} \and Nikhil Garg\thanks{Cornell Tech} \and Sheldon H. Jacobson\footnotemark[1]}
\date{}
\begin{document}
\maketitle

\begin{abstract}
Ranked Choice Voting (RCV) and Single Transferable Voting (STV) are widely valued; but are complex to understand due to intricate per-round vote transfers. Questions like determining how far a candidate is from winning or identifying effective election strategies are computationally challenging as minor changes in voter rankings can lead to significant ripple effects-- for example, lending support to a losing candidate can prevent their votes from transferring to a more competitive opponent. We study optimal strategies — persuading voters to change their ballots or adding new voters — both algorithmically and theoretically. Algorithmically, we develop efficient methods to reduce election instances while maintaining optimization accuracy, effectively circumventing the computational complexity barrier. Theoretically, we analyze the effectiveness of strategies under both perfect and imperfect polling information.  Our algorithmic approach applies to the ranked-choice polling data on the US 2024 Republican Primary, finding, for example, that several candidates would have been optimally served by boosting another candidate instead of themselves.
\end{abstract}

\section{Introduction} \label{sec: intro}

Ranked Choice Voting (RCV), in which the voters submit a ranking over candidates,  is promised to increase representation proportionality, curtail gerrymandering, reduce the need for strategic voting, and increase campaign civility \cite{FairVoteRCV}. For such reasons, RCV is used in several countries including Australia, Ireland, and New Zealand. In the United States, about 50 jurisdictions use it as of Nov 2023, reaching approximately 13 million voters \cite{FairVoteRCVUSE}, with more planned. Several jurisdictions use RCV to elect multiple winners (such as for city councils), using Single Transferable Vote, which has further been supported at a national level, e.g., to elect representatives in multi-member districts in a proposed bill, `Fair Representation Act, H.R. 4000, 2024' \cite{fra}.

RCV and STV operate as follows, over multiple ``rounds'' calculated automatically given voters' ranked lists over the candidates. In each round, candidates surpassing a certain quota (e.g., the Droop quota) secure victory, with surplus votes (votes beyond the quota) transferred to subsequent choices (through some transfer rule). If no candidate meets the quota, the candidate with the fewest votes is eliminated, and all of their votes are transferred. This process continues until either all available seats are filled or the number of remaining candidates matches the available seats. (For example, in single-winner RCV, i.e., Instant Runoff Voting (IRV), candidates are eliminated in each round until a candidate surpasses $50\%$.) %

The transfer of votes as candidates are eliminated is cited informally as the rationale for both increased campaign civility (candidates want to be the second preference of voters who prefer other candidates) and the reduced need for strategic voting (automatic transfer of votes is communicated as meaning that voters do not have to worry as much about supporting a ``spoiler'' who prevents a win of their second-most preferred candidate). For example, candidates in RCV and STV elections often endorse one another, asking their voters to vote in ``blocs.''
\begin{quote}
    \setlength{\leftskip}{-2em}\setlength{\rightskip}{-1em}- \textit{In Tight Harlem Race, Pair of Candidates Say ‘Rank Us’ to Push Out Third. } (The City, \citet{Honan2023Harlem})
\end{quote}
\begin{quote}
    \setlength{\leftskip}{-2em}\setlength{\rightskip}{-1em}- \textit{Five candidates running for a hotly contested City Council seat in Upper Manhattan cross-endorsed each other on Thursday, a rare move that is intended to take advantage of the city's new ranked-choice voting system.} (Patch, \citet{Garber2024Harlem})
\end{quote}

\noindent Are these rationales ``true,'' both theoretically and empirically? Theoretically, of course, RCV is not strategy-proof---e.g., candidates can serve as ``spoilers'' (their presence can lead to the elimination of a candidate who would win without them) and even monotonic improvement in a candidate's position in voters' rankings, as would occur with cross-endorsements, may worsen a candidate's overall ranking; \citet{graham2023examination} find that such concerns are empirically rare, but not non-existent (Also see \citet{graham2022mathematical, kilgour2020prevalence}).

Given this ambiguity, how should a candidate proceed? Should they pursue cross-endorsement coalitions, or seek to support another candidate who can serve as a spoiler to their chief competitor? More importantly, how far is a candidate from winning? How robust is the current most likely outcome to small changes in voter turnout or persuasion campaigns? Which candidates are ``irrelevant'' in the election, in the sense that no reasonable amount of additional votes (of arbitrary ranked lists) would help them win or even cause them to serve as a spoiler? 

Such questions are relatively easy to answer given polling data in single-round plurality voting, where a candidate simply needs more votes than other candidates. With RCV and STV, however, the fact that votes transfer across multiple rounds induces substantial empirical and computational complexity -- there is an exponentially large number of ways (in the sense of the round-by-round outcomes) a final ranking may be reached. This difficulty may limit the clarity of RCV and hinders adoption \cite{RCVNightmare2024, donovan2019self}. \citet{clark2020rank} finds in Maine that the negative effects on voter confidence, satisfaction, and ease of use outweighed the potential positives, with no improvement in campaign civility.

As an example, \Cref{tab: sf_election} shows the distribution of ballots in the San Francisco District 7 2020 RCV election after all but the final three candidates were eliminated. Engardio (who ultimately placed second despite having the most first-rank votes) would need 2192 additional first-rank votes to win; however, Engardio would also win if, instead, Nguyen got 798 more votes. In hindsight, it would have been more ``efficient'' for Engardio to help increase Nguyen's support, instead of their own.\footnote{We provide two more examples of how even small-scale strategies can affect elections in Appendix \ref{app: RCV_examples}. In Example \ref{ex: campaign}, a candidate wins by adding votes to another candidate who eventually loses. In Example \ref{ex: campaign_flip}, the entire social choice order is flipped as one voter updates their vote.} Ex-ante, however, it may have been risky for Engardio to try to increase Nguyen's support: help the ``spoiler'' too much, and they may instead win the election.

\begin{table}[tb]
\centering
\begin{tabular}{l|c|c|c|c|c|c|c|c|c}
\textbf{Num. Voters:} & 5237 & 3316 & 5566 & 4050 & 5708 & 1894 & 2251 & 6909 & 1695 \\ \hline
\textbf{Rank 1}  & [E]ngardio & E & E & M & M & M & N & N & N \\ \hline
\textbf{Rank 2}  & [M]elgar & N & - & E & N & - & E & M & - \\ \hline
\textbf{Rank 3}  & [N]guyen & M & - & N & E & - & M & E & - 
\end{tabular}
\caption{The 2020 District 7 Board of Supervisors election in San Francisco, CA, \cite{graham2023examination}.  Counting the first choice votes, we have Engardio: 14119 votes (the sum of 5237, 3316 and 5566), Melgar: 11652, and Nguyen: 10855. Thus, Nguyen is eliminated next. Then, Melgar wins with 18561 to Engardio's 16370. If Melgar is eliminated first instead of Nguyen, post transfers Engardio is at 18169 and Nguyen is at 16563. This implies that Engardio needs 2192 votes to win, but the same is achieved if Nguyen gets 798 more votes. %
Hence, supporting a rival candidate is a $63\%$ more efficient manner for Engardio to win.
}  \label{tab: sf_election}
\end{table}

We study these questions, theoretically, algorithmically, and empirically -- given knowledge of the state of the race, how can a candidate or coalition optimally reach their desired result?
At the base of our analysis is a characterization of RCV and STV in terms of the social choice order (ranking over candidates, what we call ``orders'') they produce, \textit{and} the round-by-round results (array of result outcomes, i.e., eliminations or wins for each round, what we call ``sequences'') that produced that result. Using this framework, we study optimal strategies -- \textit{persuading} voters to change their ballots or \textit{adding} new voters -- both theoretically and algorithmically. (With optimality in the sense of the least number of changes required.) We do so assuming either perfect (when the status quo of the race is perfectly known) or imperfect polling information. Our contributions are as follows.

\textbf{Algorithmic framework. } 
We first establish that an optimal vote addition strategy to reach a \textit{given} structure (final order \textit{and} round-by-round outcome sequence) given status quo vote data can be calculated in polynomial time. Unfortunately, with $n$ candidates, a given candidate is a top-$k$ winner in $k \times (n-1)!$ different orders, and each order naively has $2^{n-1}$ compatible outcome sequences -- a naive search would require analyzing $k \times (n-1)! \times 2^{n-1}$ structures, to find an optimal winning strategy. 

We then show how to \textit{circumvent} this computational hardness in practice by \textit{reducing the election instance}, given the vote data, \textit{without affecting the optimality of the calculated strategies} and in polynomial time. (a) We algorithmically reduce the candidate space by removing candidates who will remain irrelevant even if $B$ new votes of arbitrary ranked lists are added. (b) We further reduce the compatible per-round outcome sequences for a given order, both in a data-independent and dependent manner. While these reductions do not reduce the worst-case computational complexity of calculating optimal strategies for arbitrary vote data (and in fact, cannot do so, given known hardness results \cite{bartholdi1991single}), they may practically do so in many settings.

\textbf{Empirical case study.}\footnote{Our code and data are available at: \url{https://github.com/sylviasolo/Optimal_Strategies_in_RCV}} We illustrate our framework on a real-world dataset: ranked choice poll data from the 2024 United States Republican Presidential Primary, conducted in September 2023 by FairVote and WPA Intelligence \cite{OtisLaverty2023}. Using our algorithmic framework, we show that for reasonable strategic additions, we can remove at least 7 candidates (out of 13), without affecting resulting optimality. Alongside the reduction in the number of per-round results to search through, the runtime to find optimal strategies is reduced to 21 seconds (on a modern laptop) compared to an estimated 1186 days without such reductions. This further enables \textit{bootstrapping} the polling data, to evaluate strategy robustness under imperfect information. 

Our analysis provides insight on the Republican Primary and on strategies in RCV voting generally. For example, we find that while Haley was robustly in second place in the RCV outcome, she would need more new first-place votes to overtake Trump than would DeSantis or Ramaswamy. Furthermore, if the polling data \textit{exactly} represented the state of the race, it would be more efficient for Ramaswamy or DeSantis to overtake Haley by adding votes to Christie to serve as a spoiler, instead of just adding first-place votes for themselves. However, this strategy would be risky in the presence of uncertainty, as it may instead lead to another candidate winning. 

\textbf{Theoretical analysis.}  
Given the insights from the case study, we further characterize optimal strategies (adding or changing votes), under perfect or imperfect information. Under perfect information, we first present a result: coalitions (including a singleton) cannot mutually benefit if candidates cannot influence a voter's preferences over candidates that the voter prefers over the strategizing candidates. We then expand the candidate's action space, so that they can add (a limited number of) arbitrary ranked list votes to the election. We show that even in this setting, in standard cases the most efficient strategies are to enable spoilers -- to add ballots that highly rank other candidates that will eventually lose and transfer votes to the strategizing candidates. Finally, we study such strategies under imperfect information. In this case, forming coalitions can be ex ante beneficial to all candidates; thus, a rationale for coalition building comes from imperfect information. However, for a given candidate, the only vote addition strategy guaranteed to be beneficial ex-post is to add first-place votes to oneself.

~\\\noindent Putting things together, our results provide evidence for both sides of the debate on undesirable strategic behavior by candidates in ranked-choice voting. While under perfect information it is often optimal (both theoretically and empirically) to support spoilers, such strategies are risky and may backfire under uncertainty in practice. Our algorithmic framework further provides a tractable approach to analyzing RCV, informing media coverage and campaign strategies.

\subsection{Related literature}
It is well known that RCV and STV are not strategy-proof, and that there exists no social welfare function that is strategy-proof and satisfies other reasonable axioms \cite{arrow2012social,gibbard1973manipulation, satterthwaite1975strategy}. Thus, strategic behavior in ranked-choice voting has been extensively studied. %

Theoretically, it has been explored from various angles, including computational complexity, combinatorial voting, and axiomatic approaches (\citet{ brandt2016handbook} provide an extensive overview).  For various voting protocols, strategic behavior is modeled from voters' perspectives, as well as under control, i.e., adding or removing votes by an election organizer, and bribery, i.e., changing existing votes \cite{bartholdi1992hard, meir2008complexity, faliszewski2009hard,  xia2010scheduling,faliszewski2010using}.  For STV, it is known that finding strategic voting manipulations for an unbounded set of candidates, constructive (making someone a winner) or destructive (making someone a loser) coalition-weighted manipulations for more than 2 candidates, or computing margin of victory (MOV) are all NP-complete \cite{bartholdi1991single, conitzer2007elections, xia2012computing}. 
The concept of manipulation has been broadened by \citet{dutta2001strategic} to encompass strategic candidacy, considering candidates' motivations for participating in elections \cite{brill2015strategic}. %
While this literature largely focuses on hardness results, we develop approaches to reduce the election instance while maintaining optimality, to attempt to circumvent such results in practice. 
Although our setup may be extended to various manipulation rules, we concentrate on strategic voter additions as our primary algorithmic focus.

Empirical evidence suggests the feasibility of manipulating STV elections \cite{walsh2010empirical}. The existence of undesired traits such as failure to elect Condorcet winner, upward and downward monotonicity, truncation and spoiler effects, etc, on real data has been empirically studied \cite{graham2023examination}. Studies have investigated ballot length, highlighting the impact of ballot exhaustion \cite{dickerson2023empirical, baumeister2012campaigns, ayadi2019single, tomlinson2023ballot}. The spoiler effect, which describes the influence of less-favored candidates on the outcome, is shown to be significant in RCV \cite{mccune2023ranked}. 

Most related to our work are those that algorithmically study possible manipulations, under various models. The concept of the `margin of victory' (MOV), the smallest manipulation possible, was discussed in relation to IRV \cite{magrino2011computing}. Expanding upon this, \citet{blom2019toward, blom2020did} developed algorithms to calculate the lower and upper bounds of MOV in multi-winner STV, noting that the developments are limited in scope owing to complexity. %
In the case of IRV, \citet{jelvani2022identifying} developed algorithms to find the possible set of winners with outstanding ballots. Their approach involves brute force on IRV, without claiming optimality, and discarding candidates that have less than 5 percent votes, as it is computationally difficult to analyze large candidate sets. Theoretically, such removal may have unseen ripple effects. %

Our work extends this body of literature by presenting a framework that finds \textit{optimally} efficient strategies for candidates. While the original optimization problem is prohibitive with many candidates, we demonstrate how to exploit its structure to efficiently characterize the influence of voter data updates, leading to a practical advance without sacrificing optimality. Overall, our work has improved over the previous contributions in three dimensions:  generalization of the system (IRV to STV), reduced complexity of solving, and the optimality of solutions.

\section{Model} \label{sec: model}
We begin by defining ranked-choice voting (RCV) system with single transferable vote (STV) and setting up combinatorial foundations of our model. We first formulate STV in a manner that leads to a well-defined social welfare function $\mathcal{F}_{STV}$ in which, in each round, exactly one candidate is declared a round-loser or a round (and election) winner. This formulation enables a characterization of outcomes in terms of ``orders'' (the final ranking, or ``social choice order,'' over candidates) and ``sequences'' (each round's outcome in terms of the round-loser or winner). Strategies in our theoretical and algorithmic framework operate on these structures.

\subsection{Notation and Framework}\label{sec: setup}

An \textit{election instance} consists of a set of $n$ candidates $\mathcal{C}$ and a set of voter ballots $\mathcal{B}$ that rank the candidates. Each voter ballot in $\mathcal{B}$ is an order, partially or fully ranking candidates in $\mathcal{C}$.  Suppose that we are electing $k$ winners, using Single Transferable Voting, defined in detail below. In each round, a candidate is either declared a winner (if they have a number of votes at least as large as the Droop quota $Q$, defined as $1+\frac{|\mathcal{B}|}{(k+1)}$), or the candidate with the fewest votes is eliminated.

Let $I = \{a, b,  \dots, j\} \subset  \mathcal{C}$ denote a set of candidates, with $i = |I|$. We use variable $V^i_{a, b, \dots, j}$ to denote the total number of voters who rank the candidates in $I$ at the top of their ballot, in order $a>b>\dots >j$. The number $V^i_{a, b,\dots, j}$ is aggregated, i.e., it includes voters who may rank any other candidates after $j$ (in any order), although they strictly follow the order for the candidates in $I$. The superscript $i$ is useful in bookkeeping the number of candidates being ranked. %
The variables $\{V^i_{a, b,\dots, j}\}_{I \subset \mathcal{C}}$, together capture all information about the ballots in $\mathcal{B}$. %

\subsubsection*{Defining STV} 

We now describe the STV mechanism formally in our notation, as a social choice function that maps variables $\{V^i_{a, b,\dots, j}\}_{I \subset \mathcal{C}}$ into a unique outcome ordering over the candidates. Let $F$ be the set of orders, i.e., corresponding to the $n!$ possible outcome orders over the $n$ candidates. An outcome order $f \in F$ is constructed from the ballots as follows. At a high level, the mechanism places a round-winning candidate at the top-most available position, if a win is possible. If not, it places the round-losing candidate at the bottom-most available position, thereby removing one candidate from the contest in each round. This process terminates in $n$ rounds, producing the order $f$. In the election, the first $k$ candidates in the order $f$ are deemed as election winners.\footnote{Our definition is equivalent to other standard definitions of STV protocol, such as those that allow multiple candidates to be elected in a single round or eliminate candidates until $k$ candidates surpass the Droop quota. Such definitions may not produce a strict
 a social choice order, as winners chosen simultaneously are tied in the order. Our algorithmic approach requires distinguishing between such tied candidates, as we consider vote changes that move from one order to another. In \Cref{app: protocolproof}, we show that the mechanism is well-defined, as it maps each election scenario uniquely to its corresponding order, yielding the same set of winners as such a standard definition. 
}

\begin{enumerate}
    \item Let $f$ denote the current order, initalized as an empty array; and $C \subset \mathcal{C}$ as the set of active remaining candidates, initialized as $C = \mathcal{C}$.

    \item Define $\{\Tilde{V}^i_{a,b, \dots, j}\}_{I \subset  C}$ by processing transfers from out-of-contest candidates to in-contest candidates, if any.
    Order all candidates in $c\in C$ according to the number of votes $\Tilde{V}_c^1$. Check if the top candidate $T$ gets more than the Droop quota, i.e. if $\Tilde{V}^1_T\geq Q$. In the case of ties, choose the candidate according to a pre-determined tie-breaking order.
    \item  If yes, $T$ is the next winner, is added to the top available position in order $f$, and is removed from the set of active candidates $C$. Transfer their \textit{surplus votes} (number of votes minus the number of votes needed to win Droop quota, i.e., $\Tilde{V}_T^1$ - $Q$) to their subsequent choice candidates. 
    
    We use the fractional transfer rules: each remaining candidate $R \in C$ gets surplus votes weighted by its share of votes $\Tilde{V}_{T,R}^2$ in the total votes contributing to the win, i.e., a surplus of $\frac{\Tilde{V}_{T,R}^2}{\Tilde{V}_{T}^1}(\Tilde{V}_T ^1- Q)$ is transferred to $R$. %
    
    This is the Weighted Inclusive Gregory method \cite{PRF2024WIGM} for STV, although our results apply to any deterministic STV surplus transfer method.
    \item  Otherwise, If there does not exist such a winning candidate, place the contest bottom candidate $B$ (the next loser) to the last available position in $f$. Transfer all of its votes, including those received from transfers, to their subsequent choice candidates, if any exist.  
    \item %
    Restart with Step 2 until $C$ becomes empty and $f$ becomes full. The resulting order in $f$ is the aggregated social choice order.
\end{enumerate}
Denote the above function as 
$\mathcal{F}_{STV}: \mathcal{B} \mapsto F$, i.e., $%
   \mathcal{F}_{STV} (\{V^i_{a, b,\dots, j}\}_{I \subset \mathcal{C}}) = f.$

\subsection{Structural properties of the social choice function}
We now develop structural properties of the social choice function $\mathcal{F}_{STV}$, that will serve as the foundation of our analysis and algorithmic framework.

First, note that STV outcomes (both per-round outcomes of whether a candidate is elected or eliminated, and the resulting overall order) are robust to small perturbations (unless there are ties or a candidate receives votes equal to the Droop quota $Q$ in any round); for each ballot set, there is an open set in the ballot space, all leading to the same per-round outcome and overall order -- i.e., preserving the pattern of wins ($W$) and eliminations ($L$) for each round.

This observation allows us to partition the ballot space into regions, referred to as \textit{orders}, each defining a unique social choice order $f \in F$ obtained via $\mathcal{F}_{STV}$. There are $n!$ orders in total, for $n$ candidates. Further, each order may correspond to enumerable \textit{sequences}, each defined as an array of round-specific outcomes, i.e., describing the win (W) or elimination (L) outcome of each round. Together, this defines a \textit{structure} that has two components- an order and a sequence- which jointly capture the STV election proceeding. For example, an `order' may be the collective ranking $(A>B>C>D)$ of 4 candidates, with an associated `sequence' like $(W, L, W, W)$. This formulates the structure $[(A>B>C>D), (W, L, W, W)]$, in bijection with the election proceeding [$A$ wins, $D$ loses, $B$ wins, $C$ wins] in 4 rounds. There are $2^{n-1}$ sequences that an order may correspond to, as each round except the final one can yield two outcomes.  See Example \ref{ex: structures}  for an illustration of structures in an election with 4 candidates. 

\begin{exmp} \label{ex: structures}
Consider $\mathcal{C} = \{A, B,C,D\}$ as the set of contesting candidates. We show the social choice order, $A>B>C>D$, with $8$ possible sequences leading to 8 different structures as given in Table \ref{tab: sub_st_example}.
\begin{table}[ht]
\centering
\small
\resizebox{\textwidth}{!}{%
\begin{tabular}{|c|c|c|c|c|c|c|c|c|}
\hline
Round & (W,W,W,W) & (W,W,L,W) & (W,L,L,W) & (W,L,W,W) & (L,W,W,W) & (L,W,L,W) & (L,L,W,W) & (L,L,L,W) \\ \hline
1     & A wins    & A wins    & A wins    & A wins    & D loses   & D loses   & D loses   & D loses   \\ \hline
2     & B wins    & B wins    & D loses   & D loses   & A wins    & A wins    & C loses   & C loses   \\ \hline
3     & C wins    & D loses   & C loses   & B wins    & C wins    & C loses   & A wins    & B loses   \\ \hline
4     & D wins    & C wins    & B wins    & C wins    & B wins    & B wins    & B wins    & A wins    \\ \hline
\end{tabular}%
}
\caption{Structures associated with the order $A>B>C>D$.}
\label{tab: sub_st_example}
\end{table}
\normalsize
\end{exmp}

\Cref{thm: combi_data} formalizes the above discussion, defining enumerable non-linear constraints that partition the ballot space into orders and sequences. %

\begin{proposition}
 \label{thm: combi_data}
With suitable tie-breaking, each voter data $\mathcal{B}$ corresponds to a single structure (a pair of an order and a sequence).
For each structure, a set of non-linear constraints involving variables $V^i_{a, b,\dots, j}$ defines a distinct region in the voter ballot space where the election outcome remains constant. There are $n!$ possible orders, i.e., the collective social choice rankings, and $2^{n-1}$ sequences, i.e. per round elimination or winner information.

\end{proposition}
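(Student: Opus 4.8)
The plan is to make precise the informal ``open set'' observation and turn it into an explicit description of the constraints. First I would fix an arbitrary ballot set $\mathcal{B}$, encoded by the variables $\{V^i_{a,b,\dots,j}\}_{I\subset\mathcal{C}}$, and trace through the STV procedure of \Cref{sec: setup} round by round. At the start of any round, the set of active candidates $C$ and the per-round outcomes so far (the partial sequence) determine, via the deterministic fractional transfer rule, the transferred tallies $\{\tilde V^i_{a,b,\dots,j}\}_{I\subset C}$ as \emph{explicit rational (in fact, piecewise-rational) functions of the original $V^i_{a,b,\dots,j}$} --- each transfer multiplies a tally by a ratio of two linear forms, so after finitely many rounds every $\tilde V$ is a rational function whose numerator and denominator are products of linear forms in the original variables. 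The key point is that the \emph{branching} in the procedure (Step 2 vs.\ Step 4, and which candidate is chosen) is governed purely by (i) comparisons $\tilde V^1_c \gtrless \tilde V^1_{c'}$ between current first-place tallies, and (ii) the threshold comparison $\tilde V^1_T \gtrless Q$, where $Q = 1 + |\mathcal B|/(k+1)$ is itself a fixed linear form in the $V$'s.

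Next I would assemble the constraint set for a given structure $[f,\sigma]$. Walking through the $n$ rounds dictated by $\sigma$ and $f$: in an elimination round I record the inequalities saying the designated loser has the strictly smallest current first-place tally among active candidates and that no active candidate reaches $Q$; in a win round I record that the designated winner has the largest tally and that this tally is $\ge Q$ (with all others' relationships as forced). Clearing denominators (which are positive products of positive tallies, so the inequality directions are preserved) turns each such comparison into a polynomial inequality in the $V^i_{a,b,\dots,j}$; collecting them over all $n$ rounds yields a finite system of non-linear (polynomial) constraints. By construction, $\mathcal{B}$ satisfies this system, and conversely any ballot data satisfying it is processed by STV through exactly the same branch sequence, hence produces exactly the structure $[f,\sigma]$ --- this is the ``distinct region'' claim. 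Distinctness across structures is immediate since the map $\mathcal{F}_{STV}$ together with the recorded per-round outcomes is a well-defined function of $\mathcal{B}$ (invoking the well-definedness established in \Cref{app: protocolproof}), so a single $\mathcal{B}$ cannot lie in two such regions; the tie-breaking rule removes the boundary ambiguity, which is what ``suitable tie-breaking'' refers to.

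For the counting statements: there are exactly $n!$ orders because an order is by definition a ranking of the $n$ candidates. For sequences, I would argue that a sequence is an array of $n$ symbols in $\{W,L\}$ subject to the constraint that the final round must be a $W$ (when one candidate remains, its tally is $|\mathcal B|\ge Q$ for $k\ge 1$, so it is declared a winner, not eliminated), and no other round is a priori constrained, giving $2^{n-1}$ sequences; I would note the sequence also records \emph{which} candidate wins/loses each round, but once the order $f$ is fixed the $W/L$ pattern determines those identities (wins fill top positions of $f$ in order, losses fill bottom positions), so the count of distinct sequences compatible with a fixed order is $2^{n-1}$, matching the statement.

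The main obstacle I anticipate is the bookkeeping around transferred tallies: after a win, surplus votes are redistributed with weights $\tilde V^2_{T,R}/\tilde V^1_T$, and after subsequent rounds one must track not just first-place tallies $\tilde V^1$ but the full vector $\{\tilde V^i\}$ restricted to the shrinking active set, since future transfers depend on lower-ranked preferences among active candidates. Making rigorous that this recursion stays within the class of ratios-of-products-of-linear-forms --- and that clearing denominators is legitimate because every denominator is a strictly positive tally on the region in question (each $\tilde V^1_c>0$ follows from the win/elimination inequalities already imposed, or the candidate would have been eliminated with zero votes, a degenerate case handled by tie-breaking) --- is the technically delicate part. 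I would isolate this as a lemma: ``after $r$ rounds of a fixed branch sequence, each $\tilde V^i_{a,\dots}$ is a rational function of the $V$'s whose denominator is a product of tallies that are positive throughout the region,'' proved by induction on $r$ using the explicit transfer formula, and then the rest of the proposition follows by the round-by-round construction sketched above.
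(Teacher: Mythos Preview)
Your proposal is correct and follows the same route as the paper, which establishes the proposition constructively via the round-by-round constraint-generation procedure of Algorithms~\ref{algo: round_update} and~\ref{algo: constraints} in Appendix~\ref{app: algo_constraints} (the worked illustration there exhibits precisely the rational-function structure and the non-linearity from surplus-transfer ratios that you isolate). One small caveat: your argument that the final round is forced to be a $W$ because the sole remaining candidate holds all $|\mathcal{B}|$ votes fails once ballots can exhaust; the paper's $2^{n-1}$ count rests instead on the observation that with a single candidate remaining the $W$/$L$ label is immaterial to the order $f$, so the last symbol is simply fixed.
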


We note that the election result can be obtained either by running the STV mechanism or by verifying constraint satisfiability: with the rules of breaking ties in place, each point in the space satisfies the constraints of exactly one structure. Generating the constraints that define the viability of any structure may be achieved algorithmically. %
(We show how to construct such constraints in  Appendix \ref{app: algo_constraints}, using Algorithm 
\ref{algo: round_update} and \ref{algo: constraints}). Our theoretical proofs and algorithmic framework use such constraints. 

Next, we center the discussion on election strategies, allowing candidates to compete to be among the top $k$ in a social choice order. Specifically, candidates or coalitions can strategize to minimize the voter manipulation required to achieve a more favorable position within the main-structure $f$. Due to the cascading nature of RCV elections, candidates may employ both cooperative and competitive strategies to their benefit. 

Our focus lies on a specific form of manipulation: strategic campaigning, which involves the addition of new (arbitrarily ranked) votes, e.g., using the Get-Out-The-Vote (GOTV) campaigns. 
Campaigning is generally recognized as a legitimate and subtle way to update voter data, contrasting with strategic voting, i.e., direct manipulation of voter preferences\footnote{GOTV effects are appropriate to study as it is noted that campaigning, a core part of election strategies, rarely changes one's vote, but affects the overall voter turnout \cite{enos2018aggregate, deshpande2023votemandering}.}. Yet, the implications of strategic campaigning may be translated to strategic voting as well: if a strategy involves the addition of $X$ votes, the effect may be achieved by altering fewer preferences of current voters. 

The fact that even a single order has an exponentially large number of sequences is what makes optimal strategies in RCV or STV especially hard to find: strategies that seek to produce a different order by changing the set of ballots can do so by inducing any of an exponentially large number of sequences that correspond to the target order. 
Each sequence induces a different transfer of votes per round and affects future round outcomes;   this information is important: the strategies starting from a sequence $(W,L,W,W)$ may include affecting the first choice votes to win, while the same for $(L,W,L,W)$ may involve helping the first losing candidate avoid elimination, to manipulate the transfer votes.  Given this complexity, we first develop an algorithmic framework to calculate optimal strategies (\Cref{sec: cut_search_space}) that is efficient in practice (\Cref{sec: casestudies}), and then theoretically derive properties of such optimal strategies, broadening manipulations to also contain cross-endorsements (\Cref{sec:theory}).

\section{Algorithmic Framework} \label{sec: cut_search_space}
We now develop a computational framework to find optimal vote addition strategies for each candidate under perfect information -- if we want a candidate to place in the top $k$ of the election, what is the optimal (requiring the minimum number of additional ranked ballots) vote addition strategy? Our algorithmic framework is as follows, given $n$ candidates and $m$ unique voter ballots.

(1) Given a budget of $B$ additional votes, we first in \Cref{thm: poly_efforts_B} develop an algorithm to optimally (if possible) reach a given structure in $O(mn)$ time. Then, binary search on the budget $B$ yields the optimal strategy to efficiently reach a given structure. 

However, with many candidates, there are a prohibitively large number of structures: there are $n!$ possible orders, of which the candidate would be in top $k$ in $k \times (n - 1)!$ orders. Each order has $2^{n-1}$ sequences. Naively, then, finding an optimal strategy requires finding the minimum vote additions over $k \times (n - 1)! \times 2^{n-1}$ structures.  

(2) Thus, we develop approaches to reduce the election size, \textit{without} affecting the optimality of the calculated strategies. (a) In \Cref{thm: remove_irrelevant_candidates}, given a budget of $B$ additional votes and status quo vote data, we give an algorithm with 
$O(mn^4)$ complexity that removes a set of irrelevant candidates who will be eliminated first regardless of how the $B$ votes are added. 
(b) In \Cref{cor: sequences}, we first show that for any set of $k$ winners there are only 
$\sum_{j=1}^k \binom{n}{j}$ feasible sequences. 
Then, we show how to reduce the number of sequences further given status quo vote data: in \Cref{thm: substtheorem}, we give an algorithm with $O(mn^2)$ complexity that reduces the number of feasible sequences that can lead to an optimal win. Reducing the search space is important: as detailed in \Cref{sec: casestudies}, our case study becomes computationally feasible only due to these reductions.

\subsection{Optimizing to reach an outcome structure}\label{sec: polytime_algo}

We first develop an approach to optimally add votes that would change the current structure (per-round elimination or winner information) into a given desired structure.\footnote{Note that any structure is feasible given enough votes.  Intuitively, a disproportionately high number of votes may always be added to any candidate, per the desired structure. This makes the original voter ballot data irrelevant in comparison, allowing us to reach any structure.} Recall that adding a vote {could} mean adding a full ballot that ranks all $n$ candidates, or adding a shortened ballot. Our approach minimizes the total number of votes added, while keeping ballot lengths minimal. %

Note that adding $B$ votes has two effects: it adds votes to given candidates in each round, and it changes the Droop quota $Q$ required to win in each given round. Thus, we require a two-step approach: (a) for each given budget $B$ (fixing the Droop quota), we calculate an optimal set of additions of \textit{up to} $B$ votes to reach the desired structure. (b) Then, we embed this algorithm within a binary search over the budget $B$, until the actual number of votes used equals the budget. 

Theorem \ref{thm: poly_efforts_B} shows that the first stage can be done in polynomial time, and proposes a smart allocation algorithm [\textsc{SmartAllocation} in \Cref{app:allocation_algo}] that outputs the optimal strategic allocation if any exists for using up to $B$ additional votes. %

\begin{restatable}{theorem}{polyefforts} \label{thm: poly_efforts_B}
    Given budget $B$, an optimal strategic addition (if it exists) to reach any specific structure (order and sequence pair) is computable in polynomial time \(O(mn)\), with $n$ candidates and $m$  unique ballots.
\end{restatable}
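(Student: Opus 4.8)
The plan is to exploit that a target structure $[f,s]$ \emph{pins down the entire schedule} of the election: reading $f$ together with the win/eliminate pattern $s$ tells us, for every round $r$, whether that round is a win or an elimination and which candidate $c_r$ exits it, hence the active set $C_r=\mathcal{C}\setminus\{c_1,\dots,c_{r-1}\}$ and the destination of every surplus or elimination transfer. Consequently, once $B$ is fixed, the Droop quota $Q=1+\frac{|\mathcal{B}|+B}{k+1}$ is fixed, and realizing $[f,s]$ is equivalent to a list of per-round inequalities on the transferred tallies $\tilde V^1_c(r)$: if round $r$ is a win, $\tilde V^1_{c_r}(r)\ge Q$ and $\tilde V^1_{c_r}(r)=\max_{c\in C_r}\tilde V^1_c(r)$ (ties broken toward $c_r$); if round $r$ is an elimination, $\tilde V^1_c(r)<Q$ for all $c\in C_r$ and $\tilde V^1_{c_r}(r)=\min_{c\in C_r}\tilde V^1_c(r)$. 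I would solve this fixed-$B$ problem in $O(mn)$ time; the outer binary search over $B$ that makes the assumed budget consistent with the number of ballots actually used is the separate second stage and is not needed for this theorem.

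For the fixed-$B$ problem I would run a single forward sweep over rounds $r=1,\dots,n$, maintaining the current (original plus already-added) ballot set and recomputing, via the fixed transfer schedule, the tallies entering round $r$. Whenever round $r$'s inequalities fail, I add the \emph{least} ballot weight that repairs them: on a win round, enough weight on $c_r$ to reach $\max\bigl(Q,\ \max_{c\ne c_r}\tilde V^1_c(r)\bigr)$; on an elimination round, enough weight on the candidates currently at or below $c_r$ to lift them strictly above $c_r$ (since no weight can ever be removed, if some active candidate already sits at or above $Q$ in an elimination round the structure is infeasible for this $B$ and we report so). The delicate point is the \emph{shape} of the added ballots: a ballot must start helping $c_r$ only at round $r$, so it is not allowed to disturb the already-fixed rounds $<r$. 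A bullet vote for $c_r$ is cheapest (it delivers a full unit with no leakage) and is permitted precisely up to the minimum, over rounds $r'<r$, of the slack $c_r$ has below $Q$ (on elimination rounds) and below the round winner (on win rounds); beyond that, the extra weight is routed through a ballot that is topped by the candidate exiting round $r-1$ and then lists $c_r$ — harmless when $r-1$ is a win round, since it only enlarges that winner's surplus — and, when $r-1$ is an elimination round, prepended with the losers of still earlier rounds up to the first preceding win round (or the start of the election). That minimal prefix length is exactly the ``keep ballot lengths minimal'' content of the statement.

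The main obstacle, and where the real work lies, is proving this greedy sweep is \emph{optimal} and internally consistent despite two cross-round couplings. First, ballots added for round $r$ propagate \emph{forward} as transfers into later rounds; this is benign because we sweep forward (every earlier addition is already reflected in the tallies when we reach round $r$) and because the dependence is monotone — adding a ballot weakly increases only the tallies of the candidates it lists, only in the rounds where they are active — so an alternative that ``pre-pays'' earlier can never beat the per-round minimum. Second, additions for round $r$ must not break rounds $<r$; the ballot-shape rule above buys this, and one checks, using that boosting a round's winner keeps it winning and that an elimination round's exiting candidate is the unique bottom candidate, that repaired rounds stay repaired (including the mild fractional-leakage accounting for weight routed through a prior winner, which only increases the downstream delivery and is resolved by a monotone fixpoint within that round). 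A round-by-round exchange argument — any feasible addition can be rewritten, round by round, into the greedy one without increasing total weight — then yields optimality. For the running time, each round performs one transfer simulation touching the $m$ distinct ballots a constant number of times, plus $O(n)$ bookkeeping over the active set and the slack history, so the sweep is $O(mn)$ overall, as claimed.
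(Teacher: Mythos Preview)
Your forward greedy sweep is the same skeleton the paper uses, and your handling of the two round types (lift the sub-$c_r$ candidates on elimination rounds; push $c_r$ to $Q$ or the max on win rounds) matches the paper's \textsc{SmartAllocation}. The complexity accounting is also right. However, there is a genuine gap in the optimality argument: you never \emph{reuse} ballots that were added in earlier rounds by extending them with further preferences, and without that step the greedy is not optimal.

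Concretely, the paper keeps a pool of ``available'' votes --- ballots previously added whose last-listed beneficiary has now exited --- and, before allocating any fresh ballots in round $r$, it \emph{appends} the current needy candidate to those existing ballots. This lets one ballot serve deficits in several different rounds for several different candidates. Your sweep, by contrast, issues fresh bullet ballots (or freshly-created routed ballots) at each round. A small instance shows the difference: suppose round~$1$ requires five extra votes for $C$ (to lift $C$ above the round-$1$ loser), $C$ is then eliminated in round~$2$, and round~$3$ requires three extra votes for $A$. The paper extends the five $[C]$ ballots to $[C,A]$, so five ballots suffice. Your procedure adds five $[C]$ ballots and then, in round~$3$, three new $[A]$ (or $[C,A]$) ballots, for a total of eight. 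Your exchange claim --- ``any feasible addition can be rewritten, round by round, into the greedy one without increasing total weight'' --- therefore fails: rewriting the five $[C,A]$ solution into your greedy output strictly increases the ballot count.

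A secondary issue is your routing rule for the ``beyond slack'' weight. When $r{-}1$ is an elimination round you prepend the ballot with earlier losers back to the first preceding win (or the start), but this gives those earlier losers \emph{new} first-preference votes and can lift them off the bottom in their own elimination rounds, contradicting your ``repaired rounds stay repaired'' claim. The paper sidesteps this entirely: it adds at most \emph{one} new $[C_{\text{last}},C_s]$ ballot (just to guarantee a nonzero transfer into round~$r$), explicitly checks $C_{\text{last}}$'s margin can absorb that one vote, and otherwise relies on reuse of already-present added ballots --- which by construction does not change any earlier tally. Incorporating the ``available votes'' pool (extend before allocating fresh) is what you are missing, and it is exactly what makes both the optimality induction and the no-backward-breakage argument go through.
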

\begin{proof}[Proof Sketch]
We discuss the proof idea here. For the full allocation algorithm [\textsc{SmartAllocation}] and details on the proof, see \cref{app:allocation_algo}.
Starting with the first round, we iteratively compute the slack of votes for candidates in each round (i.e., the number of extra votes each candidate needs to establish the required structure for that round -- either to win or to not get eliminated) and allocate votes accordingly. From the second round onward, we also keep an account of previously allocated votes which become `available' as prior beneficiary candidates win or get eliminated, leading to transfers. Before any new allocation of votes, the `available' votes are used to fulfill slacks, by adding candidates to those existing ballots. %
The algorithm terminates at the last round or when the budget is exhausted. %

The algorithm adds votes to consecutively fill the round slacks, although the election is processed on the data comprising additions made over all rounds. Thus, filling slacks of one round may affect other rounds too, potentially creating new slacks or reaching infeasibility.  In the proof, we thus argue that the algorithm ensures the following to maintain the optimality and convergence:  (a) the process of finding smart additions per round guarantees optimality of allocation over the entire election instance (which we show using an induction argument); (b) no new slacks are generated after all slacks are filled (derived from the structure framework and the allocation rule). %
\end{proof}

While the possible budget $B$ fixes the Droop Quota, it may be that the optimal ballot additions to establish each round result does not require the full $B$ votes (suppose that it instead uses $b$ votes). We thus embed the above algorithm within a  binary search to optimize the outputted strategic allocation, until convergence. Decreasing the budget $B$ may lead to a lower $b$ value if vote additions are made to win a round, as a lower Droop quota implies a lower target for winning. However, decreasing the budget may reach infeasibility if the addition of votes increased the Droop quota -- and was useful in changing the status of a winning round to an elimination round. We thus stop decreasing the value of $B$ when $b=B$ or if we hit infeasibility. In the latter case, the resulting optimal strategy needs to fill in $B - b$ empty ballots. %

\subsection{Efficiently navigating the search space} \label{sec: efficient_trim}

The above approach (including the binary search with a maximum budget of $B$), requires runtime $O(mn \log B$), \textit{for each structure} (round-by-round outcome characterization). However, for each given candidate, since there are exponentially many structures in which that candidate is among the top $k$ winners out of $n$ candidates, the overall algorithm to find optimal candidate vote addition strategies would be prohibitively expensive with many candidates. We thus present two approaches to reduce this runtime, by reducing the space of candidates and the number of structures, respectively. Note that our approaches are not \textit{heuristics}: they do not affect the optimality of the calculated strategies. 

\subsubsection{Reduction of the candidates set}
Our first approach starts with the observation that, in practice, STV election results often start with multiple eliminations of candidates who have received comparatively much fewer first-choice votes than others.\footnote{For example, we note this in data from New York City RCV elections \cite{NYCBoardOfElections2024}} 
Given a budget of $B$ (arbitrary) additional votes, we show how to reduce the set of candidates without affecting later-round dynamics. After the reduction, the corresponding votes of all such candidates are transferred to their subsequent choices (if any choices remain), without affecting resulting proceedings. This allows us to form an equivalent representation of the election instance, with fewer candidates and an updated ballot set (with fewer unique ballots). With this, we may optimize for strategies via fewer structures, significantly shortening the search space. 

\begin{restatable}{theorem}{candidateremoval} \label{thm: remove_irrelevant_candidates}
    Given a total addition of $B$ undescribed ranked-choice votes,  the election instance may be shortened by removing irrelevant candidates who will always be eliminated first regardless of which votes are added. This is achieved efficiently, with $O(mn^4)$ complexity for $m$ number of unique ballots and $n$ candidates in the original instance.
\end{restatable}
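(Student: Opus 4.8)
The plan is to characterize exactly which candidates can be guaranteed to be eliminated in the first round regardless of how the $B$ votes are distributed, and then to show that iterating this test yields a safe reduction. First I would fix attention on the first round only: a candidate $c$ is ``irrelevant'' in the sense of the theorem if, for \emph{every} way of allocating $B$ new ballots, $c$ is the contest-bottom candidate (the unique round-$1$ loser under the tie-breaking rule). The adversary trying to \emph{save} $c$ would put all $B$ votes as first-place votes for $c$, giving $c$ a best-case first-round tally of $V^1_c + B$. Simultaneously, the adversary wants to keep other candidates' tallies high; since new votes are first-place votes for $c$, they do not help anyone else, so every other candidate $c'$ retains at least its status-quo first-round count $V^1_{c'}$ (there are no transfers in round $1$). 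Hence $c$ is guaranteed to be eliminated first if $V^1_c + B < V^1_{c'}$ for all $c' \neq c$ (with the appropriate non-strict inequality and tie-break adjustment when ballots equal to the quota or exact ties are possible). This is the core comparison; checking it for all candidates is $O(mn)$ to build the tallies plus $O(n^2)$ to compare, well within budget.

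Next I would argue the \emph{iteration} is valid. Having identified a set $S$ of candidates each guaranteed to lose round $1$ under any allocation, I cannot remove them all simultaneously and naively, because the order in which they are eliminated matters for transfers. Instead I remove them one at a time in the tie-break-determined elimination order: eliminate the guaranteed first-loser $c_1$, transfer its status-quo votes (and note that any of the $B$ adversarial votes, wherever they end up, only weakly help the remaining candidates), obtaining a reduced instance on $n-1$ candidates with an updated, possibly smaller, set of unique ballots. The key claim is that in this reduced instance the \emph{next} guaranteed first-loser is exactly the candidate that $\mathcal{F}_{STV}$ would eliminate in round $2$ of the original instance, for every allocation of the remaining budget. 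I would prove this by a monotonicity/coupling argument: for any fixed allocation of $B$ votes in the original instance, running the original mechanism through round $1$ produces the same reduced instance (plus the images of the $B$ votes) that the reduction produces, so round-$2$ outcomes coincide; and the ``guaranteed'' property is preserved because we only ever compared against the adversary's \emph{best} response. Repeating, we peel off candidates until the guaranteed-first-loser test fails; Proposition~\ref{thm: combi_data} guarantees the surviving instance has a well-defined structure, and by construction every structure reachable under the $B$-vote budget in the original instance has all removed candidates occupying the bottom positions in the same relative order, so optimal strategies are unaffected.

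For the complexity bound: each iteration requires recomputing first-round tallies after transfers, which costs $O(mn)$ (each of $m$ unique ballots may need its head advanced past eliminated candidates, an $O(n)$ operation), plus $O(n^2)$ for the pairwise comparisons; but the transfer bookkeeping under the Weighted Inclusive Gregory method and the need to recheck the guarantee against the \emph{worst-case} adversary at each step — which in principle could require re-examining earlier rounds — pushes the per-iteration cost up, and with up to $n$ iterations one lands at $O(mn^4)$. I would be careful to state the test in a budget-robust way so that a single pass of comparisons at each stage suffices rather than re-solving an allocation subproblem.

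The main obstacle I anticipate is the coupling/iteration step: making rigorous the claim that ``guaranteed to be eliminated first, for all allocations'' is preserved under removal, given that the $B$ adversarial votes interact with transfers in later rounds and the adversary could, in principle, spend its budget to rescue a candidate several rounds in, thereby changing who is eliminated second. The resolution is that new votes are monotone — they never decrease any surviving candidate's tally and never create transfers away from a candidate the adversary is trying to save — so the worst case for candidate $c$ is always ``dump everything on $c$ now,'' and this worst case is exactly what the first-round test captures; formalizing this monotonicity of $\mathcal{F}_{STV}$ under first-place vote additions, in the presence of fractional surplus transfers, is the delicate part and is where I would spend the most care.
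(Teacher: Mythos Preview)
Your approach has a genuine gap that prevents it from achieving what the theorem promises. Your test --- ``$c$ is irrelevant iff $V^1_c + B < V^1_{c'}$ for all $c' \neq c$'' --- asks whether $c$ is the \emph{unique} round-$1$ loser under every allocation. But at most one candidate can satisfy this, and often \emph{none} does even when a large group is safely removable. Take $V^1_A = V^1_B = 100$, $V^1_C = V^1_D = 5$, and $B = 3$: neither $C$ nor $D$ is a guaranteed first loser (dumping the $3$ votes on $C$ makes $D$ lose first, and vice versa), so your iteration halts immediately with nothing removed. Yet the set $\{C,D\}$ is clearly removable: under any allocation, both are eliminated before $A$ or $B$. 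Your procedure therefore fails to shorten many instances that the theorem claims can be shortened.

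The paper's proof avoids this by working at the level of \emph{groups} rather than individual first-losers. It first establishes (Lemma~\ref{lem: set_reduction}) that when the first $j$ rounds are all eliminations, the \emph{order} of those eliminations is irrelevant to subsequent rounds --- exactly the opposite of what you asserted. This order-independence is what licenses removing an entire block at once. The removal condition is then formulated via a quantity the paper calls \textsc{Strict-Support}: a candidate set $L$ is removable if, for every $C_i \in L$, the maximum votes $C_i$ could accrue as the \emph{last} survivor of $L$ (i.e., after all of $L\setminus\{C_i\}$ has transferred to it), plus $B$, is still below what every candidate in $\mathcal{C}\setminus L$ would have after receiving transfers from $L\setminus\{C_i\}$. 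The algorithm grows $L$ one candidate at a time until this group condition holds, then removes the whole block and restarts. The $O(mn^4)$ bound comes from at most $n$ iterations of the while loop, each calling \textsc{Strict-Support} $O(n^2)$ times at cost $O(mn)$ each. Your monotonicity intuition (``dump everything on $c$'') is the right adversary move for a single candidate, but the paper's condition is the correct group-level generalization, and the order-independence lemma is the missing structural ingredient that makes it sound.
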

\begin{proof}[Proof Sketch]
We sketch the idea here and refer to Appendix \ref{app: cut_search_space} for details. First, suppose that, up to the $j$th round, we have only had candidate \textit{eliminations} and not any winners; then, the \textit{order} of eliminations is irrelevant in the transfer votes after that round (Lemma \ref{lem: set_reduction}). This implies that if we remove the set of candidates occupying strictly consecutive lower positions in the social choice order, and run the election for the smaller set, the \emph{remaining candidates}
will appear in the same order as before.
We build such a group of candidates under arbitrary additions of $B$ votes. 

\textbf{The main idea:} A group from the bottom, say $R \subset \mathcal{C}$, can be removed if, for each candidate $C_i \in R$, the following holds: [the votes that $C_i$ has after the removal and subsequent transfer of all others in $R\setminus \{C_i\}$ + $B$ votes]  is less than [the first choice votes of $G$ + transfers from $R \setminus \{C_i\}$] for all candidates in $G \in \mathcal{C} \setminus R$. This ensures, without going through all possible combinatorial orders of $R$, that given any elimination order of the candidates in $R$, they all are still short of more than $B$ votes to compete with any remaining candidates.  The algorithm then iteratively builds a group of candidates $R$ until the condition gets satisfied, and removes it, transferring their votes to subsequent candidates still in the election.  After removal, the process restarts, and multiple such groups may be eliminated this way. 
\end{proof} %

We note that the number of candidates who can be eliminated in such a way is highly dependent on the budget $B$ and the current set of votes (for example, if the original instance is such that all candidates are within $B$ ballots of each other in \textit{all rounds} that they are active in, then no candidates may be eliminated). %
In our case study in \Cref{sec: casestudies}, this algorithm eliminates at least $7$ candidates out of $13$ in $98\%$ of the bootstrap data samples of the polling data for $B=5\%$, allowing effective optimal strategy computation in an otherwise close to intractable scenario. %

\subsubsection{Reducing the space of sequences}

For a multi-winner election, the theoretical limit on the number of sequences is exponentially high ($2^{n-1}$), as given in Proposition $\ref{thm: combi_data}$. However, we now show that, independent of the status quo votes and budget, not all the sequences are feasible when we seek to elect $k$ winners, due to how the Droop quota is defined:

\begin{proposition} \label{cor: sequences}
    The number of feasible sequences is at most $\sum_{j=1}^k \binom{n}{j}$. %
\end{proposition}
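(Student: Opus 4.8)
The plan is to exhibit an injection from feasible sequences into the family of non-empty subsets of the round indices $\{1,\dots,n\}$ of size at most $k$; since that family has $\sum_{j=1}^k\binom nj$ elements, the bound follows. Concretely, by \Cref{thm: combi_data} a sequence is just an array in $\{W,L\}^n$ recording each round's outcome, so the map $s\mapsto W(s):=\{\,r:\ s_r=W\,\}$ is trivially injective ($s$ is recovered from $W(s)$). It therefore suffices to show that every feasible sequence $s$ satisfies $1\le |W(s)|\le k$. The lower bound is the easy direction: at least one round produces a winner --- in particular the all-$L$ array, in which no candidate ever reaches the quota $Q$, is excluded, since when a single candidate remains the mechanism seats it at the topmost open position and records it as $W$ (cf.\ Example~\ref{ex: structures}, where every listed sequence ends in $W$), so $W(s)\neq\emptyset$.

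For the upper bound I would track the vote weight that becomes permanently ``frozen'' at winners. By the surplus-transfer rule in Step~3 of the mechanism, a candidate who wins a round with $v\ge Q$ votes \emph{retains} $Q$ units of weight, which are never moved again; only the surplus $v-Q$ is (fractionally) redistributed, and any part of it that cannot be transferred is set aside. Consequently, right after the $j$-th winning round a total weight of exactly $jQ$ is frozen at the $j$ winners declared so far, and this frozen weight is disjoint from the tally of every still-active candidate. In particular, at the instant the $j$-th winner is declared, that candidate is holding $\ge Q$ un-frozen votes while $(j-1)Q$ is already frozen, whence $jQ\le|\mathcal B|$. Since $Q=1+\tfrac{|\mathcal B|}{k+1}$ gives $(k+1)Q=(k+1)+|\mathcal B|>|\mathcal B|$, the chain $jQ\le|\mathcal B|<(k+1)Q$ forces $j\le k$. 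Hence at most $k$ rounds are winning rounds, i.e.\ $|W(s)|\le k$.

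Putting the two bounds together, every feasible sequence maps injectively into the $\sum_{j=1}^k\binom nj$ subsets of $\{1,\dots,n\}$ of size between $1$ and $k$, which proves the claim. I expect the main obstacle to be making the freezing argument fully rigorous for an \emph{arbitrary} deterministic STV surplus rule (the proposition is stated for any such rule, not only WIGM): one must verify that a quota's worth of weight, once retained by a winner, can never be re-attributed to another candidate through later chains of transfers and eliminations --- it is enough that each winner freezes at least $Q$ and these frozen portions are pairwise disjoint --- and that exhausted or truncated ballots only ever shrink the weight available to subsequent winners, so the inequality $jQ\le|\mathcal B|$ is never in jeopardy. A lesser point is ensuring the lower bound $|W(s)|\ge1$ even under heavy ballot exhaustion, which follows from the mechanism's convention that the last surviving candidate is declared elected.
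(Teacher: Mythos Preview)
Your proof is correct and takes essentially the same approach as the paper: the paper bounds the number of $W$'s by $k$ via Lemma~\ref{lem: round-election} (every round-winner is an election-winner), whose proof is precisely your vote-freezing argument that each win permanently removes at least $Q$ units of weight from circulation, forcing $jQ\le|\mathcal B|<(k+1)Q$. You have simply inlined that lemma and made the injection $s\mapsto\{r:s_r=W\}$ and the last-round-is-$W$ convention explicit, where the paper leaves both implicit (it just asserts ``the number of wins in a sequence ranges from $1$ to $k$'' and relies on the $2^{n-1}$ count from Proposition~\ref{thm: combi_data}).
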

\begin{proof}[Proof Sketch]
    Appendix \Cref{lem: round-election} establishes that a round-eliminated candidate may be an election winner, but a round-winning candidate is never an election loser, as winning a round implies a number of votes above the Droop quota. Thus, the number of round-winning candidates is at most $k$, as only $k$ candidates win the election. Then, at most $k$ winning positions may be occupied at $n$ positions in the social choice order $f$, and the remaining positions are eliminations. Finally, the number of wins in a sequences ranges from 1 to $k$, and so the number of feasible sequences is $\sum_{j=1}^k \binom{n}{j}$. %
\end{proof}

We can also further reduce the number of sequences without affecting optimality, using the following intuition. In practice, some sequences (i.e., whether a candidate was eliminated or a winner declared in each round) are rarely attained, as it is often less costly to move across orders (i.e., the final ordering over candidates) than it is to move across sequences while keeping the same order. This is because changing the sequence involves at least one change of round result type (either $W \to L$ or $L \to W$) and changing the order involves replacing a win or a loss of one candidate with another. Additionally, if the election instance contains significant partially ranked ballots, the pool of active votes decreases as with the eliminations, while the Droop quota, the bar for a round-win, remains the same. %
Using these ideas, we develop an algorithm to remove suboptimal sequences under the addition of $B$ arbitrary votes. %

\begin{restatable}{proposition}{substtheorem}\label{thm: substtheorem}
    Given a total addition of $B$ undescribed ranked-choice votes to the voter base, the search space of sequences may be shortened, with complexity $O(mn^2)$ for $m$ number of unique ballots and $n$ candidates.
\end{restatable}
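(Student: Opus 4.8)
The plan is to strengthen the data-independent bound of Proposition~\ref{cor: sequences} with a data-dependent pass that marks, for certain round indices, that the round cannot carry a win. A sequence can be discarded whenever it is not the sequence component of \emph{any} structure reachable from the status-quo ballots $\mathcal{B}$ by adding at most $B$ ballots, since then no within-budget strategy --- in particular no optimal one --- can induce it. A round is a win round only if some candidate still active in it reaches the post-addition Droop quota $Q_B := 1 + (|\mathcal{B}| + B)/(k+1)$. So for each round $r$ I would compute, for each candidate $c$, an upper bound $\overline{V}_r(c)$ on the number of votes $c$ could hold at the start of round $r$, maximized over all ways of distributing the $\le B$ added ballots and over all length-$(r-1)$ prefixes of the sequence; if $\overline{V}_r(c) < Q_B$ for every candidate that could still be active in round $r$, then round $r$ is necessarily an elimination and every sequence with a $W$ in position $r$ is pruned. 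I would additionally carry an upper bound $A_r$ on the number of still-active (non-exhausted) votes at round $r$: since an elimination can only discard votes carried on ballots that become exhausted and each prior win permanently locks $Q_B$ votes with its winner, $A_r$ can be over-approximated by a simple round-by-round recurrence, and once $A_r < Q_B$ no round $\ge r$ can carry a win --- this term shrinks quickly, and prunes aggressively, precisely when $\mathcal{B}$ contains many short (partial) ballots.

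To make the bounds $\overline{V}_r(c)$ computable I would lean on Lemma~\ref{lem: set_reduction}: for a prefix consisting only of eliminations the elimination order is irrelevant to the transfers that reach round $r$, so $\overline{V}_r(c)$ equals $V^1_c$ plus the most favorable transfer $c$ could receive given that only $r-1$ other candidates have so far been removed --- a quantity read off the ballots --- plus the $B$ added ballots placed first-rank for $c$. For a prefix containing $w \ge 1$ wins I would be deliberately pessimistic: subtract the $wQ_B$ votes held by prior winners from the pool available to transfer, and bound each prior winner's onward surplus by its tally minus $Q_B$; this keeps $\overline{V}_r(c)$ a valid (if loose) upper bound. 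Maintaining these quantities incrementally, each round is one pass over the $m$ unique ballots with $O(n)$ work per ballot, i.e.\ $O(mn)$ per round and $O(mn^2)$ over the $n$ rounds, with the comparisons against $Q_B$ and the marking of pruned positions absorbed into this cost.

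The step I expect to be the real obstacle is soundness in the presence of cross-round coupling: routing an eliminated candidate's ballots to boost one candidate can preclude boosting another, and a prior win reshapes all later transfers, so naively one would have to quantify jointly over exponentially many prefix/allocation pairs. The observation that dissolves this --- and the crux of the argument --- is that declaring round $r$ a forced elimination only needs the \emph{conjunction} over candidates of the inequality ``$\overline{V}_r(c) < Q_B$'', whereas the existence of a win round is a \emph{disjunction} over candidates; hence each candidate may be analyzed under its own most favorable history, the per-candidate maximizations decouple, and the independent bounds above are legitimate upper bounds. For the formal guarantee it suffices that every pruned sequence is genuinely unreachable within budget $B$, so the reduced set still contains an optimal winning structure whenever one exists; the softer motivating remark (that it is typically cheaper to change the order than to flip a round's result $W \leftrightarrow L$) I would treat only as intuition and not as part of the correctness proof.
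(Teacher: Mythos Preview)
Your approach is sound but takes a genuinely different route from the paper's. The paper does not attempt per-round marking at all: it extracts only two scalar quantities from the data and prunes any sequence violating either. The first is an upper bound $U_W$ on the \emph{total} number of $W$'s, obtained by identifying the set $C_W$ of candidates whose \textsc{Strict-Support} plus $B$ can exceed the quota, counting the distinct ballots that can ever contribute to such a win, and dividing by $Q$. The second is a lower bound $i_L$ on the number of \emph{consecutive initial} $L$'s, obtained by greedily piling the most transferable sure-losers' votes onto the current front-runner and counting how many eliminations are needed before that pile could cross $Q$. Both bounds come from $O(n)$ calls to the \textsc{Strict-Support} primitive (each $O(mn)$), giving the stated $O(mn^2)$.

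Your scheme is strictly more ambitious: per-round bounds $\overline{V}_r(c)$, together with the active-pool bound $A_r$, could in principle certify forced eliminations at positions the paper's $(U_W,i_L)$ pair misses. The decoupling observation (the forced-elimination test is a conjunction over candidates, so each candidate may be analysed under its own most favourable history) is exactly the right way to tame the cross-round coupling, and your use of $Q_B$ as the threshold is correct once one checks that the gap $[\text{max votes}]-Q_b$ is monotone in $b$. One point you should make explicit: ``the most favourable transfer $c$ could receive given that only $r-1$ others have been removed'' is, read literally, a maximum over $\binom{n-1}{r-1}$ removal sets; to stay within $O(mn)$ per round you must mean the over-approximation that counts every status-quo ballot on which $c$ appears within its first $r$ ranks (allowing a different removal set per ballot), which is what your complexity analysis implicitly assumes. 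With that relaxation your per-round test for small $r$ essentially reproduces the paper's $i_L$, and your $A_r$ recurrence recovers a version of the paper's $U_W$; the paper simply stops at those two aggregate bounds rather than iterating over all $r$.
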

\begin{proof}[Proof Sketch]
Here we use the sequence of the given voter data to find attainable sequences, given an addition of $B$ undescribed votes to it. We first find an upper bound on the total number of winners possible by computing the number of unique ballots that can contribute to wins, and dividing it by the Droop quota. We then find a lower bound on the number of consecutive initial losses, measuring the minimum accumulation of votes through transfers that can lead to the first win. See Appendix \ref{app: cut_search_space} for details on both these bounds. 
\end{proof}

\section{2024 US Republican Primaries Case Study} \label{sec: casestudies}

\noindent We now use our framework to study the US Republican Primary, using individual-level responses from a ranked-choice national poll by WPA Intelligence and FairVote in September 2023 \cite{OtisLaverty2023}. While US Presidential primaries are not held via a single national ranked-choice vote, the dynamics of state-by-state elections over time\footnote{The Republican Party selects its presidential nominee through a series of primary elections and caucuses held in each state, starting in January 2024. Voters choose delegates who represent them at the national party convention. The candidate with the majority of delegate votes at the convention becomes the party's official nominee for president.} are reminiscent of such voting, with strategic behavior, coalitions, and the effects of candidates dropping out explicitly discussed \cite{Boehm2023}:

\begin{quote}\setlength{\leftskip}{-2em}\setlength{\rightskip}{-1em}- \textit{Trump's Rivals Are Facing Loserdom. They Have a Long-Shot Option. If Haley or (less plausibly) Scott comes in second and DeSantis falls to third, the Florida governor should drop out and endorse the winner. If DeSantis wins but Haley is leading in New Hampshire, then he should offer a place on his ticket, and she should accept. Christie should then obviously drop out pre-New Hampshire and endorse the Iowa winner as well.} (New York Times, \citet{Douthat2023}) \end{quote}  \begin{quote}\setlength{\leftskip}{-2em}- \setlength{\rightskip}{-1em}\textit{Haley is solidly second in South Carolina. What will it take for her to overtake Trump in the primary? With Haley in a steady second, further growth in her SC support may depend on getting those former Tim Scott voters on board early \dots}  (The State, \citet{Bustos2023})\end{quote} %
\begin{quote}\setlength{\leftskip}{-2em}\setlength{\rightskip}{-1em}- \textit{Christie doesn’t rule out working with Haley to defeat Trump} (The Hill, \citet{Sforza2023}) \end{quote} %
FairVote thus ran a ranked-choice national poll (oversampling early primary states) to understand potential race dynamics, stating that ``examining voters’ backup choices helps us understand which candidate voters might prefer if their favorite drops out of the race'' \cite{OtisLaverty2023}. However, such analyses are limited by the ability to efficiently navigate multistage dropout dynamics and potential strategic campaign behavior; we leverage our computational framework to do so, both assuming that the poll perfectly represented the status quo and under uncertainty.

\subsection{Data and Methods}
We use the ranked-choice national poll conducted September 28-30 by WPA Intelligence and FairVote \cite{OtisLaverty2023}, with 801 respondents and 13 Republican candidates.\footnote{We received the individual-level responses via private correspondence with FairVote.} \Cref{tab:gop_primary} contains the results of a ranked-choice vote run on the responses; notably, Haley has the \textit{fifth} most {first}-choice votes (after Trump, DeSantis, Ramaswamy, Christie), but emerges as \textit{second} in the ranked-choice voting, both predicting the status quo as of early February 2024 and reflecting the complex dynamics of such races. Given the number of candidates in the poll, a strategic analysis -- studying the effects of vote additions for any subset of candidates -- would naively be prohibitive. We thus use our algorithmic approach from \Cref{sec: cut_search_space} to reduce the space of candidates and structures (Theorem \ref{thm: remove_irrelevant_candidates}), and then study strategic campaigning behavior (Theorem \ref{thm: poly_efforts_B}). We first assume that the poll perfectly represents the state of the race; then, we study strategic robustness by injecting uncertainty, by bootstrapping (sampling with replacement the individual respondents).  For writing brevity, we often refer to candidates by the first letter of their last name, with the mapping given in \Cref{tab:gop_primary}. %

The algorithmic framework developed in the previous section enables the analysis. With the full polling data, our candidate elimination algorithm (which takes 20 seconds to run on a modern laptop) is used to remove the bottom 8 candidates, who cannot win under reasonable budgets of additional votes, leaving us with $n = 5$ candidates and $m = 100$ unique ballots. Identifying optimal strategies for each remaining candidate to win or place in second requires $n! \times n$ operations each with complexity $O(mn)$; this optimal strategy identification requires around $1$ second for each candidate. Our algorithm removes at least 7 irrelevant candidates to find optimal strategies in $98\%$ of the bootstrap samples.  Without such elimination and extrapolating runtime with $n = 13$ candidates and $m = 801$ unique ballots, this analysis would require around $t = 1186$ days, and $1000 t$ days for Bootstrapping. Given that removing irrelevant candidates is sufficient for tractability, we do not leverage the algorithm from \Cref{thm: substtheorem} to further remove data-dependent infeasible sequences, though do use Proposition \ref{cor: sequences} for the number of feasible sequences. The details of all our computations are provided in \Cref{app: casestudies}.

\subsection{Results}

\begin{table}[tb]
\small
\begin{tabular}{|c|c|c|}
\hline
Candidate & Strategic additions necessary ($\%$) & Status quo head-to-head against Trump \\ \hline
[T]rump             & $0\%$ & -                 \\ \hline
[D]eSantis             & $+12.87\%$  to D  &    $43.75\%$          \\ \hline
[R]amaswamy             & $+17.75\%$  to R &   $41.27\%$            \\ \hline
[H]aley             & $+24.8\%$ to H  &   $37.74\%$       \\ \hline
[C]hristie             & $+36.35\%$  to C  & $31.87\%$       \\ \hline
\end{tabular}
\normalsize
\caption{Minimum percentage of additional votes needed to win the RCV vote. We find that the most efficient path for each candidate to win overall would be to add votes to themselves. Note the complex interaction of the full rankings and the RCV process -- Haley has the fifth-highest first-choice votes and would end up as second in the status quo, but needs the fourth-highest number of new first-place votes to end up as first. This follows from the fact that Haley's significant number of voters rank DeSantis and Ramaswamy before Trump, but not vice versa. 
}
\label{tab: votestowin}
\end{table}

\subsubsection{Strategies under perfect information}
We first assume that the poll results are correct, and ask: what strategies do each candidate need to follow (in terms of \textit{adding votes} to themselves or other candidates) to either win or place in the top two? 
\paragraph{Strategies to win the election}
\Cref{tab: votestowin} contains the optimal vote addition strategies that the top five candidates would need to follow to win the election, along with their status quo head-to-head results against Trump. We find that all the optimal strategies are \textit{selfish}, i.e., the optimal way to win overall is to make sure that new voters place them first. Notably, the minimum number of additional notes does not correspond to the current RCV output (which has Haley second, as is occurring in the actual primary): to win, Haley needs more new first-place votes than Ramaswamy, who needs more than DeSantis, while the RCV outputs the reverse order. This finding is reflected in the media, which noted that Haley would struggle to beat Trump as many of the other candidates' voters would flock to Trump as candidates drop out \cite{DawseyArnsdorfReston2024}.

\begin{table}[h]
\small
\begin{tabular}{|c|c|l|c|}
\hline
Candidate       & Percent additions & Strategies                        & Strategy type                 \\ \hline
{[}R{]}amaswamy & $1.25\%$          & $+1.25\%$ to C                    & Altruistic to losers          \\ \hline
{[}D{]}eSantis  & $2.87\%$          & $+2\%$ to C and $+0.87\%$ to D    & Altruistic to losers, Selfish \\ \hline
{[}C{]}hristie  & $4.62\%$          & $+3.75\%$ to C and $+0.87\%$ to D & Altruistic to losers, Selfish \\ \hline
\end{tabular}
\normalsize
\caption{Percent votes required to be top 2, with up to $5\%$ additions allowed. In each case, the given candidate wins alongside Trump, who remains in first place. All strategies involve ranking only one candidate, and would not benefit from voters who also add other candidates to their ranked list; for D, we note that adding $+2\%$ ranked ballots [C,D] doesn't remove the need for $+0.87\%$ [D], as the [D] votes are required so that C gets eliminated after H, and not D.  Adding [C,D] doesn't increase D's votes until C is eliminated. 
Similarly, C adds $+0.87\%$ [D] so that R is eliminated post H instead of D, as D's removal makes R stronger than E. %
}
\label{tab: top2strategies_5}
\end{table}

\paragraph{Being in the top two}
To demonstrate the flexibility and power of our computational approach, we now ask the question: what votes would a candidate seek to add to place in \textit{the top two} final candidates?\footnote{Practically, being eliminated later may also historically correspond to being nominated in a future election or otherwise receiving a cabinet appointment, or winning in the event of a late-breaking surprise.} Of course, any candidate can reach the top two if they were able to add an unlimited number of votes, and so here we limit candidates to adding up to $5\%$ additional votes. (In \Cref{app: casestudies}, we also present results for up to $3\%$ and $4\%$). With this limit, we find that only the five leading candidates have feasible strategies to reach the top two -- in addition to the existing top two of Trump and Haley: Ramaswamy, Christie, and DeSantis. For each of the latter three candidates, \Cref{tab: top2strategies_5} shows the strategies they could follow to reach the top two. Notably, these strategies are substantively different (and cheaper) than the ones required to finish first.

In particular, the optimal strategies are not necessarily to add votes to oneself. For each of DeSantis, Ramaswamy, and Christie, the optimal path to beating Haley is to add votes primarily to Christie -- so that he can beat Haley in Round 9, before eventually (for DeSantis or Ramaswamy) losing. This strategy emerges because many of Christie's votes place Haley second, but the reverse is not true. Thus, helping Christie beat Haley is far more efficient for DeSantis and Ramaswamy than is adding first-place votes to themselves: they would need to add $6.63\%$ and $4.25\%$  additional self-votes\footnote{ [D]eSantis gets eliminated in Round 10, where we need 0.6$\%$ to D so that [R]amaswamy gets eliminated. After R's transfers, the gap between [H]aley and D becomes 6.63$\%$, implying total $6.63\%$ selfish additions make D beat H to reach top 2. Similarly, for R, $4.25\%$ self-votes suffice.}, respectively, instead of $2.87\%$ and $1.25\%$  additional votes needed when primarily supporting Christie. Surprisingly, these optimal strategies just require that the new voters place Christie first, and do not require, e.g. for DeSantis, that they place DeSantis second in their rankings.

~\\\noindent Together, these findings represent the power of our approach: we can efficiently search the space of strategies to find optimal strategies for each candidate, depending on their goal (of either winning or coming in the top two). Notably, for the goal of reaching the top two, we discover that it would be more efficient for some candidates to primarily support other candidates than to try to increase their own vote totals. %

\begin{table}[tb]
\centering
\small
\begin{tabular}{|l|c|c|c|c|}
\hline
Strategy (\%) & H & R & D & C \\ \hline
Original (No Strategy) & 68.8 & 24.4 & 5.3 & 1.5 \\ \hline
[R]amaswamy's Strategy (+C=1.25$\%$) & 46.7 & \textbf{38.6} & 9.7 & 5.0 \\ \hline
[D]eSantis's Strategy (+C=2$\%$, +D=0.87$\%$) & 33.1 & 38.1 & \textbf{19.7} & 9.1 \\ \hline
[C]hristie's Strategy (+C=3.75$\%$, +D=0.87$\%$) & 11.5 & 37.9 & 23.8 & \textbf{26.8} \\ \hline
\end{tabular}
\normalsize
\caption{Each candidate column is the frequency of bootstrap samples in which the candidate places in the Top 2, when optimal strategies according to the full data for candidates in the row are used. With uncertainty, otherwise optimal strategies that operate by helping other candidates may be ineffective.}
\label{tab:strategy_comparison}
\end{table}

\begin{table}[tb]
\small
\begin{tabular}{|l|l|l|l|l|l|l|}
\hline
Candidate & H & R & D & C & P & Sc \\ \hline
Top 2 frequency in bootstrap samples (\%) & 69.69 & 23.67 & 5.20 & 1.43 & - & - \\ \hline
Top 2 frequency under 5\% strategic additions (\%) & 99.49 & 91.63 & 73.67 & 48.37 & 2.45 & 0.20 \\ \hline\hline
Average additions in strategy (\%) & 1.20 & 2.08 & 2.74 & 3.18 & 4.30 & 4.06 \\ \hline
\end{tabular}
\normalsize
\caption{Under bootstrap polling samples, how often each candidate appears in the top 2 either without strategic additions or under optimal (for that sample) additions of up to $5\%$. \textit{P} is Pence, \textit{Sc} is Scott. }
\label{tab: bootstrap_summary_table}
\end{table}

\begin{table}[tbh]
\centering
\small
\setlength{\tabcolsep}{5pt} %
\renewcommand{\arraystretch}{1.2} %
\begin{tabular}{|c|c||c|c||c|c||c|c||c|c||c|c|}
\hline
\multicolumn{2}{|c||}{Haley} & \multicolumn{2}{c||}{Ramaswamy} & \multicolumn{2}{c||}{DeSantis} & \multicolumn{2}{c||}{Christie} & \multicolumn{2}{c||}{Pence} & \multicolumn{2}{c|}{Scott} \\
\hline
Type & \SI{}{\percent} & Type & \SI{}{\percent} & Type & \SI{}{\percent} & Type & \SI{}{\percent} & Type & \SI{}{\percent} & Type & \SI{}{\percent} \\
\hline
H   & \SI{90.41}{} & C   & \SI{39.19}{} & DC   & \SI{50.52}{} & C   & \SI{61.09}{} & P   & \SI{62.50}{} & Sc  & \SI{100.0}{} \\
HR  & \SI{3.42}{}  & R   & \SI{28.38}{} & D   & \SI{24.59}{} & DC  & \SI{27.83}{} & CP  & \SI{29.17}{} &     &              \\
HD  & \SI{2.40}{}  & RC  & \SI{20.57}{} & C   & \SI{16.99}{} & CR  & \SI{5.65}{}  & RP  & \SI{4.17}{}  &     &              \\
HC  & \SI{2.40}{}  & DC  & \SI{6.16}{}  & RDC & \SI{3.13}{} & RDC & \SI{2.17}{}  & DP  & \SI{4.17}{}  &     &              \\
D   & \SI{1.37}{}  & RDC & \SI{4.05}{}  & RC  & \SI{1.79}{} & D   & \SI{1.74}{}  &     &              &     &              \\
    &              & RD  & \SI{0.75}{}  & HD  & \SI{1.79}{} & HC  & \SI{1.09}{}  &     &              &     &              \\
    &              & D   & \SI{0.60}{}  & RD  & \SI{0.45}{} & CP  & \SI{0.22}{}  &     &              &     &              \\
    &              & HR  & \SI{0.30}{}  & DCP & \SI{0.15}{} & HRC & \SI{0.22}{}  &     &              &     &              \\
    &              &     &             & RDP & \SI{0.15}{} &     &              &     &              &     &              \\
\hline
\end{tabular}
\normalsize
\caption{Strategy categorization under uncertainty for each pair of candidates that can reach the top, given up to 5$\%$ vote additions. Each column, summing up to $100\%$, shows the distribution of optimal strategies that each candidate applies, whenever they are not in the top two in the bootstrap sample without strategic additions. To reduce the number of listed strategies, when a strategy is denoted XYZ (like HR, DCP, etc), the actual optimal additions are to separately add ballots for X, Y, or Z, and/or add ranked ballots of some combination of those candidates, e.g., [Y, X], [X, Y, Z].}
\label{tab: combination_percentages}
\end{table}

\subsubsection{Robustness under uncertainty} One substantial limitation of the above analysis is that it -- incorrectly -- assumes that the poll responses perfectly reflect the state of the race. Under any uncertainty or randomness, a strategy leading to a close victory can instead lead to a close loss -- and the metaphor between the RCV poll and the actual nomination process likely breaks down.\footnote{For example, the data in \Cref{tab:gop_primary} shows that the margin with which the eliminations occur is slim in some rounds; in rounds 1, 3, and 5, it is of the order of 0.1-0.2$\%$. The strategies from \Cref{tab: top2strategies_5} additionally show how smaller deviations may lead to significant updates to the social choice order.} Here, we analyze the robustness of the above strategies under such uncertainty. We ask: How stable are these strategies if the underlying poll data is noisy? Under imperfect information, do candidates know how to pick their coalition partners? How costly are strategies, under uncertainty? We focus on the case where each candidate seeks to place in the top two, informed by our analysis above that it would require a large number of new votes for any candidate to place first over Trump.

To characterize uncertainty, we use the method of bootstrapping, resampling with replacement the poll respondents.\footnote{This approach does not fully capture error. In non-ranked-choice polling, \citet{shirani2018disentangling} find that the polling margin of error often understates error -- e.g., caused by sampling biases and changes of opinions over time -- by a factor of two. Thus, real-life strategies would likely need to incorporate even more uncertainty than in our case study analysis.} For each resample, we (a) calculate the efficacy of the optimal strategies found above, and (b) find the optimal strategies assuming \textit{that sample} of respondents is the ground truth. We note that computational efficiency is key to doing this procedure, as we need to analyze each bootstrapped sample of voters. Notably, our algorithm removes at least 7 irrelevant candidates to find optimal strategies in $98\%$ of the samples for up to $5\%$ strategic additions, with $99.7\%$ and $99.9\%$ efficacy for $4\%$ and $3\%$, respectively.

\textbf{(a) Efficacy of optimal strategies calculated above.} We consider the optimal strategies for each candidate from Table \ref{tab: top2strategies_5} (for the original data) and apply those to each bootstrap sample. \Cref{tab:strategy_comparison} contains the resulting outcomes, in terms of the percent of samples in which a given candidate is in the top two. Table \ref{tab: top2strategies_5} shows that the optimal strategy given the full data for each of candidates R, D, and C heavily relies on adding votes to C. Under imperfect information, thus, an application of these strategies leads to an increased frequency of wins for each three of them -- but does not necessarily lead to a win for the candidate for whom that strategy was originally optimal. In other words, optimal strategies, especially those that add votes to other candidates, may be fragile.

\textbf{(b) Distribution of optimal strategies under each sample.} Now, we analyze robustness using an alternative manner: how similar are the optimal strategies calculated for each bootstrap sample of the respondents? Low strategy variance would suggest that candidates would follow a similar strategy up to small perturbations of the polling data. \Cref{tab: bootstrap_summary_table,tab: combination_percentages} show the analogue of \Cref{tab: top2strategies_5} under uncertainty. The former shows the six candidates after \textit{T} who either reach the top two directly in a sample or can do so after a strategic addition of up to 5\% of votes. The latter further categorizes the strategic additions for each candidate when they aren't part of the top 2.  %

Qualitatively, we find that optimal strategies that primarily add votes to oneself (e.g., for \textit{H}, \textit{C}, \textit{P}, or \textit{Sc}) are consistent across samples. Conversely, strategies that primarily operate by aiding other candidates (such as \textit{R} or \textit{D} adding votes for \textit{C}) are not consistently optimal -- further speaking to the fragility of complex strategies. On the other hand, purely selfish strategies may be more costly (e.g., $6.63\%$  selfish vs $2.87\%$ coalition for \textit{D} as in the case with perfect information). %

~\\\noindent Putting things together, the results speak to the difficulty of designing effective campaign strategies with RCV voting (or metaphorically similar processes such as state-by-state primaries over time, with candidates dropping out): purely selfish strategies of adding votes for oneself are costly (require substantial additions), but other actions such as supporting other candidates may backfire under imperfect information. On the other hand, we find that optimal addition strategies are often fairly simple: they do not require voters to fill \textit{long} ballots in a specific order to lead to a specific elimination chain, but can work even if the additional ballots only list a few candidates. We also provide case study-oriented insights: (a) H is a \textit{robust} second-place choice unless other candidates can support C as a spoiler but would have difficulty winning even after multi-round transfers; (b) it would guide candidates for \textit{how} to strategize, if they find it tractable: e.g., by informing D and R approximately how many votes they must add to C to help him serve as a spoiler.

\section{Theoretical Analysis of Strategies}
\label{sec:theory}

The results in Section \ref{sec: casestudies} bring us to our final exploration: How do we understand the empirical patterns? Can the non-selfish strategies lead to, perhaps, any stable coalitions?
To that end, we theoretically analyze the vote-addition strategies as well as extend the analysis to other persuasion efforts, such as cross-endorsements or coalitions in practice. The quest is clear: It is to understand the types of actions that would benefit the candidates \textit{at all}, in general; and how this would change under uncertainty. We subsequently define what it would mean to `benefit' in this sense.

\begin{definition}
A candidate or a group of candidates \textit{benefits} via an action, if it increases the number of their active votes in any round, i.e., their first-choice and received transfer votes, before they get eliminated or win a round. In other words, receiving `benefit' implies that the array of votes they have in each round (while in the contest) would strictly increase at some position. 
\end{definition} 
Benefiting is a weaker notion than `winning' -- if a strategy cannot benefit a candidate as defined, then it cannot help them win either. We study this notion in two settings: when candidates have perfect information about the \textit{current} voter ballots, or when this information is uncertain. With uncertain information, we distinguish between strategies that may work \textit{ex-ante} versus \textit{ex-post}. In practice, candidates have uncertain information about the state of the race such as via polling data. 

We consider two action types: (a) \textit{strategic voting} -- \textit{changing} existing votes, e.g., keep the voter set the same but \textit{persuades} them to change their votes; and (b) \textit{strategic additions} -- \textit{adding} votes, e.g., \textit{turn out} new voters with a specific ballot profile. Note that adding votes changes the Droop quota but changing votes does not. Naively, such actions can be arbitrarily complex, given the per-round transfer behavior of votes -- e.g., adding votes ranking all $n$ candidates in a particular order could have vastly different outcome implications than adding the same votes but with two candidates' positions swapped. As another example, it may be ``cheaper'' (in the number of votes needed to change or add) for a candidate to prevent another candidate's elimination and help them serve as a spoiler for a true competitor, instead of increasing their own votes. We do not consider \textit{equilibria}, but rather the simpler question of what actions could work even holding all other votes fixed.\footnote{More generally, candidates do not directly control voter ballots or turnout -- they control their own actions such as campaigning, advertisements, and turnout infrastructure \cite{enos2018aggregate}. Thus, implementing strategies that work in theory may not be practical. It is thus important to find strategies that both require few changes and are simple, that do not involve voters implementing a long ranked list exactly.}

\subsection{Strategic behavior under perfect information} \label{sec: perfect_info_strategies}

Suppose a candidate had perfect knowledge of the status quo voter preferences and who would turn out. How should this candidate seek to change preferences and turnout? We first analyze strategic voting (changing preferences) and show that it isn't practically useful from the candidates' point of view. Then, we study strategic additions (changing turnout). The two are further connected: the implications of strategic additions may be translated to strategic voting as well: if the addition of $X$ votes changes the structure, it may also be achieved by altering fewer preferences of current voters.

\subsubsection{Strategic voting -- altering the existing preferences: }
Can a candidate (or coalition of candidates) improve their outcomes by persuading their own voters to change their preference lists? 

\begin{assumption}
    A candidate $c$ can only affect a voter's ranking \textit{after} that voter places $c$. For example, if a voter currently ranks $c$ third, then $c$ cannot persuade that voter to change their top two positions, but can arbitrarily change the voter's preference starting at the third position. \label{assump:ownvoterbase}
\end{assumption}

This assumption is relevant when a candidate is directly telling their voters how they should rank other candidates, such as via endorsements. Such coalition behavior often occurs in the form of cross-endorsements in practice (e.g., multiple instances in New York City's RCV election \cite{FairVote2024}). Under this assumption, we observe that no actions are self-beneficial, either for an individual or a coalition.

\begin{proposition}
\label{prop: no_self_benefit}
Under Assumption \ref{assump:ownvoterbase} and perfect information, individual candidates cannot benefit via strategic voting. Moreover, strategic voting by a coalition of candidates can benefit some candidates in the coalition, but cannot benefit all candidates. %
\end{proposition}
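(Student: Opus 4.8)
The plan is to exploit Assumption~\ref{assump:ownvoterbase}: a candidate $c$ can only alter a ballot below the position where $c$ appears, so $c$ cannot add itself to a ballot that does not already list it, and cannot move itself up. First I would prove the individual case. Fix a strategizing candidate $c$ and consider any round $r$ before $c$ is eliminated or wins. The only votes that count toward $c$'s total in round $r$ are ballots on which $c$ is the highest-ranked still-active candidate. I claim every such ballot already counted toward $c$ (or toward a candidate eliminated before $c$ who then transferred to $c$) in the pre-manipulation election, and that the manipulation cannot have increased this set. The key point: on any ballot that lists $c$, the portion at or above $c$'s position is frozen by Assumption~\ref{assump:ownvoterbase}, so whether that ballot reaches $c$ in round $r$ depends only on which candidates ranked above $c$ on that ballot have been eliminated by round $r$ — and $c$'s manipulations of the tails of ballots cannot speed up those eliminations in a way that helps $c$ without also (by a careful accounting) being neutral or harmful. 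To make this precise I would set up an induction on rounds, showing that the set of active candidates at the start of each round, restricted to $\{d : d \succeq c \text{ on some ballot}\} \cup \{c\}$, and the vote count of $c$, are both weakly dominated by the status-quo run. The cleanest phrasing is: since $c$ can only redistribute the "tails" of its own supporters' ballots among candidates ranked below $c$, any vote that newly flows to $c$ must have been "parked" on some candidate $d$ ranked below $c$ on that ballot — but a ballot ranking $c$ above $d$ already contributes to $c$ before it could ever contribute to $d$, contradiction. Hence $c$'s round-by-round vote array is unchanged (or lower), so $c$ does not benefit.

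For the coalition case I would argue by a potential/exchange argument. Let $S$ be the coalition and suppose, for contradiction, that a joint strategic-voting action (each $c \in S$ editing tails of its own supporters' ballots per Assumption~\ref{assump:ownvoterbase}) strictly benefits every member of $S$. Consider the member of $S$ that is eliminated (or wins) earliest in the manipulated election — call it $c^\*$, and let $r^\*$ be that round. Up through round $r^\*$, no member of $S$ has left the contest, so for any ballot and any round $r \le r^\*$, the highest active candidate on that ballot is either outside $S$ or is the original top-of-ballot member among $S$-members and above — and crucially, the edits made by members of $S$ only touch positions strictly below the editing candidate. By the single-candidate argument applied to $c^\*$ against the reference election in which only the non-$c^\*$ members' edits are applied, $c^\*$'s vote array through round $r^\*$ is weakly worse; I then need to absorb the other members' edits, which can only help $c^\*$ by transferring votes to $c^\*$ that were parked on $c^\*$-below candidates — again impossible for the same "a ballot ranking $c^\*$ above $d$ already counts for $c^\*$" reason, unless the edit occurs on a ballot where $c^\*$ is below the editing member, but then those edits are below that member and the relevant transfers to $c^\*$ would have to pass through candidates ranked between that member and $c^\*$. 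Chasing this through shows $c^\*$ cannot have strictly benefited, contradicting the assumption that all of $S$ benefits.

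The main obstacle I anticipate is the bookkeeping in the coalition case: making rigorous the claim that "a newly-arriving vote to $c$ must have been parked on a lower-ranked candidate on that same ballot, which is impossible." This needs a clean invariant maintained across rounds — something like: for every ballot $b$ and every round $r$ before any coalition member leaves, the top active candidate on $b$ in the manipulated run equals the top active candidate on $b$ in the status-quo run whenever that candidate is weakly above $c$ (the candidate whose benefit we are tracking). One must handle the interaction of transfers carefully, since a manipulated tail could cause a different candidate to be eliminated first, changing downstream transfers; the resolution is that any such divergence happens only strictly below $c$ on the ballots $c$ controls, hence cannot retroactively feed $c$. I would isolate this as a lemma ("controlled tails cannot back-propagate to the controlling candidate"), prove it by induction on rounds, and then both parts of the proposition follow by applying it to $c$ and to $c^\*$ respectively. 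A secondary subtlety is that "benefit" is defined as a strict increase at \emph{some} position of the round-by-round vote array, so for the coalition statement I only need to rule out a \emph{simultaneous} strict increase for all members, which the earliest-leaver argument delivers cleanly.
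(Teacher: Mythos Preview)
Your approach is essentially the paper's: the earliest coalition member to leave cannot have benefited, because every edit sits strictly below some coalition member's position on its ballot and is therefore invisible while all members remain active. You are, however, making this considerably harder than necessary. The detour through a ``reference election with only the non-$c^\ast$ edits applied'' is not needed, and your stated reason for why other members' edits cannot feed $c^\ast$ (that transfers ``would have to pass through candidates ranked between that member and $c^\ast$'') is not quite right, since the editing member could place $c^\ast$ immediately after itself on the manipulated ballot. The clean invariant---which you do eventually name---is that the manipulated and status-quo elections are \emph{literally identical}, round by round, until the first coalition member leaves: on every ballot the positions up to and including the highest-ranked coalition member are untouched by Assumption~\ref{assump:ownvoterbase}, so while all coalition members are active every ballot's top active candidate is unchanged, and a one-line induction on rounds gives identical vote tallies and identical elimination/win outcomes. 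Once you state it this way, both parts of the proposition follow immediately (the individual case is the coalition case with $|S|=1$), and the worries you flag about weak domination, divergent elimination orders, and absorbing other members' edits simply do not arise.
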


\begin{proof}
The argument follows from the fact that a candidate's influence on ranked choices after their position is accessible only if they are either eliminated or have already won. In both scenarios, the candidate cannot use this influence to their own advantage. Now, consider a coalition of $m$ candidates. We demonstrate that at least one candidate in the coalition does not gain from this arrangement. Consider the first coalition candidate $c_1$ who is either eliminated or wins among the coalition of $m$ candidates. For either result, there is no benefit or vote transfer from the coalition as the ballots containing strategically altered preferences following other coalition members' ranks are still with their prior choices.
Therefore, in either case, $c_1$ does not benefit from being part of the coalition. However, after persuading voters to vote in a bloc of coalition members, the last candidates remaining in the contest could see the benefits of coalitional strategic voting, with possible translations to the wins.
\end{proof}

\Cref{prop: no_self_benefit} is an impossibility result, under perfect information: a group of candidates cannot all benefit from a coalition, where they collectively employ strategic voting to tell their voters to rank the other candidates after themselves -- even if the number of winners $k$ in the election is greater than the coalition size.  The result suggests that the justification for such coalitions in practice must come from uncertainty on existing voter preferences (as we'll see in \Cref{lem: coalitions_imperfect}, coalitions can benefit all candidates ex-ante but not ex-post under imperfect information).

\subsubsection{Strategic additions -- adding new ranked votes}

\begin{table}[tb]
\centering
\begin{tabular}{l|p{10cm}}
\textbf{Strategy name} & \textbf{Description} \\ \hline
Selfish & Add ballots that rank yourself. (Pure selfish: rank only yourself) \\ \hline
Altruistic to winner(s) & Add ballots that rank candidates who will win %
and other candidates (including potentially oneself), to get their transfer votes.
\\ \hline
Altruistic to loser(s) & Add ballots that rank candidates who still lose, %
and other candidates (including potentially oneself),
so that they may serve as spoilers to other candidates. 
\end{tabular}
\caption{Example strategies for voting additions. These examples are non-exhaustive, and some strategies may be a mixture of several strategies. %
}
\label{tab: coalition_categories}
\end{table}

Given the ineffective nature of actions that only affect one's own existing voters, we now study what happens if the candidate's action space is expanded: they can add votes, and the new voters can have arbitrary rank lists. %
Naively, many strategies could be optimal for a candidate or a coalition -- \Cref{tab: coalition_categories} categorizes such strategy types.

Our next result classifies the optimal addition strategies. We show that it is largely sufficient to consider ballot additions of a single category type: ballots that first rank candidates that will still be round-losers, and then a single rank of a candidate who will be a round (and thus election) winner. There are always optimal ballot structures of this form in the case of single-winner RCV.

In the case of multi-winner RCV, there is a ``corner case'' in which adding another ballot type may instead be optimal, corresponding to the `Altruistic to winner(s)' strategy. We define the case as Case (A): There is a candidate who receives \textit{low top-choice support}- leading to elimination in the status quo, but \textit{high later-choice support}- leading to a sure win if some vote additions prevent elimination in early rounds. Formally, the election instance contains at least one candidate who needs $X$ minimal additional votes that prevent elimination in the first $r$ round(s) and later wins a round ($\geq r+1$) with {only} the transfer votes from out-of-contest candidates, i.e., without needing any added votes in that round. %

\begin{restatable}{theorem}{ranklosingc}\label{thm: rank_losing_C}
Given voter data and a target set of election winners, the optimal strategies are completely characterized based on Case (A) occurrence:
\begin{itemize}
    \item In the absence of Case (A), for the optimal vote-addition strategies it suffices to add ballots of the form $[L,L,\dots,L,W]$ or $[L,L,\dots,L,L]$, including  $[L]$ and $[W]$, i.e., only losing candidates are ranked until the last choice on the ballot. It is strictly beneficial for the strategist to be the last candidate on such ballots, i.e., making these `selfish' plus `altruistic to losers'.
    \item In Case (A), additionally the strategy `Altruistic to winner(s)' of the form $[\dots, W, \dots]$ may be optimal, i.e. containing at least one winning candidate ranked at non-terminal positions.
\end{itemize}%
An optimal strategy may add multiple types of ballots, all of the specified forms. Strategies may not be unique: we may always add more preferences (if any) at the end of these ballots and maintain the same result (thus, we say that it suffices to add ballots of this form; adding irrelevant candidates to the end of ballots may break the form). %
\end{restatable}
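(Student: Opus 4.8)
The plan is to prove the theorem by a ballot-exchange argument carried out inside the structure framework of \Cref{thm: combi_data}: take an arbitrary optimal strategy and transform it, step by step, into one of the stated form without increasing the number of added ballots. First I would fix, among all optimal strategies, one that is lexicographically minimal in (number of added ballots, total ranked positions across all ballots, sum of individual ballot lengths), and let $\sigma=(f,s)$ be the structure it induces when $\mathcal{F}_{STV}$ is run on $\mathcal{B}$ together with the added ballots; by \Cref{cor: sequences} the sequence $s$ has at most $k$ wins. The basic object is the \emph{trajectory} of each added ballot during that run: a ballot rests with its highest still-active candidate, and when that candidate leaves, it forwards its full (fractional) weight if the candidate was eliminated but only the surplus fraction if the candidate won. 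This asymmetry — full transfer from an $L$, partial transfer from a $W$ — is the engine of the argument: routing added weight through losing candidates is never less efficient at delivering weight to later rounds than routing it through a winning candidate, and this is what forces winners to the terminal position of good-form ballots outside the one corner case.

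Next I would run a case analysis on any added ballot $\beta$ that violates the target form, i.e. that contains a winning candidate at a non-terminal position or a candidate ranked after some winning candidate. (i) If, in the run, the candidates strictly after the first winner on which $\beta$ rests receive no benefit that $\sigma$ actually requires, truncating $\beta$ at that winner gives a shorter optimal strategy, contradicting minimality. (ii) If a losing candidate $L'$ ranked after a winner $W$ on $\beta$ \emph{does} matter — because the surplus $W$ forwards to $L'$ (and beyond) is consumed by the structure — I would delete that suffix and reproduce the same downstream effect with fresh good-form ballots headed by $L'$; since $L'$ forwards a full weight while $W$ forwarded only a surplus fraction, this needs no more added votes, and I would re-run the allocation algorithm of \Cref{thm: poly_efforts_B} on the (possibly slightly modified) structure to certify optimality is preserved. (iii) The delicate case is when $\beta$ rests on a winner $W$ at a non-terminal position because $W$ genuinely needs added (non-transfer) votes in the round it wins: here I would split on whether there is a modified target structure with the same winner set in which $W$ survives the first $r$ rounds on $X$ added votes and then wins a later round using only out-of-contest transfer votes. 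If there is, that is exactly Case (A), the ballot is legitimately of 'Altruistic to winner(s)' type, and we normalize it to the form $[\dots,W,\dots]$; if there is not, I would show the win of $W$ can be rescheduled, or the added mass rerouted through losers, so that every $W$-resting ballot becomes $[L,\dots,L,W]$ with $W$ terminal, again at no extra cost. Finally, for the ``strictly beneficial to be last'' clause, given any optimal good-form strategy I would relabel the terminal candidate of each ballot to the strategizing candidate (a legal choice, since the strategist is in the target winner set, or append the strategist when the ballot ends in an $L$): this only adds votes to the strategist in exactly the rounds the ballot rested there, weakly preserving feasibility and optimality while strictly increasing the strategist's round-vote array somewhere — which is all the definition of \emph{benefit} requires.

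For the converse half of the dichotomy — that 'Altruistic to winner(s)' ballots can actually be optimal under Case (A) but not otherwise — I would exhibit a small instance (or argue structurally) in which a ballot $[W,\text{strategist}]$ is strictly cheaper than any strategy using only $[L,\dots,L,W]$ and $[L,\dots,L,L]$ ballots: the $X$ added $[W,\text{strategist}]$ votes first keep $W$ alive through round $r$, then $W$ wins round $r+1$ for free from out-of-contest transfers, and $W$'s surplus then flows to the strategist, so the same added ballots are used twice; and I would show the efficiency comparison of steps (ii)--(iii) rules this ``double use'' out precisely when Case (A) fails. I would close with the non-uniqueness remark already stated informally: one may always append candidates who are about to exit to the end of these ballots without changing the run, but appending other candidates breaks the form, which is why the theorem asserts it \emph{suffices} to restrict to ballots of this shape rather than that every optimal ballot has it.

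The main obstacle I anticipate is step (iii): proving that, absent Case (A), a winner never needs to sit at a non-terminal ballot position requires showing that whatever effect such a placement achieves can be reproduced by a re-choice of the target structure (which round each winner wins, interleaved with eliminations) together with losers-only routing, and that the re-optimized allocation via \Cref{thm: poly_efforts_B} costs no more — all while tracking that adding votes shifts the Droop quota $Q$, so the non-linear feasibility constraints of the new structure in \Cref{thm: combi_data} differ from those of the old one, and the cost comparison must be made against the correctly-shifted quota.
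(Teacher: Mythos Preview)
Your approach is essentially the paper's: both turn on the ballot-splitting/exchange argument that eliminated candidates forward their full weight while winners forward only the surplus fraction, so any optimal ballot with a winner $C_k$ at a non-terminal position can be replaced by fewer total copies of $[C_1,\dots,C_k]$ and $[C_1,\dots,C_{k-1},C_{k+1},\dots]$. The paper's proof is a much more compressed sketch that executes exactly this split (iterating when consecutive winners appear and invoking Lemma~\ref{lem: round-election} to argue the reordered wins still land in the winner set) without your lexicographic-minimality framing or case (i)--(iii) decomposition, and it does not address the Droop-quota shift you correctly flag as the main obstacle.
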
  

\begin{proof}[Proof Sketch]
We show that only under Case (A), candidates can benefit from adding ballots of the form $[\dots, W,\dots]$. Intuitively, if a candidate has a larger later-choice appeal, their active votes may increase over time, potentially allowing transfers of the added votes post-winning. But, if they need extra votes to win a round, i.e. reach the Droop quota, they do not generate surplus, hence no one can use these ballots post they win.  In the absence of Case (A), we demonstrate that adding votes to losing candidates until the second-last choice is a more efficient strategy than other methods of vote allocation. This follows from the fact that surplus transfer after a round-win is weighed down, whereas transfer post an elimination is not. See \Cref{app: sec3details} for details. %
\end{proof}

The theorem establishes that except under the rare instance of Case (A),
 `selfish' and `altruistic to loser' strategies strictly dominate the `altruistic to winner' strategy. Thus, it largely suffices to consider the combinations of `selfish' and `altruistic to loser' as optimal strategies for a candidate or a group of candidates. A candidate may seek to add voters who support other candidates (who will eventually lose) over themselves. Section \ref{sec: cut_search_space} establishes a computational framework to compute exact optimal strategies given an election instance.

\Cref{thm: rank_losing_C} qualitatively extends \Cref{prop: no_self_benefit} to the vote addition strategies: with perfect information, only under the instance of Case (A), candidates may seek coalitions that can potentially benefit all involved. Otherwise, no group of candidates can seek to collectively add optimal voter ballots that benefit all coalition members, since optimal ballots are of the form that ranks losing candidates first and then a single winner. (Note that it is not a direct extension; there are of course ballot additions that benefit all candidates in the coalition, but they may not be optimal for any given candidate). 
It further conceptually establishes that in RCV or STV elections, there is an incentive to seek endorsements from candidates who will themselves lose the race.

\subsection{Strategy dynamics under imperfect information }\label{sec: imperfect_info_strategies}
The above analysis is under a strong assumption: that the candidates can strategize perfectly knowing the ``status quo'' of the race, i.e., the full distribution of votes from which they can calculate strategies. In practice, this knowledge has high uncertainty -- polling data is imperfect and opinions may change over time even without actions. Especially given the intricate round-specific behavior of transfers, optimal strategies assuming one status quo vote set may be ineffective or counterproductive under a ``nearby set.'' We now study optimal strategies under such uncertainty. Now, assume that the current vote set is drawn from a known distribution $\mathcal{D}$, and the actions change or add votes as above.

We first study strategic voting (changing existing votes) under uncertainty. \Cref{prop: no_self_benefit} immediately implies that strategic voting under Assumption \ref{assump:ownvoterbase} (that one can only influence one's own voters) cannot benefit all coalition candidates \textit{ex-post}, i.e., under any specific realization of existing votes. However, the following result establishes that it can benefit all members \textit{ex-ante}, e.g., increase the probability that they each win the election. %

\begin{proposition}
\label{lem: coalitions_imperfect}
Under uncertainty and Assumption \ref{assump:ownvoterbase}, strategic voting among a coalition of candidates does not ex-post benefit all candidates, although it can ex-post increase the number of winners from the coalition. However, there exist distributions $\mathcal{D}$ such that every candidate can ex-ante benefit from joining a coalition.  %
\end{proposition}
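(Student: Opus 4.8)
I would treat the three assertions in turn: the first follows formally from \Cref{prop: no_self_benefit}, and the latter two from a single symmetric construction.

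\emph{Ex post, not all coalition members benefit.} Fix any realization of the votes drawn from $\mathcal{D}$. Conditioning on it is precisely the perfect-information setting of \Cref{prop: no_self_benefit}, whose argument applies verbatim: the first coalition candidate to win a round or be eliminated receives no transfer from the bloc, because the strategically altered ballots still sit with their prior choices at that point, so that candidate's per-round active-vote array is unchanged. Hence in \emph{every} realization at least one coalition member fails to benefit. A useful consequence to flag is that no single deterministic instance can witness the ex-ante claim, so genuine randomness in $\mathcal{D}$ is essential --- this is the conceptual crux of the whole statement.

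\emph{A symmetric two-scenario construction.} I would build $\mathcal{D}$ supported on two deterministic instances $s_1,s_2$ that differ only by swapping the two coalition members $c_1\leftrightarrow c_2$. In $s_1$, take $k=2$ seats and candidates $\{c_1,c_2,d\}$: give $c_2$ a first-choice block exceeding the Droop quota $Q$ (so $c_2$ wins the first seat in round one with a strictly positive surplus), and choose the first-choice totals of $c_1$ and $d$ so that $c_1<d$ but $c_1$ plus $c_2$'s surplus exceeds $d$ and clears $Q$ (concretely, e.g.\ first-choice counts $50,15,10$ out of $75$ ballots, giving $Q=26$ and a surplus of $24$). Without a coalition, $c_2$'s surplus transfers to $d$ (or exhausts), $d$ takes the second seat, and $c_1$ finishes last. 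Under the coalition, the bloc tells $c_2$'s voters to rank $c_1$ immediately after $c_2$ and $c_1$'s voters to rank $c_2$ immediately after $c_1$ --- both legal under \Cref{assump:ownvoterbase}, since each change sits strictly below the relevant coalition member's own position --- so $c_2$ still wins the first seat (its first-choice count is untouched) but its surplus now flows to $c_1$, lifting $c_1$ past $d$ into the second seat. Thus in $s_1$ the coalition converts one coalition winner into two; read as a stand-alone instance, this already proves that coalitional strategic voting can ex post increase the number of coalition winners. Scenario $s_2$ is the mirror image.

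\emph{Ex ante benefit.} Set $\mathcal{D}$ to put probability $\tfrac12$ on each of $s_1,s_2$. Without the coalition, $c_1$ wins only in $s_2$ and $c_2$ only in $s_1$, so each wins with probability $\tfrac12$; with the coalition each wins in both scenarios, i.e.\ with probability $1$, and moreover each member's per-round active-vote array is weakly larger under the coalition in every realization (identical in the scenario where it already wins outright, strictly larger in the other). Hence every coalition member benefits ex ante under either reading of ``benefit'' --- increased win probability, or a stochastic improvement of the active-vote profile --- and no member is ever worse off, so the coalition is an ex-ante Pareto improvement for its members. The remaining work is routine bookkeeping: checking that $s_1$ is a legitimate STV instance (quota attained in round one, claimed elimination order afterward) and that the bloc ballots never reorder anything weakly above a coalition member, so the already-winning member is not accidentally harmed in its favorable scenario; symmetry between $s_1$ and $s_2$ then closes the argument. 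The only genuinely non-mechanical step is recognizing that, by the ex-post impossibility above, one must average over two ``directions'' of uncertainty rather than hope for a single worked example.
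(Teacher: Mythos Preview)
Your proof is correct and handles all three assertions cleanly. The route, however, differs from the paper's.

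The paper argues at a higher level of generality: it considers the ``bloc'' coalition strategy (each member's voters rank all coalition members before outsiders, without reordering within the bloc) and asserts that because within-bloc preferences are untouched, no coalition member's active-vote array can drop in any realization, while the last-surviving member strictly gains from the retained transfers. It then concludes that under uncertainty over which member survives longest, every member benefits ex ante. The paper does not exhibit a specific distribution or instance.

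You instead construct an explicit two-point symmetric distribution over concrete instances (three candidates, $k=2$, counts $50/15/10$) and verify the claims by direct computation. This buys you rigor and self-containment: the existence claim is witnessed rather than argued for, and the ``ex-post more winners'' clause falls out of the same construction for free. The paper's approach, by contrast, gestures at a whole family of witnessing distributions (any $\mathcal{D}$ with enough symmetry in who is eliminated first) and makes the conceptual point---that the ex-ante gain comes precisely from hedging across elimination orders---more transparent, at the cost of leaving the verification implicit. Your observation that no single deterministic instance can witness the ex-ante claim, forcing a genuinely mixed $\mathcal{D}$, is a nice sharpening that the paper does not make explicit.
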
 
\begin{proof}
Consider a strategy that strictly increases the number of voters voting within the coalition bloc (voters who rank all candidates in the coalition before candidates outside it), without changing their prior preferences within the coalition. %
The formation of a bloc coalition might lead to more wins from the coalition as fewer votes get exhausted or go outside the coalition as eliminations occur, benefiting the last remaining candidate(s) from the coalition. Accordingly, some of the benefits may be translated to the number of wins.\footnote{Proposition 1 in \cite{garg2022combatting} gives bounds on the number of wins under two-party bloc voting in multi-winner STV.} 
Being part of the coalition then strictly increases the voter base of candidates, either before or after their elimination. Given any structure as ground truth, as voters are not updating their preference orders within the bloc, no candidate within the coalition can see a drop in their active voter base post-joining the coalition. 
Then under uncertainty of election structure, all candidates ex-ante benefit via the strategy. 
\end{proof}

The result justifies coalitions that cross-endorse each other, with each candidate asking their supporters to rank the other candidates immediately after them: it may be mutually beneficial to do so in probability, in the sense that it increases the probability that each candidate wins the election under uncertainty on the status quo state of the race.

Finally, we study the robustness of optimal vote addition strategies as derived in \Cref{thm: rank_losing_C}. The following result indicates that \textit{purely selfish} strategies that rank the target winner first (in additional votes) are robust (\textit{ex-post} beneficial under any status quo), while other strategies such as \textit{altruistic to losers} (supporting other candidates so they serve as spoilers before losing) are not.

\begin{proposition}\label{lem:rank_losing_C_imperfect}
    Even under uncertainty, the purely `selfish' strategy of adding first-place votes to oneself is always ex-post beneficial, while any other strategy may be ex-post disadvantageous even if ex-ante beneficial.
\end{proposition}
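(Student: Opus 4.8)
The plan is to treat the two halves separately: the positive claim is essentially a one-line consequence of how Round~1 behaves, and the negative claim needs an explicit two-scenario construction, which is where the work lies.

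For the positive half (selfish additions are always ex-post beneficial), I would fix an arbitrary realization of the status quo ballots, let $c$ be the strategist, and let the strategy add $\delta\ge 1$ ballots each ranking $c$ first (with an arbitrary tail). Since every candidate is active in Round~1 and $\mathcal F_{STV}$ performs no out-of-contest transfers in Round~1, $c$'s Round-1 active tally equals the number of ballots ranking $c$ first, and the additions raise it by exactly $\delta>0$; hence the first entry of $c$'s per-round active-vote array strictly increases, which is the definition of \emph{benefit}. The point to stress is that this uses only the Round-1 count, so the simultaneous rise of the Droop quota $Q$ and any change to the downstream elimination order cannot undo it; the conclusion therefore holds for every realization, hence ex-ante as well.

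For the negative half (any other strategy may be ex-post disadvantageous), I would exhibit one representative non-selfish strategy together with a distribution $\mathcal D$ on which it is ex-ante beneficial yet strictly hurts the strategist on a positive-probability realization. I would use the \emph{altruistic-to-loser} (spoiler) strategy of \Cref{thm: rank_losing_C}: $c$ adds ballots $[\ell,c]$ ranking a weak candidate $\ell$ first and $c$ second, intending for $\ell$ to survive long enough to deny transfers to $c$'s true rival $b$ and then, on $\ell$'s own elimination, hand those ballots back to $c$. Let $\mathcal D$ put mass $1-\varepsilon$ on a profile $\omega_1$ in which, absent the strategy, $\ell$ is eliminated early and transfers to $b$, so $b$ beats $c$; with enough $[\ell,c]$ ballots, $\ell$ stays above $b$, so $b$ is eliminated first and exhausts, $\ell$'s eventual elimination returns the lent ballots, and $c$ wins. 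Let $\mathcal D$ put mass $\varepsilon$ on a ``nearby'' profile $\omega_2$, equal to $\omega_1$ except that $\ell$ starts with somewhat more first-choice support; then without the strategy $c$ outlasts $\ell$ and wins, but with the strategy the inflated $\ell$-tally, together with the quota $Q$ raised by the added ballots, lets $\ell$ overtake $c$ in an early round (possibly reaching $Q$ outright), so $c$ is eliminated earlier or loses the election and the lent ballots never return. Thus the strategy is ex-post disadvantageous on $\omega_2$; yet for $\varepsilon<\tfrac{1}{2}$ it raises $c$'s winning probability from $\varepsilon$ to $1-\varepsilon$ and raises $c$'s expected final-round tally, so it is ex-ante beneficial. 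I would then note briefly that the other non-selfish families of \Cref{tab: coalition_categories} fail the same way: an \emph{altruistic-to-winner} ballot $[w,\dots,c,\dots]$ pays off only when $w$ produces surplus, and on a realization where $w$'s later-choice transfers fall just short $w$ consumes the lent votes merely to reach $Q$, yields no surplus, and $c$ gains nothing; in general any added ballot not ranking $c$ first leaves $c$'s Round-1 tally untouched and makes $c$'s fate hinge on downstream quantities --- the elimination order, whether a round generates surplus, the size of $Q$ --- all of which move under perturbations of $\mathcal D$.

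The main obstacle I anticipate is the construction in the negative half: making the \emph{same} added-ballot multiset decisive \emph{for} $c$ on $\omega_1$ and \emph{against} $c$ on $\omega_2$, which forces one to track both the shifted elimination order and the enlarged Droop quota (it is precisely the quota bump from $c$'s own added ballots that lets the boosted $\ell$ clear $Q$ and flip the outcome in $\omega_2$). By contrast the positive half and the bookkeeping of transfer fractions are routine once the Round-1 observation is in hand.
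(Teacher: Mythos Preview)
Your proposal is correct and follows the same two-part split as the paper: selfish additions raise the strategist's own active tally (hence satisfy the definition of \emph{benefit}), while non-selfish additions can leave the strategist eliminated before the lent votes return. If anything you are more thorough than the paper's two-sentence proof --- your Round-1 argument is cleaner than the paper's ``increases in every round'' assertion (and sidesteps any worry about the raised Droop quota reordering later rounds in multi-winner STV), and your explicit two-point distribution actually exhibits the ``ex-ante beneficial yet ex-post disadvantageous'' combination, which the paper's one-line remark leaves implicit.
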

\begin{proof}
    The selfish strategy is always beneficial as it increases the active voters of the candidate in every round -- even under any true status quo votes, this cannot eliminate the candidate earlier than its position without additions \footnote{The upper monotonicity paradox states that asking voters to shift your ranking up may worsen your position \cite{graham2023examination}. However, adding new `selfish' votes is always beneficial.}. However, strategies that add votes to other candidates first may become strictly disadvantageous under imperfect information, if the strategist gets eliminated early in a competition before the beneficiary, by a margin less than the added votes.
\end{proof}
This implies that although `altruistic to losers' may be cheaper than the `selfish' strategy, such strategizing under uncertainty may be costly. In STV coalitions, strategic additions become even more challenging.  If two candidates $C_1$ and $C_2$ (with $V_{C_1} > V_{C_2}$ and bloc voting in multi-winner STV)  form a coalition, they need to ensure that $C_1$ wins and transfers significant votes to $C_2$, before $C_2$ is eliminated or $k$ candidates are declared winners. The feasibility of this strategy depends on structure-specific information coming from the available voter data. In this setting under uncertainty, $C_1$ may prefer to strategically add votes such that $C_2$ loses even earlier, transferring all their votes to $C_1$. 

Putting things together, in this section we derived the types of strategies that can possibly work or are optimal, both under uncertainty and perfect information. The analysis implies that stable coalitions may be reasoned only under imperfect information, while perfect information favors both relatively costly ‘selfish’ as well as potentially backfiring ‘altruistic to loser’ strategies. 

\section{Conclusion}
Rank choice voting (RCV) is gaining popularity, and there's a growing push for its widespread implementation. However, understanding and navigating RCV can be complex, often requiring significant computational effort. This paper aims to shed light on these complexities.

We analyze the actions that benefit a candidate or group of candidates in Single Transferable Vote (STV) elections. Focusing on strategic additions, e.g. campaigning, we develop tools to find optimal strategies for candidates, making it easier to understand how close a candidate is to winning and to perform rigorous statistical analyses of voter data, even under uncertainty using methods like bootstrapping. Extending our analysis to cross-endorsements or coalitions, we find that stable coalitions can only be reasoned under imperfect information.

In our algorithmic analysis, we build a combinatorial framework for STV and establish an algorithm (polynomial-time in the number of orders of candidates) for achieving the desired STV election outcome. This framework can be extended to any deterministic elimination rules that map to a social choice order, such as post-batch elimination. Given the combinatorial explosion of possibilities precluding full efficiency and approximation guarantees, we provide polynomial-time algorithms for reducing the search space. These work with increasing efficiency as the persuasive ability, measured by the number of votes to be added, decreases. This offers a tractable `upper bound' for the general persuasion effect. Unlike the plurality rule, the election dynamics, including the notion of how close or how far (from winning or within the candidate pool) are quite opaque in RCV, and our algorithms aim to clarify these dynamics optimally. Future work could focus on finding tighter search spaces and modeling the general persuasion effect, with both theoretical and practical implications.

We demonstrate the practical implementation of our work using data from a 2024 Republican Primary RCV poll. Calculating optimal strategies naively is computationally prohibitive due to the large number of candidates and the combinatorial nature of RCV elimination. Our approach of eliminating irrelevant candidates (in our case study, the bottom six candidates after Scott were irrelevant in $98\%$ of bootstrap samples with a budget of only $5\%$ additional votes, even as spoilers) and then searching through the remaining structures proves tractable. Our analysis shows that optimal strategies for candidates often involve supporting spoilers, but these strategies may not be robust under uncertainty. Overall, our analysis offers a clearer picture of RCV poll data, exactly telling how far each candidate is from reaching the top, and closely informing strategy dynamics under uncertainly.

Our theoretical analysis further explores manipulation strategies, focusing on cases where a candidate can influence ballot ranks below their own and addition strategies, where arbitrarily ranked votes may be added. This is motivated by real-world instances, such as in NYC, where candidates endorse each other and ask their voters to rank the others after themselves. We find that these endorsements are justified by uncertainty; with perfect information, selfish and spoiler-inducing strategies are optimal. Given the risky nature of enabling spoilers, this suggests that RCV is not easily manipulable in practice. Future work could explore other manipulations and model their associated costs more robustly.

Overall, our work supports the use of STV and RCV in practice: while supporting spoilers and otherwise behaving strategically is theoretically possible, doing so under uncertainty of race conditions or the exact consequences of one's electoral behavior is risky -- both theoretically and empirically. 

\newpage
\bibliographystyle{plainnat}
\bibliography{biblio}

\begin{thebibliography}{48}
\providecommand{\natexlab}[1]{#1}
\providecommand{\url}[1]{\texttt{#1}}
\expandafter\ifx\csname urlstyle\endcsname\relax
  \providecommand{\doi}[1]{doi: #1}\else
  \providecommand{\doi}{doi: \begingroup \urlstyle{rm}\Url}\fi

\bibitem[Arrow(2012)]{arrow2012social}
Kenneth~J Arrow.
\newblock \emph{Social choice and individual values}, volume~12.
\newblock Yale university press, 2012.

\bibitem[Ayadi et~al.(2019)Ayadi, Amor, Lang, and Peters]{ayadi2019single}
Manel Ayadi, Nahla~Ben Amor, J{\'e}r{\^o}me Lang, and Dominik Peters.
\newblock Single transferable vote: Incomplete knowledge and communication
  issues.
\newblock In \emph{18th International Conference on Autonomous Agents and
  MultiAgent Systems (AAMAS 19)}, pages 1288--1296, 2019.

\bibitem[Bartholdi and Orlin(1991)]{bartholdi1991single}
John~J Bartholdi and James~B Orlin.
\newblock Single transferable vote resists strategic voting.
\newblock \emph{Social Choice and Welfare}, 8:\penalty0 341--354, 1991.

\bibitem[Bartholdi~III et~al.(1992)Bartholdi~III, Tovey, and
  Trick]{bartholdi1992hard}
John~J Bartholdi~III, Craig~A Tovey, and Michael~A Trick.
\newblock How hard is it to control an election?
\newblock \emph{Mathematical and Computer Modelling}, 16\penalty0
  (8-9):\penalty0 27--40, 1992.

\bibitem[Baumeister et~al.(2012)Baumeister, Faliszewski, Lang, and
  Rothe]{baumeister2012campaigns}
Dorothea Baumeister, Piotr Faliszewski, J{\'e}r{\^o}me Lang, and J{\"o}rg
  Rothe.
\newblock Campaigns for lazy voters: truncated ballots.
\newblock In \emph{AAMAS}, pages 577--584, 2012.

\bibitem[Blom et~al.(2019)Blom, Stuckey, and Teague]{blom2019toward}
Michelle Blom, Peter~J Stuckey, and Vanessa~J Teague.
\newblock Toward computing the margin of victory in single transferable vote
  elections.
\newblock \emph{INFORMS Journal on Computing}, 31\penalty0 (4):\penalty0
  636--653, 2019.

\bibitem[Blom et~al.(2020)Blom, Conway, Stuckey, and Teague]{blom2020did}
Michelle Blom, Andrew Conway, Peter~J Stuckey, and Vanessa~J Teague.
\newblock Did that lost ballot box cost me a seat? computing manipulations of
  stv elections.
\newblock In \emph{Proceedings of the AAAI Conference on Artificial
  Intelligence}, volume~34, pages 13235--13240, 2020.

\bibitem[Boehm(2023)]{Boehm2023}
Eric Boehm.
\newblock How ranked-choice voting would sort the republican primary field,
  2023.
\newblock URL
  \url{https://reason.com/2023/10/06/how-ranked-choice-voting-would-sort-the-republican-primary-field/}.
\newblock Accessed: 2024-01-25.

\bibitem[Brandt et~al.(2016)Brandt, Conitzer, Endriss, Lang, and
  Procaccia]{brandt2016handbook}
Felix Brandt, Vincent Conitzer, Ulle Endriss, J{\'e}r{\^o}me Lang, and Ariel~D
  Procaccia.
\newblock \emph{Handbook of computational social choice}.
\newblock Cambridge University Press, 2016.

\bibitem[Brill and Conitzer(2015)]{brill2015strategic}
Markus Brill and Vincent Conitzer.
\newblock Strategic voting and strategic candidacy.
\newblock In \emph{Proceedings of the AAAI conference on artificial
  intelligence}, volume~29, 2015.

\bibitem[Bustos(2023)]{Bustos2023}
Joseph Bustos.
\newblock Haley is solidly second in sc. what will it take for her to overtake
  trump in the primary?, 2023.
\newblock URL
  \url{https://www.thestate.com/news/politics-government/article281796918.html}.
\newblock Accessed: 2024-01-25.

\bibitem[Clark(2020)]{clark2020rank}
Jesse Clark.
\newblock Rank deficiency? analyzing the costs and benefits of single-winner
  ranked-choice voting.
\newblock 2020.

\bibitem[Conitzer et~al.(2007)Conitzer, Sandholm, and
  Lang]{conitzer2007elections}
Vincent Conitzer, Tuomas Sandholm, and J{\'e}r{\^o}me Lang.
\newblock When are elections with few candidates hard to manipulate?
\newblock \emph{Journal of the ACM (JACM)}, 54\penalty0 (3):\penalty0 14--es,
  2007.

\bibitem[Dawsey et~al.(2024)Dawsey, Arnsdorf, and
  Reston]{DawseyArnsdorfReston2024}
Josh Dawsey, Isaac Arnsdorf, and Maeve Reston.
\newblock Trump takes another stride toward gop nomination with desantis
  dropout.
\newblock \emph{The Washington Post}, 2024.
\newblock URL
  \url{https://www.washingtonpost.com/politics/2024/01/21/trump-gop-nomination-desantis-endorsement/}.
\newblock Accessed: 2024-01-25.

\bibitem[Deshpande et~al.(2023)Deshpande, Ludden, and
  Jacobson]{deshpande2023votemandering}
Sanyukta~P Deshpande, Ian~G Ludden, and Sheldon~H Jacobson.
\newblock Votemandering: Strategies and fairness in political redistricting.
\newblock \emph{arXiv preprint arXiv:2308.07414}, 2023.

\bibitem[Dickerson et~al.(2023)Dickerson, Martin, and
  McCune]{dickerson2023empirical}
Mallory Dickerson, Erin Martin, and David McCune.
\newblock An empirical analysis of the effect of ballot truncation on
  ranked-choice electoral outcomes.
\newblock \emph{arXiv preprint arXiv:2306.05966}, 2023.

\bibitem[Donovan et~al.(2019)Donovan, Tolbert, and Gracey]{donovan2019self}
Todd Donovan, Caroline Tolbert, and Kellen Gracey.
\newblock Self-reported understanding of ranked-choice voting.
\newblock \emph{Social Science Quarterly}, 100\penalty0 (5):\penalty0
  1768--1776, 2019.

\bibitem[Douthat(2023)]{Douthat2023}
Ross Douthat.
\newblock Trump’s rivals are facing loserdom. they have a long-shot option.
\newblock \emph{The New York Times}, 2023.
\newblock URL
  \url{https://www.nytimes.com/2023/09/30/opinion/trump-gop-election-2024.html}.
\newblock Accessed: 2024-01-25.

\bibitem[Dutta et~al.(2001)Dutta, Jackson, and Le~Breton]{dutta2001strategic}
Bhaskar Dutta, Matthew~O Jackson, and Michel Le~Breton.
\newblock Strategic candidacy and voting procedures.
\newblock \emph{Econometrica}, 69\penalty0 (4):\penalty0 1013--1037, 2001.

\bibitem[ENNIS(2024)]{RCVNightmare2024}
CHAD ENNIS.
\newblock Ranked-choice voting is an elections administration nightmare, 2024.
\newblock URL
  \url{https://thehill.com/opinion/campaign/4042449-ranked-choice-voting-is-an-elections-administration-nightmare/}.
\newblock Accessed: 2024-01-25.

\bibitem[Enos and Fowler(2018)]{enos2018aggregate}
Ryan~D Enos and Anthony Fowler.
\newblock Aggregate effects of large-scale campaigns on voter turnout.
\newblock \emph{Political Science Research and Methods}, 6\penalty0
  (4):\penalty0 733--751, 2018.

\bibitem[{FairVote}(2023)]{FairVoteRCV}
{FairVote}.
\newblock Ranked choice voting, 2023.
\newblock URL
  \url{https://fairvote.org/our-reforms/ranked-choice-voting/?gad_source=1}.
\newblock Accessed: 2024-01-25.

\bibitem[{FairVote}(2024{\natexlab{a}})]{FairVote2024}
{FairVote}.
\newblock Rcv in new york city: Cast vote record analysis, 2024{\natexlab{a}}.
\newblock URL
  \url{https://fairvote.org/rcv_in_new_york_city/#campaign-analysis}.
\newblock Accessed: 2024-01-25.

\bibitem[{FairVote}(2024{\natexlab{b}})]{FairVoteRCVUSE}
{FairVote}.
\newblock Where is ranked choice voting used.
\newblock
  \url{https://fairvote.org/our-reforms/ranked-choice-voting-information},
  2024{\natexlab{b}}.
\newblock Accessed: 2024-01-25.

\bibitem[Faliszewski et~al.(2009)Faliszewski, Hemaspaandra, and
  Hemaspaandra]{faliszewski2009hard}
Piotr Faliszewski, Edith Hemaspaandra, and Lane~A Hemaspaandra.
\newblock How hard is bribery in elections?
\newblock \emph{Journal of artificial intelligence research}, 35:\penalty0
  485--532, 2009.

\bibitem[Faliszewski et~al.(2010)Faliszewski, Hemaspaandra, and
  Hemaspaandra]{faliszewski2010using}
Piotr Faliszewski, Edith Hemaspaandra, and Lane~A Hemaspaandra.
\newblock Using complexity to protect elections.
\newblock \emph{Communications of the ACM}, 53\penalty0 (11):\penalty0 74--82,
  2010.

\bibitem[FRA(2024)]{fra}
FRA.
\newblock Fair representation act, March 2024.
\newblock URL
  \url{https://www.congress.gov/bill/118th-congress/house-bill/7740}.

\bibitem[Garber(2023)]{Garber2024Harlem}
Nick Garber.
\newblock 5 harlem council candidates endorse each other in rare move, 2023.
\newblock URL
  \url{https://patch.com/new-york/harlem/5-harlem-council-candidates-endorse-each-other-rare-move}.
\newblock Accessed: 2024-01-25.

\bibitem[Garg et~al.(2022)Garg, Gurnee, Rothschild, and
  Shmoys]{garg2022combatting}
Nikhil Garg, Wes Gurnee, David Rothschild, and David Shmoys.
\newblock Combatting gerrymandering with social choice: The design of
  multi-member districts.
\newblock In \emph{Proceedings of the 23rd ACM Conference on Economics and
  Computation}, pages 560--561, 2022.

\bibitem[Gibbard(1973)]{gibbard1973manipulation}
Allan Gibbard.
\newblock Manipulation of voting schemes: a general result.
\newblock \emph{Econometrica: journal of the Econometric Society}, pages
  587--601, 1973.

\bibitem[Graham-Squire and McCune(2022)]{graham2022mathematical}
Adam Graham-Squire and David McCune.
\newblock A mathematical analysis of the 2022 alaska special election for us
  house.
\newblock \emph{arXiv preprint arXiv:2209.04764}, 2022.

\bibitem[Graham-Squire and McCune(2023)]{graham2023examination}
Adam Graham-Squire and David McCune.
\newblock An examination of ranked-choice voting in the united states,
  2004--2022.
\newblock \emph{Representation}, pages 1--19, 2023.

\bibitem[Honan(2023)]{Honan2023Harlem}
Katie Honan.
\newblock In tight harlem race, pair of candidates say 'rank us' to push out
  third, 6 2023.
\newblock URL
  \url{https://www.thecity.nyc/2023/06/14/harlem-city-council-election-taylor-salaam-dickens/}.
\newblock Accessed: 2024-01-25.

\bibitem[Jelvani and Marian(2022)]{jelvani2022identifying}
Alborz Jelvani and Am{\'e}lie Marian.
\newblock Identifying possible winners in ranked choice voting elections with
  outstanding ballots.
\newblock In \emph{Proceedings of the AAAI Conference on Human Computation and
  Crowdsourcing}, volume~10, pages 114--123, 2022.

\bibitem[Kilgour et~al.(2020)Kilgour, Gr{\'e}goire, and
  Foley]{kilgour2020prevalence}
D~Marc Kilgour, Jean-Charles Gr{\'e}goire, and Ang{\`e}le~M Foley.
\newblock The prevalence and consequences of ballot truncation in ranked-choice
  elections.
\newblock \emph{Public Choice}, 184:\penalty0 197--218, 2020.

\bibitem[Magrino et~al.(2011)Magrino, Rivest, and Shen]{magrino2011computing}
Thomas~R Magrino, Ronald~L Rivest, and Emily Shen.
\newblock Computing the margin of victory in $\{$IRV$\}$ elections.
\newblock In \emph{2011 Electronic Voting Technology Workshop/Workshop on
  Trustworthy Elections (EVT/WOTE 11)}, 2011.

\bibitem[McCune and Wilson(2023)]{mccune2023ranked}
David McCune and Jennifer Wilson.
\newblock Ranked-choice voting and the spoiler effect.
\newblock \emph{Public Choice}, pages 1--32, 2023.

\bibitem[Meir et~al.(2008)Meir, Procaccia, Rosenschein, and
  Zohar]{meir2008complexity}
Reshef Meir, Ariel~D Procaccia, Jeffrey~S Rosenschein, and Aviv Zohar.
\newblock Complexity of strategic behavior in multi-winner elections.
\newblock \emph{Journal of Artificial Intelligence Research}, 33:\penalty0
  149--178, 2008.

\bibitem[{NYC Board of Elections}(2024)]{NYCBoardOfElections2024}
{NYC Board of Elections}.
\newblock Election results summary, 2024.
\newblock URL \url{https://vote.nyc/page/election-results-summary}.
\newblock Accessed: 2024-01-25.

\bibitem[Otis and Laverty(2023)]{OtisLaverty2023}
Deb Otis and Sabrina Laverty.
\newblock New ranked-choice poll examines the republican presidential field
  after second debate, 2023.
\newblock URL
  \url{https://fairvote.org/new-ranked-choice-poll-examines-the-republican-presidential-field-after-second-debate/}.
\newblock Accessed: 2024-01-25.

\bibitem[{Proportional Representation Foundation}(2024)]{PRF2024WIGM}
{Proportional Representation Foundation}.
\newblock Reference wigm rule, 2024.
\newblock URL
  \url{https://prfound.org/resources/reference/reference-wigm-rule/}.
\newblock Accessed: 2024-01-26.

\bibitem[Satterthwaite(1975)]{satterthwaite1975strategy}
Mark~Allen Satterthwaite.
\newblock Strategy-proofness and arrow's conditions: Existence and
  correspondence theorems for voting procedures and social welfare functions.
\newblock \emph{Journal of economic theory}, 10\penalty0 (2):\penalty0
  187--217, 1975.

\bibitem[Sforza(2023)]{Sforza2023}
Lauren Sforza.
\newblock Christie doesn't rule out working with haley to defeat trump in gop
  primary.
\newblock \emph{The Hill}, 2023.
\newblock URL
  \url{https://thehill.com/homenews/campaign/4364571-christie-doesnt-rule-out-working-with-haley-to-defeat-trump-in-gop-primary/}.
\newblock Accessed: 2024-01-25.

\bibitem[Shirani-Mehr et~al.(2018)Shirani-Mehr, Rothschild, Goel, and
  Gelman]{shirani2018disentangling}
Houshmand Shirani-Mehr, David Rothschild, Sharad Goel, and Andrew Gelman.
\newblock Disentangling bias and variance in election polls.
\newblock \emph{Journal of the American Statistical Association}, 113\penalty0
  (522):\penalty0 607--614, 2018.

\bibitem[Tomlinson et~al.(2023)Tomlinson, Ugander, and
  Kleinberg]{tomlinson2023ballot}
Kiran Tomlinson, Johan Ugander, and Jon Kleinberg.
\newblock Ballot length in instant runoff voting.
\newblock In \emph{Proceedings of the AAAI Conference on Artificial
  Intelligence}, volume~37, pages 5841--5849, 2023.

\bibitem[Walsh(2010)]{walsh2010empirical}
Toby Walsh.
\newblock An empirical study of the manipulability of single transferable
  voting.
\newblock In \emph{ECAI}, volume~10, pages 257--262, 2010.

\bibitem[Xia(2012)]{xia2012computing}
Lirong Xia.
\newblock Computing the margin of victory for various voting rules.
\newblock In \emph{Proceedings of the 13th ACM conference on electronic
  commerce}, pages 982--999, 2012.

\bibitem[Xia et~al.(2010)Xia, Conitzer, and Procaccia]{xia2010scheduling}
Lirong Xia, Vincent Conitzer, and Ariel~D Procaccia.
\newblock A scheduling approach to coalitional manipulation.
\newblock In \emph{Proceedings of the 11th ACM conference on Electronic
  commerce}, pages 275--284, 2010.

\end{thebibliography}

\newpage
\appendix

\section{Introduction: Details}\label{app: RCV_examples}
\begin{exmp} \label{ex: campaign}
    Consider candidates $\mathcal{C} = \{A,B,C,D\}$, contesting in a 2-member district, with a total voter population of 100 (Droop quota = $34$). The original election profile is given in Table \ref{tab: ex_1_original}, leading to the social choice order $f_{original} = A>C>B>D$. 
    \begin{table}[ht]
    \centering
\begin{tabular}{|c|c|c|c|c|c|}
\hline
  & Round 1 & transfer & Round 2 & transfer & Round 3 \\ \hline
A & 30      & 10        & 40 (W)  & -        & -       \\ \hline
B & 25      & 3        & 28      & 1        & 29      \\ \hline
C & 23      & 9        & 32      & 5        & 37 (W)  \\ \hline
D & 22 (L)  & -        & -       & -        & -       \\ \hline
\end{tabular}
\caption{Original Election profile}
\label{tab: ex_1_original}
\end{table} 
\normalsize
Now, candidate $B$ strategizes by adding two votes to candidate $D$, i.e. adds two  [$D$] ballots. Now, the election profile is as given in Table \ref{tab: ex_1_campaign}, leading to the updated order $f_{new} = B>A>D>C$.
\begin{table}[ht]

\centering
\begin{tabular}{|c|c|c|c|c|c|}
\hline
  & Round 1 & transfer & Round 2 & transfer & Round 3 \\ \hline
A & 30      & 4        & 34      & 3        & 37 (W)  \\ \hline
B & 25      & 15       & 40 (W)  & -        & -       \\ \hline
C & 23 (L)  & -        & -       & -        & -       \\ \hline
D & 24      & 4        & 28      & 3        & 31      \\ \hline
\end{tabular}
\caption{Election profile after smart campaigning.}
\label{tab: ex_1_campaign}
\end{table}
\normalsize
Observe that if $B$ wishes to win by adding self-votes, it needs to add 9, 9, or 6 votes to win in rounds 1, 2, and 3 respectively. Using the surplus transfer effectively, $B$ can be in the winning set by adding just 2 extra votes. This works as $B$ is able to gauge that $C$'s major voter base also supports $B$. If $C$ loses before 2 candidates secure win, $B$ is assured to gain the support. Note that this information is not obvious from the original election profile but comes from the understanding of the entire election data, which needs to be navigated carefully and extensively.
\end{exmp}
\begin{exmp}\label{ex: campaign_flip}
    Again consider candidates $\mathcal{C} = \{A,B,C,D\}$, contesting in a 2-member district. The original election profile is given in Table \ref{tab: ex_2_original}, leading to the social choice order $f_{original} = A>B>C>D$. 
    \begin{table}[ht]
    \centering
    
\begin{tabular}{|c|c|c|c|c|c|}
\hline
  & Round 1 & transfer & Round 2 & transfer & Round 3 \\ \hline
A & 23      & 13       & 36 (W)  & -        & -       \\ \hline
B & 25      & 9        & 34      & 2        & 36  (W)    \\ \hline
C & 30      & 0        & 30      & 0        & 30  \\ \hline
D & 22 (L)  & -        & -       & -        & -       \\ \hline
\end{tabular}
\caption{Original Election profile}
\label{tab: ex_2_original}
\end{table} 
\normalsize
Now, one first-choice vote for A changes their vote to a first-choice vote for D. The election profile is as given in Table \ref{tab: ex_2_flip}, leading to the updated order $f_{new} = D>C>B>A$.

\begin{table}[ht]
\centering

\begin{tabular}{|c|c|c|c|c|c|}
\hline
  & Round 1 & transfer & Round 2 & transfer & Round 3 \\ \hline
A & 22   (L)   & -       & -  & -        & -       \\ \hline
B & 25      & 5        & 30      & 2        & 32      \\ \hline
C & 30      & 0        & 30      & 4        & 34 (W)  \\ \hline
D & 23  & 17       & 40 (W)       & -        & -       \\ \hline
\end{tabular}
\caption{Election profile after one vote is updated}
\label{tab: ex_2_flip}
\end{table} 
\normalsize
Here, the change is the smallest possible, i.e., one vote in 100. Another interesting point is that here, A and D voters are largely voting in blocs and support both A and D in their first two choices.
\end{exmp}
\section{Model: Details} \label{app: protocolproof}

\subsection{Social welfare function}
We show that the mechanism defined in \ref{sec: model} is equivalent to standard definitions of STV, referred to as STV protocol here. Thus, the mechanism forms a well-defined social welfare function for STV (Arrowian definition \cite{arrow2012social}) as it produces a collective social choice order, aggregating social preferences.
\begin{restatable}{theorem}{stvprotocol} \label{thm: stvprotocol}
For any $n$ candidates contesting for $k$ seats, function $\mathcal{F}_{STV}$  is a well-defined social welfare function for STV.
\end{restatable}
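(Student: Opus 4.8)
I would prove two things: (a) well-definedness --- that the mechanism of \Cref{sec: setup} sends every ballot profile to a single complete strict order $f$ over $\mathcal{C}$ --- and (b) consistency --- that the first $k$ entries of $f$ are exactly the winner set produced by a standard STV protocol, so that $\mathcal{F}_{STV}$ is a tie-broken refinement of STV and, in particular, an Arrovian social welfare function (it outputs a collective social choice order for every profile).

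For (a), the plan is to induct on the round index $r = 1, \dots, n$, maintaining the invariant that at the start of round $r$ the array $f$ has its top $w$ and bottom $\ell$ positions filled with $w + \ell = r-1$, that the active set $C$ has $n - r + 1$ candidates, and that the transferred tallies $\{\tilde V^i_{a,\dots,j}\}_{I \subset C}$ are a deterministic function of the profile and of the round outcomes recorded so far (this uses only that the surplus-transfer rule is deterministic and that re-processing out-of-contest transfers composes consistently). In round $r$ exactly one branch fires: if $\max_{c \in C} \tilde V^1_c \ge Q$ the fixed tie-breaking order names a unique round-winner, placed at the topmost empty slot and removed from $C$ with a deterministic surplus distribution; otherwise the tie-breaking order names the unique argmin of $\tilde V^1_c$, placed at the bottommost empty slot and removed from $C$ with all its votes transferred. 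Either way $|C|$ drops by exactly one, so after $n$ rounds $C = \emptyset$, all slots of $f$ hold distinct candidates, and $f$ (together with its sequence of outcomes) is uniquely determined.

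For (b), I would fix a comparison protocol $\Pi$: while seats remain, elect every candidate at or above the Droop quota $Q$ (processing their surpluses in decreasing order of tally), and otherwise eliminate the fewest-vote candidate, halting once $k$ candidates are elected or the number of active candidates equals the number of open seats. The steps: (i) from $(k+1)Q > |\mathcal{B}|$, at most $k$ candidates are ever simultaneously $\ge Q$, and since a quota-winner freezes weight $Q$ that is never re-transferred, at most $k$ candidates are ever elected by quota under either rule; (ii) a commutation step showing that our one-at-a-time, highest-tally-first processing of a batch of above-quota candidates reaches the same configuration (remaining ballot weights and set of elected candidates) as $\Pi$'s batch step; (iii) an induction on $\Pi$'s stages showing our mechanism reproduces $\Pi$'s interleaved sequence of elections and eliminations candidate-for-candidate (same winner $=$ highest tally $\ge Q$, same loser $=$ lowest tally, same tie-breaking), so the quota-winners land in the top $w \le k$ slots of $f$ in both; (iv) when $\Pi$ halts with $w < k$ elected and $k - w$ survivors, our mechanism then eliminates precisely those $k-w$ survivors into slots $k, k-1, \dots, w+1$, so the top $k$ of $f$ equals $\Pi$'s elected set.

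I expect the main obstacle to be step (ii): one must show that successive surplus transfers among the simultaneously-elected candidates yield a configuration independent of processing order (and matching $\Pi$'s convention), not merely one that happens to have the same elected set in easy instances. I would handle it by per-ballot weight bookkeeping --- showing the total weight eventually delivered to each still-active candidate after the whole batch is a symmetric function of the batch, since a ballot's contribution to a later transfer is its running weight times a ratio depending only on tallies that stay above $Q$ until that candidate is processed. The remaining ingredients --- the two invariants of (b), the leftover-seat identification, and all of (a) --- are routine bookkeeping once the tie-breaking order is fixed.
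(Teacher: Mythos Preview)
Your decomposition is sound and more explicit than the paper's, but the paper takes a shorter route for part (b). Rather than proving commutation (your step (ii)), the paper leans on a monotonicity observation: surplus transfers never \emph{decrease} any active candidate's tally, so once a candidate has $\tilde V^1_c \ge Q$ in some round it stays above $Q$ in every later round until elected. Hence if several candidates simultaneously exceed $Q$ in the standard protocol, the mechanism will elect them all in consecutive rounds with no intervening elimination. Coupled with the counting argument (each quota-win removes at least $Q$ weight, so at most $k$ quota-wins are possible), this already pins the standard-STV winner set inside the top $k$ of $f$; since both sets have size $k$, they coincide. The paper does close with the assertion that ``the mechanism mirrors the STV rule for surplus transfer, resulting in the same vote counts after each round'' --- essentially your (ii) --- but does not prove it; the monotonicity step is doing the real work.

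One caution on your proposed proof of (ii): the claim that the weight delivered to each still-active candidate is a symmetric function of the batch is fragile under WIGM, because a transfer from $C_1$ into fellow batch-member $C_2$ alters $C_2$'s surplus fraction $(\tilde V^1_{C_2}-Q)/\tilde V^1_{C_2}$, so the downstream weights genuinely depend on processing order. Since you already defined $\Pi$ to process surpluses in decreasing-tally order (the same order the mechanism uses), the cleaner fix is to drop the symmetry argument and instead note that, given monotonicity, the two processes execute literally the same sequence of single-candidate transfers --- $\Pi$ just labels several of them as one ``stage.'' That makes (ii) immediate and collapses (iii) to the paper's argument. Your part (a) and step (iv) are fine and match what the paper treats as routine.
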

\begin{proof}
The Single Transferable Vote (STV) protocol and the mechanism both result in two exclusive groups: winners ($k$ candidates) and losers ($n-k$ candidates). The mechanism also creates an order ($f$) of all candidates, with the top $k$ being winners. We need to show that both systems select the same winners. In both systems, a single candidate is eliminated in a round only when there are no candidates with votes greater than $Q$. Evident from both definitions, this candidate becomes part of the winning set only when $n-k$ candidates are already eliminated. Then, the only difference is that the STV protocol can have multiple winners in a round if they all exceed $Q$ votes, while the mechanism processes one candidate per round. We now illustrate that if multiple candidates have over $Q$ votes in a round, they all become part of the top $k$ candidates in the social choice order $f$.

We start with $T$ votes total. 
Suppose two candidates $C_1$ and $C_2$ have $v_1^r \geq v_2^r \geq Q$ votes in round $r$. $C_1$ wins in the mechanism, securing the highest available position in order $f$. Future surplus vote transfers will continue to keep $v_2^{r'}\geq Q$, as $v_2^{r'} \geq v_2^r$ for all rounds $r'>r$. Therefore, no eliminations will happen by the mechanism until $C_2$ wins, since we always have a candidate with more than $Q$ votes in each round until then. This is smoothly generalized 3 or more candidates securing more than Q votes in a round.

Observe that an elimination round keeps the total number of votes the same or less, while a winning round removes at least $T/(k+1)$ votes (as the surplus transfer is made only over $Q$ votes). More votes are removed from the election if some surplus votes can't be transferred due to incomplete ballots.  %
In any round, the maximum number of votes is $T$, and hence, at most $k$ candidates can have more than $Q$ votes. Each winning round removes at least $T/(k+1)$ votes and 1 seat from the contest. Thus, at most 1 candidate can have more than $Q$ votes when only one position remains to be filled (amongst the top $k$). 

Lastly, it's clear that the mechanism mirrors the STV rule for surplus transfer, resulting in the same vote counts after each round. This confirms that a candidate that is a winner in the STV protocol must be within the top $k$ candidates in the order $f$.
\end{proof}

\subsection{Generating structure-specific constraints} \label{app: algo_constraints}

In this section, we present the construction of structure-specific constraints for STV elections.  As discussed in Section \ref{sec: setup}, a structure consists of two parts: the order and the sequences, which provide information about the aggregated social choice order and round-specific details, respectively.  These structures uniquely characterize the combinatorial data associated with STV elections and enable the partition of voter data space into regions demarcated by corresponding constraints. The constraints are used to determine the STV result for any point in the voter data space and to find the thresholds that ensure the optimality of any structure.

To facilitate this process, we introduce two algorithms:  Algorithm \ref{algo: round_update} and 
\ref{algo: constraints}, that together provide the region-defining constraints for any structure.  Algorithm \ref{algo: round_update} is used to update the vote tallies in every round, by transferring votes from the last processed candidate. The vote tally stores the path of each vote that contributes to the total, and not just the number of contributing votes. It also stores the fractions applied to each unique path, in case the path is contributed by a winning candidate's surplus. If we are working with ballot data, we may simplify the third \textit{for} loop to go over all ballots that need to be transferred; this way we don't iterate over all permutations of the out-of-contest candidates.

Algorithm \ref{algo: constraints} collects the constraints that define the outcome of each round based on the given structure. The final output of Algorithm \ref{algo: constraints} is a set of region-defining constraints, represented by the aggregated variables $V^i_{a,b,...}$. By inputting the relevant data into these variables, one can evaluate the satisfiability of any given structure for a specific point in the voter data space.

\begin{algorithm} 
\caption{Round Update for STV Elections} \label{algo: round_update}
\begin{algorithmic}[1]
\REQUIRE $structure, processed\_candidates, active\_candidates, vote\_tally$
\ENSURE Updated $vote\_tally$

\STATE $last\_checked \gets$ last candidate from $processed\_candidates$

\FOR {each vote path $path$ in $last\_checked$'s $vote\_tally$}
\STATE $path\_surplus \gets$ popped out fractions of surplus votes from $path$
    \FOR {each candidate $c$ in $active\_candidates$}
        \FOR {each unique permutation $perm$ of $processed\_candidates$ not overlapping with $path$}
            \IF{$structure[last\_checked] =$ lose}
                \STATE $new\_path \gets [path, perm, c, path\_surplus]$
            \ELSE 
                \STATE $new\_path\_surplus \gets$ fractions of surplus votes from candidate $last\_checked$
                \STATE $new\_path \gets [path, perm, c, path\_surplus, new\_path\_surplus]$
            \ENDIF
            \STATE Add $new\_path$ to $c$'s vote paths in $vote\_tally$
        \ENDFOR
    \ENDFOR
\ENDFOR
\RETURN $vote\_tally$
\end{algorithmic}
\end{algorithm}

\begin{algorithm} 
\caption{Structure-Specific Constraints for STV}\label{algo: constraints}
\begin{algorithmic}[1]
\REQUIRE $structure, set\_candidates, vote\_tally, droop\_quota$
\ENSURE Constraints for vote tallies
\STATE $processed \gets$ Empty list \COMMENT{List of processed candidates}
\STATE $active \gets$ Copy of $set\_candidates$ \COMMENT{List of active candidates}
\STATE $constraints \gets$ Empty list \COMMENT{List of constraints}

\FOR {each candidate $C$ in sequence[$structure$] }
    \STATE $vote\_tally \gets$ Round Update for STV Elections [$structure, processed, active, vote\_tally$]
    
    \IF{$structure[C] =$ lose}
        \STATE Append ``$droop\_quota \geq vote\_tally[active] \geq vote\_tally[C]$" to $constraints$
    \ELSE
        \STATE Append ``$vote\_tally[C] \geq vote\_tally[active], droop\_quota$" to $constraints$
    \ENDIF
    
    \STATE Append $C$ to $processed$
    \STATE Remove $C$ from $active$
\ENDFOR

\RETURN $constraints$
\end{algorithmic}
\end{algorithm}
\subsubsection{Illustration} \label{ex: constraints}
We now illustrate the space of structures bounded by constraints as guaranteed by \cref{thm: combi_data}, employing  Algorithms \ref{algo: round_update} and \ref{algo: constraints}. Continuing with the setup of Example \ref{ex: structures}, we write constraints for structure ($A>B>C>D$, [L, W, L, W]), i.e. with sequence 6:
\begin{enumerate}
    \item First round: D gets eliminated. No candidate gets more than $Q$ and $D$ gets the lowest.
    \[Q >  V_{A}^1, V_{B}^1, V_{C}^1 > V_{D}^1\]
    \item Second round: A wins after D's votes are transferred to other candidates. A gets more than the remaining candidates as well as $Q$.
    \[V_{A}^1+V_{D,A}^2 >  V_{B}^1+V_{D,B}^2, V_{C}^1+V_{D,C}^2, \ Q\]
    \item Third round: C gets eliminated. A's surplus gets distributed amongst B and C, including the surplus corresponding to D's transfer to A. Both B and C receive less than Q, and C has the lowest.
    \begin{align*}
        Q > V_{B}^1+V_{D,B}^2 + ( V_{A,B}^2 + V_{A,D,B}^3 + V_{D,A,B}^3 ) (1- \frac{Q}{V_{A}^1+V_{D,A}^2} )> \\
    V_{C}^1+V_{D,C}^2 + ( V_{A,C}^2 + V_{A,D,C}^3 + V_{D,A,C}^3 ) (1- \frac{Q}{V_{A}^1+V_{D,A}^2} )
    \end{align*}
    
\end{enumerate}

As the constraints show, the non-linearity is introduced when the surplus corresponding to a winning candidate is carried over to the next round. As far as this set of non-linear constraints is satisfied, the election outcome is fixed at ($A>B>C>D$, [L, W, L, W]).

\subsection{Characteristics of the structure framework}

Although the structures specify round-winning and losing candidates, they don't explicitly define election-winning candidates. When a significant number of ballots are partially filled, no candidates may surpass the Droop quota, leading to the last  $k$ eliminated candidates becoming the winners, as they occupy the first $k$ positions in the order. %
Lemma \ref{lem: round-election} elaborates on the relation between election-winning candidates and round-winning candidates within the mechanism, and similarly on the losing candidates. %

\begin{restatable}{lemma}{lemroundelection} \label{lem: round-election}
    A round-losing, i.e., an eliminated candidate may be an election winner, but a round-winning candidate is never an election-loser.
\end{restatable}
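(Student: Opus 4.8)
The plan is to establish the two halves of the statement separately, both directly from the mechanism's definition. For the first half --- that an eliminated candidate may still be an election winner --- it suffices to exhibit or point to a single example. In fact the excerpt already contains one: in Example \ref{ex: campaign} (or in the framework of Example \ref{ex: structures}, any sequence beginning with $L$ such as $[L,W,L,W]$), a candidate is eliminated in an early round yet occupies one of the first $k$ positions of $f$ because the election runs out of candidates before $k$ wins are declared. So the argument here is just to observe that when enough ballots are partially filled (or simply when no candidate reaches $Q$ in the final rounds), the mechanism places the last $k$ eliminated candidates into the top $k$ slots of $f$; hence ``round-losing'' does not preclude ``election-winning.''

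The substantive half is the claim that a round-winning candidate is never an election loser, i.e., any candidate declared a round-winner by the mechanism ends up among the top $k$ of $f$. First I would recall from the mechanism's definition that when a candidate $T$ is declared a round-winner in some round $r$, it is placed in the top-most \emph{available} position of $f$ and removed from $C$; the candidates already holding positions above $T$ are exactly the earlier round-winners. So it suffices to show that at most $k$ candidates are ever declared round-winners over the whole run. This is precisely the counting argument already carried out in the proof of Theorem \ref{thm: stvprotocol}: each time a candidate wins a round, at least $Q = 1 + |\mathcal{B}|/(k+1) > |\mathcal{B}|/(k+1)$ votes are ``consumed'' (removed from the active pool, since only the surplus above $Q$ is transferred, and possibly even more is lost to exhausted ballots), while elimination rounds never increase the active vote total. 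Starting from $|\mathcal{B}|$ total votes, strictly more than $|\mathcal{B}|/(k+1)$ removed per win means strictly fewer than $k+1$ wins can occur, i.e. at most $k$. Therefore every round-winner lands in one of the first $k$ slots of $f$ and is an election winner.

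The one point requiring a little care --- and the place I'd expect to spend the most attention --- is making the ``at least $Q$ votes consumed per win'' bookkeeping airtight: one must confirm that the surplus-transfer rule genuinely removes the $Q$ votes underpinning the win from the live count (they are not double-counted in any later round), and that transfers received by still-active candidates from an eliminated candidate do not somehow reintroduce ``new'' votes beyond the original $|\mathcal{B}|$. Both follow from the fact that the fractional (Weighted Inclusive Gregory) transfer only redistributes weight already present on ballots and never creates weight, so the active total is nonincreasing across eliminations and drops by at least $Q$ across wins; this is exactly the invariant used in Theorem \ref{thm: stvprotocol}, so I would invoke it rather than re-derive it. With that invariant in hand, the bound ``at most $k$ round-winners'' is immediate, and the lemma follows.
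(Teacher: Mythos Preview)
Your argument for the substantive half is correct and is exactly the paper's: both reduce to the vote-counting invariant (each round-win removes at least $Q$ from the active pool, eliminations never add), bounding the number of round-wins by $k$; the paper phrases it as a contradiction from a hypothetical $(k{+}1)$st round-win, you phrase it as an a priori bound, but the arithmetic is identical and you are right that it already appears in the proof of Theorem~\ref{thm: stvprotocol}.

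One small correction on the first half: the specific examples you cite do not actually witness the phenomenon. In Example~\ref{ex: campaign} (with $k=2$) the top two positions are both filled by round-winners $A$ and $C$; and in Example~\ref{ex: structures} the sequence $[L,W,L,W]$ has round-winners $A,B$ occupying positions $1,2$. The sequence you want is $(L,L,L,W)$ with $k=2$: there $B$ is eliminated in round $3$ yet lands at position $2$ of $f$. Your general reasoning (partial ballots can keep everyone below $Q$, so the last $k$ eliminated candidates fill the top $k$ slots) is sound; only the pointer is off. The paper's treatment of this half is slightly more detailed than yours: it shows via contrapositive that with \emph{fully} ranked ballots round-losing and election-losing coincide, thereby isolating partial ballots as the necessary ingredient. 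For the bare ``may'' claim in the lemma your existence argument suffices.
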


\begin{proof}
    We first show that round-winning candidates are always election-winning candidates. Assuming the contrary, there must be a candidate who wins a round after $k$ candidates win the rounds, i.e., after the first $k$ positions in the order are filled in $f$. Every win eliminates at least $Q$ votes from the contest (more than $Q$ if some ranked votes are not transferable, being partially ranked).  Then the number of votes in the contest is at most (total number of ballots) - (total number of candidates who won a round earlier $\times \  Q$), i.e., $T - k \left( \frac{T}{k+1}+1 \right) = T \left(\frac{1}{k+1}\right) - k$, which is always lesser than $Q = T \left(\frac{1}{k+1}\right)+1$. No candidate can then secure a win, providing contradiction.

For round-losing candidates, the meanings always coincide when all ballots are completely filled, i.e., full-length, but not necessarily otherwise. We show this through contrapositives: if a candidate is election-winning, then they are also round-winning, i.e., not round-losing.  The only way this doesn't happen is when $n-k$ candidates lose (all occupying the bottom $n-k$ positions in the partially filled order $f$), and the next round-losing candidate has to get a position in the top $k$. Assume $w$ candidates have won until now, filling $w$ top positions. Now, exactly $T-wQ$ votes remain in the contest when all surplus votes are transferable, i.e.,  all ballots are full-ranked, allowing transfers. If not, less than $T-wQ$ votes remain. If the current round has an elimination, then all remaining $k-w$ candidates must have less than $Q$ votes. This means the total votes in the system are at most $(k-w)Q$. But, $(k-w)Q>T-wQ$ implies  $Q = \left(\frac{T}{k+1}+1\right) > T/k$, giving a contradiction. Thus, we show that a round-losing candidate may be an election winner, but a round-winning candidate can never be an election-losing candidate. 
\end{proof}

\section{Algorithmic Framework: Details}

\subsection{Optimizing to reach an outcome structure}\label{app:allocation_algo}
\begin{algorithm}
\caption{Allocation Rule} \label{algo: allocation}
\begin{algorithmic}[1]
\REQUIRE  $\mathcal{C}$, voter data $\{V\}_{\mathcal{C}}$, Constants $B$, $Q$
\ENSURE Budget allocation, under feasibility
\STATE \textbf{function} \textsc{SmartAllocation}( $\mathcal{C}$, $\{V\}_{\mathcal{C}}$, $B$, $Q$)
  \STATE Initialize $\text{votes}, \text{ votes}_{\text{new}} \leftarrow 0$.

  \FOR{processing each round}
    \STATE $C_s \leftarrow $ desired candidate for elimination or win in this round
      \IF{$C_s$ needs elimination}
       \IF{all in-contest candidates have votes $< Q$}
        \STATE Add $\text{votes}_{\text{new}} \ \forall  \ C_i : \  V^1_{C_i} + \text{votes}_{\text{new}} \geq V^1_{C_s} + 1$.
        \ELSE
        \RETURN infeasible
        \COMMENT{Increase $Q$ (i.e.,  budget $B$).}
        \ENDIF
        \ENDIF
      \IF{$C_s$ needs to win}
       \IF{all in-contest candidates have votes $< Q$}
        \STATE Add $\text{votes}_{\text{new}}\text{ to } C_s : \  V^1_{C_s} + \text{votes}_{\text{new}} \geq Q +1$.
        \ELSE
        \STATE Add $\text{votes}_{\text{new}} \text{ to } C_s : \  V^1_{C_s} + \text{votes}_{\text{new}}\geq \max(V^1_{C_i}) + 1$.
        \ENDIF
        \IF{ $C_{last}$ doesn't transfer to $C_s$}
        \STATE Add $\text{votes}_{\text{new}}  [C_{last}, C_s] : \  V^1_{C_{last}}\leftarrow V^1_{C_{last}} + 1$.        \IF{$margin(V^1_{C_{last}})< 1$} 
        \RETURN infeasible \COMMENT{Increase $Q$ (i.e.,  budget $B$).}
        \ENDIF
        \ENDIF
      \ENDIF
      \STATE  $\text{votes} \leftarrow \text{votes} + \text{votes}_{\text{new}}$ 
      \STATE $\mathcal{C}\gets \mathcal{C}/ C_s$, update  $\{V\}_\mathcal{C} $
      
\ENDFOR
      \IF{votes $\geq$ budget}
       \RETURN feasible \COMMENT{Distribution of added votes}
       \ELSE
       \RETURN infeasible \COMMENT{Increase budget $B$}
    \ENDIF
\end{algorithmic}
\end{algorithm}

\polyefforts*
\begin{proof}

We now describe the allocation rule (Algorithm \ref{algo: allocation}) that is implemented for filling slacks, starting with round 1. While allocating, we break the ties in favor of the candidate who receives votes. Even if candidates have fractional votes from surplus transfers, the additions are integers. For each round, the structure mandates either an elimination or win of a candidate $C_s$, forming two cases:
\begin{itemize}
    \item Case-(L) $C_s$ needs to be eliminated: If $max(C_i)<Q $, i.e., all candidates have less than the quota, add appropriately to $C_i$ such that $C_i \geq C_s+1 $ is satisfied. If $max(C_i)\geq Q$ for any $C_i\in \mathcal{C}$ or $C_s+1\geq Q$, we reach infeasibility and the structure may become feasible only by increasing $Q$, i.e. $B$.
    \item Case-(W)  $C_s$ needs to win: If $max(C_i)<Q $, i.e., all candidates have less than the quota, add to $C_s$ such that $C_s \geq Q +1$ is satisfied. Otherwise if $max(C_i)\geq Q$, we want $C_s \geq max(C_i)+1 $ to be satisfied. 
    In both, we check if the elimination/win before the current round, say of $C_{last}$, allows at least a vote transfer to $C_s$. 
    If so, nothing to do more. If not, we allocate at least one ranked vote [$C_{last}, C_s$] to $C_{last}$, so that $C_s$ doesn't cross $Q$ until the current round. This addition doesn't affect $C_{last}$'s position if  $margin(C_{last})$  ( i.e., the difference between the bottom two or $V_{C_{last}}$ - $Q$ at the time of $C_{last}$'s elimination or win respectively) is positive, given non-degeneracy. However, if $Q-V_{C_{last}} < 0$, we hit infeasibility. 
    This makes sure that only after $C_{last}$ leaves the contest, $C_s$ satisfies $C_s\geq max(C_i)+1$. In other words, this avoids the case where a candidate $C$ receives a budget to win a round and when the election proceeds on this updated voter data after additions, they win a prior round. 

    \end{itemize}
In both these cases, if there are any `available' votes, i.e., votes that were added in previous rounds and their beneficiaries are out of the contest now, we reuse those by adding subsequent preferences to them. For example, if we add $X$ ballots to $C_1$, i.e., [$C_1$], upon elimination of $C_1$, we may reuse (all $X$ or partial) while adding to $C_4$, i.e., updating the added ballots to $[C_1, C_4]$.

We now show the efficacy of the allocation rule by satisfying the two conditions listed in the proof sketch. First, see that if the process converges in only one iteration without creating new slacks (i.e., when (b) is true), there is no way of reaching the structure by spending less, thus satisfying (a). This follows as the slacks are greedily filled with the minimum required votes to satisfy the constraints in each round. If a candidate receives any allocation, it is available for reuse right after the candidate is out so that every added vote is either in use or available until the final round, which already has a deterministic result with only one candidate remaining. 

More formally, the optimality is justified using the induction argument on the number of rounds receiving allocations: for the base case, we assume only one round receives additions. Naturally, the greedy allocation on this round ensures optimality, as the structure fixes all prior round transfers. Next, assume that when our allocation happens over $r$ rounds, the optimality is guaranteed. Considering the $r+1$ case now, the first round (needing additions) can be optimally allocated and the election instance (including the first-round allocation votes) may be reduced by processing its win/elimination result. See that this allocation is independent of the later rounds. After the first round, no stock of `available' votes is generated (the votes given for not losing are still with the candidates, and votes given for winning are just enough to win, not producing any surplus), and the allocation is again determined on the reduced election instance. As the allocation rule itself is independent of the stock of `available' votes, apart from viewing it as a first-preference budget pile, the resulting optimal allocation for $r$ rounds is exactly the same as for $r+1$ rounds on the original instance. Thus, (a) holds.

For (b), we argue that under feasibility, each round needs to be processed exactly once using the specified allocation rule. Since slacks are filled consecutively, the effect of round $j$ additions on round $j+1$ onward is already taken care of. We further show that this cannot create slacks for $<j$ rounds as well. See that in Case-(L), only in-contest candidates are allotted votes in L rounds- If $C_i$ receives any additions in $j$th L round, it implies it had $<Q$ until $j$th round, so it doesn't win prior to it. It also doesn't lose prior to it because no candidates that lose before round $j$ receive any additions post their eliminations, keeping the elimination order intact.  For Case-(W), additions are made to only one candidate $C_s$ at any time. Since $C_s$ cannot win earlier by the allocation rule, no new slacks are generated under feasibility and the algorithm converges in a single iteration, taking care of (b) as well.

\noindent \textbf{Complexity: } Algorithm  \textsc{SmartAllocation} iterates over all rounds, i.e. over the number of candidates $n$, and in each, it checks all in-contest candidates and updates the active votes after transfers. The addition operations are straightforward, making overall $O(n)$ complexity per round. The active votes update has $O(m)$ complexity for $m$ number of unique ballots, as at most all ballots see transfers. Then, for $n$ rounds, the complexity becomes $O(mn+n^2)$, i.e. $O(mn)$.

\end{proof}

\subsection{Efficiently navigating the search space}\label{app: cut_search_space}

\begin{algorithm}
\caption{Reducing Election Instance}
\label{algo:elimination_subfunctions}
\begin{algorithmic}[1]
\STATE \textbf{function} \textsc{ReduceElectionInstance}($L, \{D\}_I$)
\FORALL{orders $v$ and values $value$ in $\{D\}_I$}
    \IF{$value > 0$}
        \STATE $v' = [C_j \in v \text{ if } C_j \notin L$] maintaining the same order as $v$
        \STATE Update $\{D\}_I$: $\{D\}_I[v'] \mathrel{+}= value$
    \ENDIF
\ENDFOR
\end{algorithmic}
\end{algorithm}

\begin{algorithm}
\caption{Computing Strict-Support}
\label{algo:strict_support}
\begin{algorithmic}[1]

\STATE \textbf{function} \textsc{Strict-Support}($L, G, \{D\}_I$)
\STATE Initialize dictionary $S$ with keys from set $L$ and values as zeros.
\FORALL{orders $v$ and values $value$ in $\{D\}_I$}
    \IF{$v[0] \in L$ and $value > 0$}
        \STATE $v' = [C_j \in v \text{ until } C_j \in G$]
        \STATE Update $S$: $S[C_i] \mathrel{+}= value$ if $C_i \in v'$
    \ENDIF
\ENDFOR
\RETURN S
\end{algorithmic}
\end{algorithm}

\begin{algorithm}
\caption{Removal of irrelevant candidates under uncertainty}
\label{algo:candidate-removal}
\begin{algorithmic}[1]
\REQUIRE  $\mathcal{C}$ in increasing order of $V^1$, Dictionary of orders (ballots) and values (votes for those ballots) $\{D\}_I$, Constants $B$, $Q$
\ENSURE Smallest relevant candidate set $\mathcal{C}$

\STATE \textbf{procedure} \textsc{IrrelevantCandidateRemoval}($\mathcal{C}, \{D\}_I, B, Q$)
\STATE $L, S \gets  \{\}, $  \text{dictionary with keys from set $\mathcal{C}$ and values as zeros}
\WHILE{$B+S[C_i]<Q$ for all {$C_i \in L$} \AND $\mathcal{C} \neq \phi$}
    \STATE Shift first  candidate $C_1\in \mathcal{C}$ to set $L$
    \STATE $S \gets \textsc{Strict-Support}(L, \mathcal{C}/L, \{D\}_I$)
    \STATE $S_j^i  \gets \textsc{Strict-Support}(C_j\cup L/C_i, \{\}, \  \{D\}_I) \ \forall C_i, C_j \in L, \mathcal{C}/L $
    \IF{$ B+ S[C_i] < S_j^i[C_j]< Q  \ \ \forall \ C_i, C_j \in L, \mathcal{C}/L$} \label{line: ifconditions}
        \STATE $\{D\}_I \gets \textsc{ReduceElectionInstance}(L, \{D\}_I$)
        \STATE $\mathcal{C}, L \gets \mathcal{C}/L, \{\} $ 
        \STATE Sort $\mathcal{C}$ in increasing order of $V^1$
    \ENDIF
\ENDWHILE
\RETURN $\mathcal{C}, \{D\}_I$
\end{algorithmic}
\end{algorithm}

\candidateremoval*
\begin{proof}
As detailed in the proof sketch, we first prove Lemma \ref{lem: set_reduction}.

\begin{lemma} \label{lem: set_reduction}
    Given all eliminations until the $k$'th round, the structure information until round $k$, i.e., the elimination order detail is irrelevant for the later rounds. 
\end{lemma}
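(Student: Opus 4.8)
The plan is to show that once the set of eliminated candidates through round $k$ is fixed, the vote totals entering round $k+1$ depend only on that set and not on the order in which those candidates were eliminated. The key observation is that, by hypothesis, rounds $1$ through $k$ contain only eliminations and no wins; hence no surplus transfers occur, and every vote transfer is of the simple kind ``when candidate $C$ is eliminated, each of $C$'s current ballots moves to the next candidate on that ballot who is still in the contest.'' Crucially, with only eliminations, there is no fractional weighting: a ballot retains its full unit weight throughout rounds $1,\dots,k$.

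First I would set up the bookkeeping: for a ballot $b$ and a set $S$ of candidates already removed, let $\mathrm{top}(b, S)$ denote the highest-ranked candidate on $b$ not in $S$ (or ``exhausted'' if none exists). The claim reduces to the statement that, after all candidates in $S = \{$the first $k$ eliminated$\}$ have been removed, each ballot $b$ contributes its weight to $\mathrm{top}(b,S)$, and this is manifestly a function of $S$ alone. I would prove this by induction on $k$. The base case $k=0$ is trivial (no one removed; each ballot is with its first choice). For the inductive step, suppose after $k-1$ eliminations forming set $S' = S \setminus \{C_k\}$ each ballot $b$ sits with $\mathrm{top}(b,S')$; then eliminating $C_k$ moves exactly the ballots currently with $C_k$ — i.e., those $b$ with $\mathrm{top}(b,S') = C_k$ — to their next available choice, which is $\mathrm{top}(b, S' \cup \{C_k\}) = \mathrm{top}(b,S)$; all other ballots already satisfy $\mathrm{top}(b,S') = \mathrm{top}(b,S)$ since $C_k$ was not their current top. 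Hence after the $k$-th elimination every ballot is with $\mathrm{top}(b,S)$, completing the induction. Since $\mathrm{top}(b,S)$ depends only on $S$, the vote tally $\tilde V^1_c$ for each remaining $c$ at the start of round $k+1$ depends only on $S$, not on the order of eliminations.

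The main obstacle, and the point that needs care, is the presence of out-of-contest-to-in-contest transfers mentioned in Step 2 of the STV definition — i.e., the possibility that candidates eliminated \emph{before} the stretch under consideration already carried fractional (surplus) weights from an earlier winning round. But the lemma's hypothesis is that \emph{all} rounds up to round $k$ are eliminations, so this cannot arise here; there is simply no prior winner and no fractional weight anywhere in rounds $1,\dots,k$. I would state this explicitly so the reader sees why the argument is clean. A secondary subtlety is exhausted ballots: a ballot may run out of listed candidates partway through, but this is harmless — such a ballot simply contributes to no one from that point on, and whether it is exhausted is again a function of $S$ alone (namely, whether every candidate it lists lies in $S$). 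With these points addressed, the conclusion follows: the round-$(k+1)$-onward dynamics — including which candidate is eliminated or wins next, and all subsequent transfers — are determined entirely by the identity of the set of first-$k$ eliminated candidates, so the internal order of those eliminations is irrelevant.
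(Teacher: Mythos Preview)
Your proposal is correct and takes essentially the same approach as the paper: both arguments rest on the observation that, because rounds $1,\dots,k$ contain only eliminations and hence no fractional surplus weights, each ballot's destination after round $k$ is determined solely by the \emph{set} of removed candidates. Your version is more explicit---the $\mathrm{top}(b,S)$ notation and the induction on $k$ formalize what the paper states in a single sentence---but the underlying idea and its justification are the same.
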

\begin{proof}
    Let $L$ be the set of $k$ candidates that are all eliminated in a specific order. We now shrink the vote base by transferring all first-choice votes of $L$ to their subsequent choices in $\mathcal{C}/L$. Since there is no limit on the number of transfers a vote can make, and there isn't a win so far leading to a nonzero weight on transferable votes, the reduced voter base is independent of the order in which candidates in $L$ got eliminated. All the eliminated candidates occupy the last $k$ positions in the order, i.e., the social choice order $f$. For any complete structure, the constraints for the later rounds involve the exact same variables, aggregated from previous rounds.  Note that the same may not hold if the history of $k$ rounds involves any win, as then the position of the win governs the weight of surplus votes getting transferred further, immediately affecting all future rounds. 
\end{proof}

Following Lemma \ref{lem: set_reduction}, if we remove the set of candidates occupying strictly consecutive lower positions in the social choice order, and run the election for the smaller set, the remaining candidates will appear in the same order as before. This doesn't hold if we remove candidates at random as their removal may have ripple effects originating from the prior round results.  If we can remove irrelevant candidates, the updated election instance may be obtained in Algorithm \ref{algo:elimination_subfunctions}.

With an allowed budget of $B$ ranked-choice votes to be added, we now identify certain conditions on the current voter data so that election instance is reduced by dropping the set of projected eliminations. For that, we lay out a few definitions for counting votes when an election instance is constrained to a group of candidates.

\begin{definition} \label{def: strict_support}
    $\text{Strict-Support}_{L, G}(C_i)$ is the number of votes candidate $C_i$ receives from voters that rank candidates in $L$ the first, excluding votes that rank $C_i$ post any of $G$ candidates. Here, $L$ and $G$ are mutually exclusive sets. Accordingly, $\text{Strict-Support}_{L, \{\}}(C_i)$ becomes the total number of votes candidate $C_i$ receives (at any rank) from voters that rank candidates in $L$ the first.
\end{definition}

Using the above definition, computing Strict-Support may be achieved straightforwardly using \Cref{algo:strict_support}.

Based on the ideas explained in the proof sketch, the candidates' removal may now be obtained using Algorithm \ref{algo:candidate-removal}. We construct a set of candidates that all get eliminated before any win is secured, despite an uncertain addition of $B$ votes. %
As each candidate gets identified for elimination, we gradually update the criterion for such identification. Finally, using Lemma \ref{lem: set_reduction}, the set of candidates and the election instance can be reduced. Note that the droop quota is kept unchanged, despite the reduction. We assume without loss of generality-since any candidates with greater than $Q$ votes occupy the first positions of the social choice order- the election instance is already reduced after transferring any of their surpluses to candidates still in the contest.

\textbf{Functioning of the algorithm:} The algorithm has two components: a \textbf{while} condition for group building + book-keeping of important variables and an \textbf{if} condition for executing elimination. The algorithm terminates (via the while condition) when strict-support of the group under consideration surpasses the quota $Q$. Note that the termination is guaranteed with $L= \mathcal{C}$ as the trivial last iteration. 

    The \textbf{while} condition: We ensure that all candidates in $\mathcal{C}$ have votes less than $Q$, so that no one crosses the quota for winning. We need to ensure that if candidates in $L$ follow any order of elimination, the last candidate (with all gained transfer votes via eliminations) has still fewer votes than the remaining candidates. For this, we employ the concept of strict-support for a group of candidates $L$, against $\mathcal{C}/L$ (the first call to strict-support). %
    It gives the number of votes a candidate can receive if it is the last one to get eliminated from the group. Notably, it doesn't count the number of votes that include transfers via candidates in $\mathcal{C}/L$.
    While the strict-support of all candidates in $L$ (plus $B$) is smaller than $Q$, i.e., none crosses the Droop quota, we add candidates to $L$, even if the \textbf{if} condition is not satisfied.
    
    The \textbf{if} condition: This ensures that the election instance is reduced only when all candidates in the $L$ have [strict-support + $B$] less than [the first choice votes + any votes through transfers] for all candidates $i\in \mathcal{C}/L$. To capture the transfers to candidates outside $L$, we again use strict-support. $S_j^i[C_j]$ equals the first choice votes $C_j$ has + the amount of votes transferred to $C_j$ from all of $L$ except $i$, i.e., when $i$ is the last candidate to be eliminated in $L$. The group building continues until we are able to reach this reduction condition, or if the algorithm is terminated. After removal, we update the first-choice votes of existing candidates, i.e., $V_i^1$, where first-choice votes of the eliminated candidates are transferred to their subsequent available choices, if any. The algorithm restarts again after any such removal.

  \noindent  \textbf{Complexity: } First, see that the complexity of the function \textsc{Strict-Support} is $O(mn)$ as it iterates over all unique ballot types $m$ to find the Strict-Support of candidates in $L$. The complexity of the function \textsc{ReduceElectionInstance} is similarly $O(m)$ as this updates the values of at most all ballot types. Now, consider the complexity of \textsc{IrrelevantCandidateRemoval}: the \textbf{while} loop is run at most $n$ times, computing \textsc{Strict-Support} $n+n^2$ times. Additionally, whenever the \textbf{if} loop is run, it calls \textsc{ReduceElectionInstance} and sorts $\mathcal{C}$ with $n\log n $ complexity. Overall, the complexity of \textsc{IrrelevantCandidateRemoval} is then $O(n((n+n^2)\times mn + m + n \log n))$, i.e. $O(mn^4)$.
\end{proof}
\substtheorem*
\begin{proof}
We show that the sequence information is extracted from a given election instance by finding bounds on both the number of total wins and the number of losses in the row.
We use Algorithm \ref{algorithm:upper_bound_W} to find the bounds as follows:
\begin{enumerate}
    \item Upper bound on the number of $W's$ ($U_W$): We first find the set of candidates $C_W$ who have $\text{Strict-Support}_{\mathcal{C}, \{\}}(C_i)+B$ greater than $Q$. It equals the number of candidates $n$ only when all ballots are fully ranked. The upper bound on the number of wins, $U_W$ is then equal to [$B$ + the total number of unique ballots in $C_W$ $/Q$]. The number of unique ballots may be straightforwardly computed in a loop over $\text{Strict-Support}_{\mathcal{C}, \{\}}(C_W)$.  Note that this number is also naturally bounded by $k$.  
    \item Lower bound of the number of consecutive initial losses ($i_L$): After counting $U_W$, we estimate the range where $W's$ may be placed in sequences, again under uncertainty of $B$ additional votes. We first define the definitely losing candidates $C_L$ as those with $B+\text{Strict-Support}_{\mathcal{C}, \{\}}(C_i)$ less than at least $k$ candidates' first choice votes, i.e., $V_i^1$ for top $k$. We then sort $C_L$ in decreasing order of their transferable votes, i.e., $t_i = \text{Strict-Support}_{C_i,\{\}}(\mathcal{C}/C_L)$: These are the votes candidate $C_i$ eventually transfers to a candidate outside $C_L$. Finally, we move the transferable votes $t_i$ from $C_L$ to the candidate $C_{top}$ with the highest first-choice votes $V_{top}^1$, while $\sum t_i + V_{top}^i< Q$ remains satisfied. The number of such transfers gives a lower bound on the number of consecutive initial losses.
\end{enumerate}
\begin{algorithm}
\caption{Reducing the space of sequences}
\label{algorithm:upper_bound_W}
\begin{algorithmic}[1]
\REQUIRE Set of candidates $\mathcal{C}$, $B, \ Q$, $\textsc{Strict-Support}$
\ENSURE  Upper Bound on the number of $W$'s, i.e., $U_W$; Lower bound on the number of consecutive initial losses $i_L$
    \STATE \textbf{function} \textsc{Predict-Wins}($\mathcal{C}, B, Q$)
    \STATE $C_W \gets \{C_I \subset \mathcal{C} \textbf{ if } \text{Strict-Support}_{\mathcal{C}, \{\}}(C_i) + B > Q\}$
    \STATE $\bar{C_W}, \text{number-unique-ballots}_{C_W} \gets \{\}, 0$   
    
    \FOR{each candidate $C_j$ in $C_W$}
        \STATE $\text{number-unique-ballots}_{C_W} \gets \text{number-unique-ballots}_{C_W} + \text{Strict-Support}_{\mathcal{C}, \bar{C_W}}(C_j)$
        \STATE $\bar{C_W} \gets \bar{C_W} + C_j$ 
    \ENDFOR
 
    \STATE \textbf{return} $U_W  = \min (k, [(B+ \text{number-unique-ballots}_{C_W}) / Q])$

\vspace{0.35cm}
     \STATE \textbf{function} \textsc{Predict-Losses}($\mathcal{C}, B, Q$)
    \STATE $\text{[first-choice] } \gets  \{V_i^1\} \text{ sorted in decreasing order} $ 
    \STATE $C_L \gets \{C_I \subset \mathcal{C} \textbf{ if }  \text{Strict-Support}_{\mathcal{C}, \{\}}(C_i) + B < \text{[first-choice ]}[k] \} $
    \STATE $T \gets \{t_i = \text{Strict-Support}_{C_i,\{\}}(\mathcal{C}/C_L), \ C_i \in C_L \} $ sorted in decreasing order
    \STATE Initialize $i_L=1$
    \WHILE{$\sum T[:i_L] + \text{[first-choice ]}[1]<Q$}
    \STATE $i_L=i_L+1$
    \ENDWHILE
\RETURN  $i_L$      
\end{algorithmic}
\end{algorithm}
These together result in a smaller space that covers all feasible sequences given an uncertain addition of $B$ votes.

\noindent
\textbf{Complexity:} Function \textsc{Predict-Wins} first calls  \textsc{Strict-Support}
$n$ times, and then runs a straightforward loop over at most $n$ candidates. This amounts to $O(mn^2+n)$, i.e. $O(mn^2)$ complexity. The function \textsc{Predict-Losses} calls \textsc{Strict-Support} at most $2n$ times and later runs a simple loop for computing the bound on losses. This amounts to the complexity of $O(mn^2)$.

\end{proof}
\section{2024 US Republican Primaries Case Study: Details}\label{app: casestudies}
The FairVote and WPA Intelligence ranked-choice poll sampled likely Republican presidential primary voters nationwide after the second Republican presidential debate. The results are reproduced in Table \ref{tab:gop_primary}.  This survey encompassed a broad national scope, with responses collected from across the country. To ensure national representativeness, the data was weighted appropriately for each state, resulting in some fractional votes. Our analysis adheres to the same methodology and settings as those detailed in \cite{OtisLaverty2023}.

\begin{table}[ht]
\centering
\resizebox{\textwidth}{!}{%
\begin{tabular}{|l|l|l|l|l|l|l|l|l|l|l|l|l|l|}
\hline
 & Candidates & 1 & 2 & 3 & 4 & 5 & 6 & 7 & 8 & 9 & 10 & 11 & 12 \\ \hline
T & Trump & 47.6\% & 47.6\% & 47.6\% & 47.6\% & 47.7\% & 47.7\% & 47.7\% & 47.9\% & 48.4\% & 48.7\% & 54.5\% & 62.3\% \\ \hline
H & Haley & 7.7\% & 7.9\% & 7.9\% & 8.1\% & 8.4\% & 8.5\% & 8.9\% & 10.3\% & 11.9\% & 20.3\% & 24.8\% & 37.7\% \\ \hline
R & Ramaswamy & 12.6\% & 12.6\% & 12.6\% & 12.8\% & 12.9\% & 12.9\% & 12.9\% & 13.9\% & 14.8\% & 15.8\% & 20.7\% &  \\ \hline
D & DeSantis & 12.7\% & 12.7\% & 12.7\% & 12.7\% & 12.8\% & 12.9\% & 13.1\% & 13.2\% & 14.1\% & 15.2\% &  &  \\ \hline
C & Christie & 8.5\% & 8.5\% & 8.5\% & 8.5\% & 8.5\% & 9\% & 9.2\% & 9.9\% & 10.8\% &  &  &  \\ \hline
P & Pence & 4.4\% & 4.4\% & 4.4\% & 4.4\% & 4.5\% & 4.6\% & 4.7\% & 4.9\% &  &  &  &  \\ \hline
Sc & Scott & 2.8\% & 2.8\% & 2.9\% & 2.9\% & 2.9\% & 2.9\% & 3.5\% &  &  &  &  &  \\ \hline
Hr & Hurd & 1\% & 1\% & 1\% & 1.2\% & 1.2\% & 1.5\% &  &  &  &  &  &  \\ \hline
Ht & Hutchinson & 1.1\% & 1.1\% & 1.1\% & 1.1\% & 1.1\% &  &  &  &  &  &  &  \\ \hline
Y & Youngkin & 0.8\% & 0.8\% & 0.8\% & 0.8\% &  &  &  &  &  &  &  &  \\ \hline
B & Burgum & 0.4\% & 0.4\% & 0.5\% &  &  &  &  &  &  &  &  &  \\ \hline
E & Elder & 0.3\% & 0.3\% &  &  &  &  &  &  &  &  &  &  \\ \hline
Su & Suarez & 0.1\% &  &  &  &  &  &  &  &  &  &  &  \\ \hline
\end{tabular}
}
\caption{RCV on primary- WPA intelligence and FairVote poll %
}
\label{tab:gop_primary}
\end{table}

\subsection{Strategies under perfect information} \label{app: perfect_info}
We first describe the STV structure framework that we adopt to optimize over the Republican primary data, which involves 13 candidates. This results in (13!) possible social choice orders. The poll data corresponds to the order of [T,H,R,D,C,P,Sc,Hr,Ht,Y,B,E,Su]. Our focus is on devising strategies for the top two candidates, but since the surplus votes of the winner are not transferred to the remaining candidates, we maintain the Droop quota at 800.  This practically allows only one sequence, [$L,L,L,..L,W$], where the winner and loser of the final round become the first and second top candidates, respectively. 

To achieve any specific social choice order, we may now set up corresponding constraints using \Cref{thm: combi_data}. We then solve the optimization problem efficiently via Theorem \ref{thm: poly_efforts_B}. Theoretically, finding optimal additions for both winning the election and placing in the top two would require optimizing for all 13! total orders in a general context. However, we apply methods from Section \ref{sec: cut_search_space} to answer more efficiently, thereby significantly reducing computational complexity.

\subsubsection{Winning the election (Table \ref{tab: votestowin}):}
To find the strategic additions needed for winning this election, we first analyze how close the candidates can get to candidate A.  For this, we use the concept of $\text{Strict-Support}_{L, G}(C_i)$, which according to Definition \ref{def: strict_support}, is the number of votes candidate $C_i$ receives from voters that rank candidates in $L$ the first, excluding votes that rank $C_i$ post any of $G$ candidates. Let $\mathcal{C} = \{T,H,R,D,C,P,Sc,Hr,Ht,Y,B,E,Su\}$ be the set of all candidates in the election. We now calculate the \text{Strict-Support} with $L = \{\mathcal{C}/T\}$ and $G = \{T\}$, for $C_i$ equal to all candidates except $T$, as given in Table \ref{tab: strict_support_primary}. For each candidate $C_i$, this gives the percentage of votes $C_i$ has when it reaches the last round alongside T, i.e., this is the maximum support $C_i$ can have opposite T.

\begin{table}[ht]
    \centering
\begin{tabular}{|c|c|c|c|c|c|c|c|c|c|c|c|c|}
\hline
H & R & D & C & P & Sc & Hr & Ht & Y & B & E & Su \\
\hline
37.8 & 41.3 & 43.8 & 32.0 & 34.7 & 36.1 & 29.6 & 30.4 & 32.0 & 31.1 & 30.2 & 29.3 \\
\hline
\end{tabular}
    \caption{Strict-Support of candidates H to Su, against Trump (in percentages)}
    \label{tab: strict_support_primary}
\end{table}

Comparing with the primary election data in Table \ref{tab:gop_primary}, see that the first choice support of T ($47.6\%$) is higher than the Strict-Support of all candidates in $\mathcal{C}/T$, against T. Then, for any elimination order of H to Su and thus for any candidate $C_i$, see that the following holds: For candidate $C_i$ to win, a minimum of $\text{Strict-Support}_{\mathcal{C}/\{C_i\}, C_i}(T)$ (i.e., T's votes when only $C_i$ is not eliminated) - $\text{Strict-Support}_{\mathcal{C}/\{T\}, T}(C_i)$ (i.e., $C_i$'s votes when only T is not eliminated) votes must be added to $C_i$. Further, if this minimum (selfish) addition can also guarantee that $C_i$ doesn't get eliminated until the last round, then this becomes the optimal addition for $C_i$ to win. 

We then calculate the minimum required additions for candidates T to C, which Table \ref{tab: votestowin} presents as `Strategic additions'. These are the differences in Strict-Support for T and $C_i$, presented as the percentages in column `Head to Head against T'. For each of these additions made to $C_i$ in $\{H, R, D, C\}$, we may refer to Table \ref{tab:gop_primary} to confirm that no $C_i$ gets eliminated until the last round against T.  This concludes the optimality of the results in Table \ref{tab: votestowin}.

\subsubsection{Being in the top two (Table \ref{tab: top2strategies_5}):}

To solve this optimally, we use a two-stage approach. First, we remove the bottom 8 candidates from the election instance and transfer their votes to the remaining candidates, who will always occupy the first 5 positions in the social choice order (Algorithm \ref{algo:candidate-removal}). This establishes that none of these candidates can be a part of the top 5 even if up to $5\%$ unknown ranked ballots are added. Second, we iterate over all possible orders where (T, $C_i$) are the top two candidates for $C_i \in \{H, R, D, C\}$, and find optimal additions to reach each of those using Theorem \ref{thm: poly_efforts_B}. For each pair (T,$C_i$), we choose the order that corresponds to the minimum addition, as presented in Table \ref{tab: top2strategies_5}.   

\begin{table}[ht]
    \centering
    \begin{tabular}{|c|c|c|c|c|c|c|c|c|}
\hline
& P & Sc & Hr & Ht & Y & B & E & Su \\
\hline
$S[C_i]$ & 4.92 & 4.35 & 2.26 & 2.27 & 1.89 & 1.56 & 1.08 & 1.07 \\
\hline
$\min_{C_j} S_j^i[C_j]$ & 9.92 & 9.91 & 10.19 & 9.72 & 10.32 & 10.51 & 10.42 & 10.70 \\
\hline
Difference & 5.00 & 5.56 & 7.93 & 7.45 & 8.43 & 8.95 & 9.34 & 9.63 \\
\hline
\end{tabular}
    \caption{Strict-Support of P to Su, against T to C (in percentages)}
    \label{tab: strict-supportFtoM}
\end{table}

For candidates' removal, we show that the \textbf{if} condition in Algorithm \ref{algo:candidate-removal} (line \ref{line: ifconditions}) is satisfied for the budget of $5\%$ for $L=\{P,Sc,Hr,Ht,Y,B,E,Su\}$. The \textbf{while} condition is trivially satisfied by large $Q$. Note that $L$ is the set of bottom 8 candidates in Table \ref{tab:gop_primary}, as built by this condition. The Strict-Support of candidates in $L$ ($S[C_i]$ in line \ref{line: ifconditions}) and minimum support of $G = \{T, H, R, D, C\}$ against $L$ ($\min_{C_j} S_j^i[C_j]$ in line \ref{line: ifconditions}) is given in Table \ref{tab: strict-supportFtoM}. The minimum difference of $5\%$ shows that Algorithm 2 can successfully remove the bottom 8 candidates for up to $5\%$ additions.

\subsection{Robustness under uncertainty}
We created 1000 bootstrap datasets using the original poll data, each containing the same number of voters. To analyze each dataset, we use the methodology of applying \Cref{thm: remove_irrelevant_candidates} as in \Cref{app: perfect_info}. In addition, if the removal condition isn't satisfied for the bottom group $L$, we also check if the budget is enough to save the candidate with the highest Strict-Support from $L$, from elimination up to the next round, thereby checking if the budget can affect two eliminations simultaneously.  For up to 5$\%$ additions,  this together removes 7 or 8 irrelevant candidates for $98\%$ of the samples, successfully allowing us to find optimal strategies for all applicable candidates. Note that the set of irrelevant candidates may be different for each sample, as candidates P and Sc also find strategies in some instances (Table \ref{tab: bootstrap_summary_table}). We also analyze the strategies for up to $3\%$ and $4\%$ additions, which reaffirms that H is the most stable second top and that R and D get weaker with decreasing allowed budget.
\subsubsection{Analysis for up to $4\%$ strategic additions}
For up to $4\%$ additions, Theorem \ref{thm: remove_irrelevant_candidates} removed irrelevant candidates in $99.7\%$ of the samples, giving us the strategy analysis in Table \ref{tab: strategy_4_1} and \ref{tab: strategy_4_2}.
\begin{table}[ht]
\begin{tabular}{|l|l|l|l|l|l|l|}
\hline
Candidate & H & R & D & C & P & Sc \\ \hline
Top 2 frequency in bootstrap samples (\%) & 68.51 & 23.27 & 5.12 & 1.40 & - & - \\ \hline
Top 2 frequency under 5\% strategic additions (\%) & 99.10 & 84.25 & 60.28 & 31.90 & 0.50 & 0.10 \\ \hline\hline
Average additions in strategy (\%) & 1.20 & 1.80 & 2.31 & 2.53 & 3.72 & 4.00 \\ \hline
\end{tabular}
\caption{For strategic additions up to $4\%$}
\label{tab: strategy_4_1}
\end{table}

\begin{table}[ht]
\centering
\setlength{\tabcolsep}{5pt} %
\renewcommand{\arraystretch}{1.2} %
\begin{tabular}{|c|c||c|c||c|c||c|c||c|c||c|c|}
\hline
\multicolumn{2}{|c||}{Haley} & \multicolumn{2}{c||}{Ramaswamy} & \multicolumn{2}{c||}{DeSantis} & \multicolumn{2}{c||}{Christie} & \multicolumn{2}{c||}{Pence} & \multicolumn{2}{c|}{Scott} \\
\hline
Type & \SI{}{\percent} & Type & \SI{}{\percent} & Type & \SI{}{\percent} & Type & \SI{}{\percent} & Type & \SI{}{\percent} & Type & \SI{}{\percent} \\
\hline
H   & \SI{90.82}{} & C   & \SI{43.42}{} & DC   & \SI{48.00}{} & C   & \SI{62.50}{} & P   & \SI{60.0}{} & Sc  & \SI{100.0}{} \\
HR  & \SI{2.95}{}  & R   & \SI{26.81}{} &  D  & \SI{25.64}{} & DC  & \SI{27.96}{} & CP  & \SI{20.0}{} &     &              \\
HD  & \SI{2.30}{}  & RC  & \SI{19.41}{} & C   & \SI{20.00}{} & CR  & \SI{4.93}{}  & DP  & \SI{20.0}{} &     &              \\
HC  & \SI{2.30}{}  & DC  & \SI{5.76}{}  & RC  & \SI{1.82}{}  & D   & \SI{2.63}{}  &     &              &     &              \\
D   & \SI{1.64}{}  & RDC & \SI{2.80}{}  & RDC & \SI{1.64}{} & RDC & \SI{0.99}{}  &     &              &     &              \\
    &              & RD  & \SI{0.82}{}  & RD  & \SI{0.55}{} & HC  & \SI{0.66}{}  &     &              &     &              \\
    &              & D   & \SI{0.66}{}  & HD  & \SI{1.27}{} & DCSc & \SI{0.33}{}  &     &              &     &              \\
    &              & HR  & \SI{0.33}{}  & DCP & \SI{0.36}{} &     &              &     &              &     &              \\
    &              &     &             & DSc & \SI{0.18}{} &     &              &     &              &     &              \\
\hline
\end{tabular}
\caption{Strategy categorization under uncertainty, for up to $4\%$ strategic additions}
\label{tab: strategy_4_2}
\end{table}

\subsubsection{Analysis for up to $3\%$ strategic additions}
For up to $3\%$ additions, Theorem \ref{thm: remove_irrelevant_candidates} removed irrelevant candidates in $99.9\%$ of the samples, giving us the strategy analysis in Table \ref{tab: strategy_3_1} and \ref{tab: strategy_3_2}.
\begin{table}[ht]
\begin{tabular}{|l|l|l|l|l|}
\hline
Candidate & H & R & D & C \\ \hline
Top 2 frequency in bootstrap samples (\%) & 68.37 & 23.22 & 5.11 & 1.40 \\ \hline
Top 2 frequency under 5\% strategic additions (\%) & 97.998 & 73.173 & 41.341 & 20.12 \\ \hline\hline
Average additions in strategy (\%) & 1.11 & 1.44 & 1.71 & 1.96 \\ \hline
\end{tabular}
\caption{For strategic additions up to 3\%}
\label{tab: strategy_3_1}
\end{table}

\begin{table}[ht]
\centering
\setlength{\tabcolsep}{5pt} %
\renewcommand{\arraystretch}{1.2} %
\begin{tabular}{|c|c||c|c||c|c||c|c|}
\hline
\multicolumn{2}{|c||}{Haley} & \multicolumn{2}{c||}{Ramaswamy} & \multicolumn{2}{c||}{DeSantis} & \multicolumn{2}{c||}{Christie} \\
\hline
Type & \SI{}{\percent} & Type & \SI{}{\percent} & Type & \SI{}{\percent} & Type & \SI{}{\percent} \\
\hline
H   & \SI{90.88}{} & C   & \SI{48.50}{} & DC   & \SI{38.12}{} & C   & \SI{60.96}{} \\
HR  & \SI{3.04}{}  & R   & \SI{27.25}{} & D   & \SI{28.45}{} & DC  & \SI{27.27}{} \\
HD  & \SI{2.36}{}  & RC  & \SI{16.03}{} & C   & \SI{27.90}{} & CR  & \SI{5.88}{}  \\
HC  & \SI{2.03}{}  & DC  & \SI{5.21}{}  & RC  & \SI{1.93}{}  & D   & \SI{4.28}{}  \\
D   & \SI{1.69}{}  & RDC & \SI{1.40}{}  & RDC & \SI{1.38}{}  & RDC & \SI{1.07}{}  \\
    &              & D   & \SI{0.80}{}  & HD  & \SI{1.10}{}  & HC  & \SI{0.53}{}  \\
    &              & RD  & \SI{0.60}{}  & R   & \SI{0.83}{}  &     &              \\
    &              & HR  & \SI{0.20}{}  & RD  & \SI{0.28}{}  &     &              \\
\hline
\end{tabular}
\caption{Strategy categorization under uncertainty, for up to $3\%$ strategic additions}
\label{tab: strategy_3_2}
\end{table}

\subsubsection{Efficacy of optimal strategies under imperfect information (under selfish modifications)}
We show additional analysis in Table \ref{tab:strategy_modifications} where the strategy for each candidate $X$ is modified by changing it from [altruistic to C] to [altruistic to C, selfish], i.e., [C, X] by ranking two preferences in the added ballots. This strictly increases the strategist's winning probability, without decreasing C's. %

\begin{table}[ht]
\centering
\begin{tabular}{|l|c|c|c|c|c|}
\hline
Strategy ($\%$) & T, H & T, R & T, D & T, C & D, T \\ \hline
Original (No Strategy) & 68.8 & 24.4 & 5.3 & 1.5 & - \\ \hline
Ramaswamy's Strategy (+[C,R]=1.25$\%$) & 43.4 & \textbf{43.1} & 8.5 & 5.0 & - \\ \hline
DeSantis's Strategy (+[C,D]=2$\%$, +D=0.87$\%$) & 31.7 & 31.6 & \textbf{27.5} & 9.1 & 0.1 \\ \hline
Christie's Strategy (+C=3.75$\%$, +[D,C]=0.87$\%$) & 10.9 & 33.4 & 23.8 & \textbf{31.9} & - \\ \hline
\end{tabular}
\caption{Optimal strategies from the poll data + selfish modifications, under imperfect information}
\label{tab:strategy_modifications}
\end{table}

\section{Theoretical Analysis of Strategies: Details} \label{app: sec3details}
\ranklosingc*
\begin{proof}
First, see that under Case-(A), we may encounter a structure where an addition $X$ made to prevent the elimination of a candidate (or a series of such additions) may later lead to that candidate's win. In this case, the total votes at the time of winning may be greater than the Droop quota, allowing transfers of surplus (including the additions) votes to candidates still in the election. Only in this case, do we see optimal addition strategies that include $[\dots, W, \dots ]$, i.e., `altruistic to winner(s)' where other candidates are ranked after the winning candidate. Note that when the candidate needs an addition of votes to win a round, i.e., to reach the Droop quota, there is no surplus to transfer. 

Now, when Case (A) doesn't hold, consider adding $b$ ballots with preferences $[C_1,C_2,.\dots,C_n]$, including at least one winning candidate $C_k$. Let this be part of the optimal addition that achieves the desired set of winning candidates. If $C_k$ is the first winning candidate in this order and is followed by a losing candidate $C_{k+1}$, transferring votes from $C_k$ to $C_{k+1}$ becomes inefficient due to surplus transfer. A more effective strategy would be to use fewer ballots that rank up to $C_k$, securing their win without extra votes. Formally,, divide $b$ votes of type $[C_1,C_2,.\dots,C_n]$ into $b_1$ votes ranking up to $C_k$ and $b_2$ votes ranking from $C_1$ to $C_{k-1}$ and then $C_{k+1}$ onward, with $b_1+b_2$ being less than $b$. This approach allows $C_k$ to win with the exact needed votes, making $b_2$ votes fully effective without reduced value from the surplus transfer. 

If $C_{k+1}$ is also a winner, split the votes into $b_1$ ranking up to $C_k$, $b_2$ up to $C_{k+1}$ (excluding $C_k$), and $b_3$ excluding both $C_k$ and $C_{k+1}$. This might alter the order if $C_{k+1}$ wins before $C_k$, but as all winners in each round eventually make it to the final set (using Lemma \ref{lem: round-election}), the coalition still achieves its goal. This principle is applicable regardless of the sequence of winning candidates. %

Thus, for a coalition aiming for victory, except under Case (A), the most efficient method is to add ballots that primarily rank losing candidates, at least until the second-last choice. 
\end{proof}
\subsection*{Discussion: Optimality of [$\dots, W, \dots$] strategies}
We also discuss the optimality of addition `altruistic to winner(s)' strategies that consider the winning candidate at non-terminal positions. Although Case (A) enables them (which itself is specific), they might still not be practically optimal.

We see that adding $X$ $[\dots, W (C_1), \dots, W (C_2)]$ votes with even two winning candidates (say $C_1$, $C_2$) is optimal only under Case (A) and further, it helps both non-uniformly because of fractional transfers post the first win. Here, $C_1$ gets all $X$ votes, while $C_2$ receives at most $X$ additions: If $C_1$ wins with at least $(Q+X)$ votes in the winning round (should only happen via transfers under Case (A)), then it transfers strategic additions of magnitude $X$. However, $C_2$ receives $X$ only when all $C_1$ votes rank $C_2$ next (due to fractional transfer rules), in which case $C_2$ might want $C_1$ to lose and transfer all such. Thus, in this coalition, $C_1$ `gives away' its additions post-winning without receiving any from $C_2$ (as $C_2$ is still in the contest), while $C_2$ will most benefit if $C_1$ loses, getting all its votes. Thus, if $C_2$ wants to get $X$ votes, it will better benefit from adding $X$ [$L,\dots, L, W$] votes, and letting $C_1$ lose.

Overall, the strategy `altruistic to winner(s)' will be optimal when both $C_1$ and $C_2$ are altruistic to each other and form a coalition to add votes together, but $C_1$ doesn't gain and $C_2$ might strictly prefer bailing out to improve.
\end{document}